\def\arcr{\@arraycr}
\definecolor{shadecolor}{gray}{1.00}
\definecolor{ddarkgray}{gray}{0.5}
\definecolor{darkgray}{gray}{0.30}
\definecolor{light-gray}{gray}{0.87}
\newcommand{\whitebox}[1]{\colorbox{white}{#1}}
\newcommand{\graybox}[1]{\colorbox{light-gray}{#1}}
\newcommand{\gbm}[1]{\graybox{${#1}$}}
\newcommand{\wbm}[1]{\whitebox{${#1}$}}
\newcommand{\naive}{na\"{i}ve\xspace}
\newcommand{\angled}[1]{\left\langle{#1}\right\rangle}
\newcommand{\paren}[1]{\left({#1}\right)}
\newcommand{\etc}{\emph{etc}\xspace}
\newcommand{\ie}{\emph{i.e.}\xspace}
\newcommand{\eg}{\emph{e.g.}\xspace}
\newcommand{\Eg}{\emph{E.g.}\xspace}
\newcommand{\dom}[1]{\mathsf{dom}\paren{#1}}
\newcommand{\cf}{\textit{cf.}\xspace}
\newcommand{\wrt}{\emph{wrt.}\xspace}
\newcommand{\Iff}{\emph{iff}\xspace}
\newcommand{\eqdef}{\triangleq}
\newcommand{\cached}[1]{#1}
\newcommand{\denot}[1]{\llbracket{#1}\rrbracket}
\newcommand{\ifext}[2]{\ifdefined\extflag{#1}\else{#2}\fi}
\theoremstyle{plain} 
{\unskip\nobreak\hskip 1em plus 1fil\nobreak$\square$
\parfillskip=0pt%
\endtrivlist}
\newcommand{\tname}[1]{\textsc{#1}\xspace}
\newcommand{\tool}{\tname{SuSLik}}
\newcommand{\logicfull}{Synthetic Separation Logic\xspace}
\newcommand{\logic}{SSL\xspace}
\newcommand{\exCount}{22\xspace}
\newcommand{\timeout}{120\xspace}
\definecolor{pblue}{rgb}{0.13,0.13,1}
\definecolor{pgreen}{rgb}{0,0.5,0}
\definecolor{pred}{rgb}{0.9,0,0}
\definecolor{pgrey}{rgb}{0.46,0.45,0.48}
\definecolor{ckeyword}{HTML}{7F0055}
\definecolor{ccomment}{HTML}{3F7F5F}
\definecolor{cnumber}{HTML}{2A0099}
\lstdefinelanguage{SynLang}{
  keywords={new, let, if, else, null, return, while},
  ndkeywords={bool, int, void, loc},
  mathescape=true,
  showspaces=false,
  showtabs=false,
  breaklines=true,
  showstringspaces=false,
  breakatwhitespace=true,
  lineskip=-0.9pt,
  morecomment=[l]{//}, % l is for line comment
  morecomment=[s]{/*}{*/}, % s is for start and end delimiter
  basewidth={0.54em, 0.4em},%
  basicstyle=\footnotesize\ttfamily,
  keywordstyle={\color{ACMPurple}\ttfamily\bfseries},
  ndkeywordstyle={\color{pblue}\ttfamily\bfseries},
  commentstyle={\color{ccomment}\itshape},
  numbers=none,
}
\lstdefinestyle{numbers}
{
  numbers=left,
  numberstyle=\scriptsize\sf,
  xleftmargin=15pt
}
\newcommand{\code}[1]{\lstinline[language=SynLang,basicstyle=\small\ttfamily,mathescape=true]{#1}}
\newcommand{\tth}{^{\text{th}}\xspace}
\newcommand{\pow}[1]{\wp({#1})}
\newcommand{\set}[1]{\left\{{#1}\right\}}
\newcommand{\many}[1]{\overline{#1}}
\newcommand{\head}[1]{\emph{#1}}
\protected\def\ccell#1#{%
  \kern-\fboxsep
  \@ccell{#1}%
}
\def\@ccell#1#2#3{%
  \colorbox#1{#2}{#3}%
  \kern-\fboxsep
}
\newcommand{\pts}{\mapsto}
\newcommand{\True}{\mathsf{true}}
\newcommand{\bl}[1]{{\color{ACMDarkBlue}{#1}}}
\newcommand{\asn}[1]{{\bl{\set{#1}}}}
\newcommand{\spec}[3]{\asn{#1}~{#2}~\asn{#3}}
\newcommand{\pred}[1]{\mathsf{#1}}
\newcommand{\entails}[2]{{#1}\vdash{#2}}
\newcommand{\trans}[3]{\left.\bl{#1} \!\leadsto\! \bl{#2} \right| #3}
\newcommand{\tent}[2]{{#1}\leadsto {#2}}
\newcommand{\teng}[2]{\bl{#1}\leadsto \bl{#2}}
\newcommand{\Frameable}{\mathsf{frameable}}
\newcommand{\lev}{\ell}
\newcommand{\eqTags}{=^{\lev}}
\newcommand{\subst}{\sigma}
\newcommand{\dotcup}{\ensuremath{\mathaccent\cdot\cup}}
\newcommand{\union}{\mathbin{\dotcup}}
\newcommand{\seteq}{=}%{=_{i}}
\newcommand{\slf}[1]{\mathcal{#1}}
\newcommand{\slP}{\slf{P}}
\newcommand{\slQ}{\slf{Q}}
\newcommand{\interp}{\slf{I}}
\newcommand{\later}{\rhd}
\newcommand{\lseg}{\pred{lseg}}
\newcommand{\lsegt}{\pred{lseg2}}
\newcommand{\nxt}{\mathit{nxt}}
\newcommand{\sep}{\ast}
\newcommand{\Sep}{\text{\scalebox{1.8}{$\sep$}}}
\newcommand{\emp}{\mathsf{emp}}
\newcommand{\block}[2]{\left[{#1}, {#2}\right]}
\newcommand{\rulename}[1]{\textsc{#1}}
\newcommand{\framer}{\rulename{Frame}\xspace}
\newcommand{\hunir}{\rulename{UnifyHeaps}\xspace}
\newcommand{\hyp}{\rulename{UnifyPure}\xspace}
\newcommand{\starpar}{\rulename{StarPartial}\xspace}
\newcommand{\incons}{\rulename{Inconsistency}\xspace}
\newcommand{\empr}{\rulename{Emp}\xspace}
\newcommand{\nilvalr}{\rulename{NullNotLVal}\xspace}
\newcommand{\postinr}{\rulename{PostInconsistent}\xspace}
\newcommand{\postinl}{\rulename{PostInvalid}\xspace}
\newcommand{\readr}{\rulename{Read}\xspace}
\newcommand{\writer}{\rulename{Write}\xspace}
\newcommand{\sleft}{\rulename{SubstLeft}\xspace}
\newcommand{\sright}{\rulename{SubstRight}\xspace}
\newcommand{\heapunrf}{\rulename{UnreachHeap}\xspace}
\newcommand{\pickr}{\rulename{Pick}\xspace}
\newcommand{\allocr}{\rulename{Alloc}\xspace}
\newcommand{\freer}{\rulename{Free}\xspace}
\newcommand{\applyr}{\rulename{Call}\xspace}
\newcommand{\abductr}{\rulename{AbduceCall}\xspace}
\newcommand{\closer}{\rulename{Close}\xspace}
\newcommand{\invkind}{\rulename{Open}\xspace}
\newcommand{\mkind}{\rulename{Induction}\xspace}
\newcommand{\branchr}{\rulename{Branch}\xspace}
\newcommand{\ev}[1]{\mathsf{EV}\left({#1}\right)}
\newcommand{\vars}[1]{\mathsf{Vars}\left({#1}\right)}
\newcommand{\gv}[1]{\mathsf{GV}\left({#1}\right)}
\newcommand{\env}{\Gamma}
\newcommand{\ctx}{\Sigma}
\newcommand{\fctx}{\Delta}
\newcommand{\indp}{\mathcal{D}}
\newcommand{\fun}{\mathcal{F}}
\newcommand{\funs}{\mathsf{functions}}
\newcommand{\pclause}[3]{\angled{{#1}, \set{{#2}, {#3}}}}
\newcommand{\sel}{\xi}
\newcommand{\depth}{\mathsf{MaxUnfold}}
\newcommand{\bump}[2]{\lceil{#1}\rceil^{#2}}
\newcommand{\kw}[1]{{\tt\bf{\color{ACMPurple}{#1}}}~}
\newcommand{\typ}[1]{{\tt\bf{\color{blue}{#1}}}~}
\newcommand{\sym}[1]{{\tt\bf{{#1}}}}
\newcommand{\Letz}{\kw{let}}
\newcommand{\Ifz}{\kw{if}}
\newcommand{\malloc}[1]{{\tt{malloc}}(#1)}
\newcommand{\free}[1]{{\tt{free}}(#1)}
\newcommand{\Elsez}{\kw{else}}
\newcommand{\deref}[1]{*{#1}}
\newcommand{\off}{\iota}
\newcommand{\ccd}[1]{\text{\code{#1}}}
\newcommand{\coh}[1]{\mathsf{coh}\paren{#1}}
\newcommand{\prog}{c}
\newcommand{\progs}{\mathit{cs}}
\newcommand{\cskip}{{\tt{skip}}}
\newcommand{\cerror}{{\tt{error}}}
\newcommand{\cmagic}{{\tt{magic}}}
\newcommand{\opsem}[3]{{#1}; {#2} \rightsquigarrow {#3}}
\newcommand{\Opsem}{\rightsquigarrow}
\newcommand{\Opsemt}{\rightsquigarrow^{*}}
\newcommand{\opsemt}[3]{{#1}; {#2} \rightsquigarrow^{*} {#3}}
\newcommand{\satsl}[2]{{#1} \vDash^{\ctx}_{\interp} {#2}}
\newcommand{\tarr}{$~\Rightarrow$}
\newcommand{\goal}{\mathcal{G}}
\newcommand{\allrules}{\mathit{AllRules}}
\newcommand{\goals}{\mathsf{Goal}}
\newcommand{\goalz}{\mathit{goals}}
\newcommand{\deriv}{\mathcal{S}}
\newcommand{\derivs}{\mathsf{Deriv}}
\newcommand{\Rule}{\mathcal{R}}
\newcommand{\Rules}{\mathsf{Rule}}
\newcommand{\kont}{\mathcal{K}}
\newcommand{\konts}{\mathsf{Cont}}
\newcommand{\partialfn}{\rightharpoonup}
\newcommand{\rulez}{\mathit{rs}}
\newcommand{\trulez}{\mathit{rules}}
\newcommand{\Fail}{\mathsf{Fail}}
\newcommand{\sders}{\mathit{subderivs}}
\newcommand{\ders}{\mathit{derivs}}
\newcommand{\tryRules}{\mathsf{withRules}}
\newcommand{\tryAlts}{\mathsf{tryAlts}}
\newcommand{\solve}{\mathsf{solveSubgoals}}
\newcommand{\synt}{\mathsf{synthesize}}
\newcommand{\isInvert}{\mathit{isInvert}}
\newcommand{\nextRules}{\mathit{nextRules}}
\newcommand{\filterDervs}{\mathit{filterComm}}
\newcommand{\isEmpty}{\mathit{isEmpty}}
\newcommand{\pickRules}{\mathit{pickRules}}
\setlist[itemize]{leftmargin=*}
\setlist[enumerate]{leftmargin=*}
\begin{document}

% Title portion. Note the short title for running heads

\title[Structuring the Synthesis of Heap-Manipulating
Programs]{Structuring the Synthesis of Heap-Manipulating Programs}
\ifext{\subtitle{Extended Version}}{}

\author{Nadia Polikarpova}
\affiliation{%
  \institution{University of California, San Diego}
  \country{USA}
}

\author{Ilya Sergey}
\affiliation{%
  \institution{University College London}
  \country{UK}
}
 
\begin{abstract}
This paper describes a deductive approach to synthesizing imperative
programs with pointers from declarative specifications expressed in
Separation Logic.
Our synthesis algorithm takes as input a pair of assertions---a pre- and a postcondition---%
which describe two states of the symbolic heap,
and derives a program that transforms one state into the other,
guided by the shape of the heap.
% Our approach treats logical program specifications, given in the form
% or pre- and postconditions with a \emph{pure} and a \emph{spatial}
% part, as proof goals, and provides an algorithm for rule-directed
% program construction based on the shape of the symbolic heap footprint
% a desired program manipulates.
%
%, as described by the given pre/post.
%
The program synthesis algorithm rests on the novel framework of
\logicfull (\logic), which generalises the classical notion of heap
entailment $\entails{\slP}{\slQ}$ % from Separation Logic 
to incorporate a
possibility of transforming a heap satisfying an assertion $\slP$ 
into a heap satisfying an assertion $\slQ$.
% so that the result would satisfy an assertion $\slQ$.
%
A synthesized program represents a proof term for a \emph{transforming
  entailment} statement $\teng{\slP}{\slQ}$, and the synthesis
  procedure corresponds to a proof search.
The derived programs are, thus, correct by construction,
in the sense that they satisfy the ascribed pre/postconditions, and
are accompanied by complete proof derivations, which can be checked
independently.

We have implemented a proof search engine for \logic in a form the
program synthesizer called \tool. For efficiency, the engine exploits
% properties of \logic specifications, aggressively relying on a version
% of the Frame Rule and commutativity of separating conjunction to prune
% the search space.
properties of \logic rules, 
such as invertibility and commutativity of rule applications on separate heaps,
to prune the space of derivations it has to consider.
We explain and showcase the use of \logic on characteristic examples,
describe the design of \tool, and report on our experience of using it
to synthesize a series of benchmark programs manipulating
heap-based linked data structures.

\end{abstract}

%\keywords{Program Synthesis, Separation Logic, Proof Systems, Type Theory}

\maketitle

\section{Introduction}
\label{sec:intro}

Consider the task of implementing a procedure \code{swap(x, y)}, 
which swaps the values stored in two distinct heap locations, 
\code{x} and \code{y}.
% Imagine that we need to implement, in an imperative C-like language
% with pointers, a procedure \code{swap(x, y)} that would swap the
% values stored in two distinct heap locations, \code{x} and \code{y},
% correspondingly. 
%
The desired effect of \code{swap} can be concisely captured via
pre/postconditions expressed in Separation Logic (SL)---a Hoare-style
program logic for specifying and verifying stateful programs with
pointers~\cite{Reynolds:LICS02,OHearn-al:CSL01}:
\begin{equation}
  \label{eq:swap}
  \asn{x \pts a \sep y \pts b}~\text{\code{void swap(loc x, loc y)}}~\asn{x \pts b \sep y \pts a}
\end{equation}

% \begin{lstlisting}[basicstyle=\small\ttfamily,xleftmargin=.2\textwidth]
% (*@$\asn{x \pts a \sep y \pts b}$@*) void swap(loc x, loc y) (*@$\asn{x \pts b \sep y \pts a}$@*)
% \end{lstlisting}

% Remarkably, this specification tells \emph{what} \code{swap}
% should do without saying \emph{how} it should do it. 
This specification is \emph{declarative}: 
it describes \emph{what} the heap should look like before and after 
executing \code{swap} without saying \emph{how} to get from one to
the other. 
Specifically, it
states that the program takes as input two pointers, \code{x} and
\code{y}, and runs in a heap where \code{x} points
to an unspecified value \code{a}, and \code{y} points to \code{b}.
% Specifically, it
% states that a program should take two pointer variables, \code{x} and
% \code{y} as its formals, and run in a heap, in which \code{x} points
% to an unspecified value \code{a}, and \code{y} points to \code{b}.
Both \code{a} and \code{b} here are \emph{logical} (\emph{ghost})
variables, whose scope captures both pre- and
postcondition~\cite{Kleymann:FAC99}.
%
% Coincidentally, the abstract nature of the specification, parametric
% in the types and values of \code{a} and \code{b}, restricts a number
% of operations a potential implementation of \code{swap} could do given
% the formals and the initial heap in order to obtain the desired result
% (\ie, have them swapped).
% The abstract nature of the specification, parametric
% in the types and values of \code{a} and \code{b}, restricts the set of 
% operations available to an implementation of \code{swap}.
Because these variables are ghosts, 
we cannot use them directly to update the values in \code{x} and \code{y}
as prescribed by the postcondition;
the program must first ``materialize'' them
by readings them into local variables, \code{a2} and \code{b2} 
(\cf lines 2--3 of the code on the right).
\begin{wrapfigure}[7]{r}{0.37\textwidth}
\vspace{-15pt}
\begin{center}
\begin{lstlisting}[basicstyle=\small\ttfamily,style=numbers]
void swap(loc x, loc y) {
  let a2 = *x;
  let b2 = *y;
  *y = a2;
  *x = b2;
}
\end{lstlisting}
\end{center}
\end{wrapfigure}
%
% In fact, in an imperative language with only operations being
% reads/writes to pointers, the only thing an
% implementation can do given the initial heap is to read the values
% stored in \code{x} and \code{y} into local variables, \code{a2} and
% \code{b2} (\cf lines 2--3 of the code on the right%
% \footnote{Throughout the paper we use a minimalistic C-like language,
  % where \texttt{\color{pblue}{\textbf{loc}}} denotes untyped pointers
  % (similar to \texttt{\color{pblue} {\textbf{void*}}} in~C), and
  % \texttt{\textbf{\color{ckeyword}{let}}} introduces an
  % \emph{immutable} local dynamically-typed variable.}).
In our minimalistic C-like language,
\texttt{\color{pblue}{\textbf{loc}}} denotes untyped pointers,
% (similar to \texttt{\color{pblue} {\textbf{void*}}} in~C), 
and \texttt{\textbf{\color{ckeyword}{let}}} introduces a local dynamically-typed variable.
Unlike in C, both formals and locals are \emph{immutable}
(the only mutation is allowed on the heap).

As a result of the two reads, the ghost variables in the post-condition can now be
substituted with equal program-level variables:
$\bl{x \pts \ccd{b2} \sep y \pts \ccd{a2}}$. 
This updated postcondition can be realized by the two writes on lines 4--5, 
which conclude our implementation, so the whole program can now be 
verified against the specification~\eqref{eq:swap}.
%
% In fact, the only thing an implementation could do given the initial
% heap is to read from \code{x} and \code{y}, thus, revealing the value
% they point to and assigning to program-level bindings \code{a2} and
% \code{b2} (\cf lines 2--3 of the code on the right).
% %
% By doing this, the implementation would ``update'' its knowledge on
% the state of the heap and realize that the values required by the
% postcondition (\eg, $\bl{x \pts \ccd{b2}}$) are already captured by
% \code{a2} and \code{b2}, as well as by the precondition (\eg,
% $\bl{y \pts \ccd{a2}}$), and what is left to do is to provide the
% matching values for the pointers in the postcondition.
% %
% This enables the two writes on lines 4--5 that conclude our
% implementation, so the whole program can now be verified against the
% specification~\eqref{eq:swap}.

Having done this exercise in program derivation, let us now
observe that the SL specification has been giving us guidance on
what effectful commands (\eg, reads and writes) should be emitted
next.
% ; the goal was adapting as larger ``part'' of the program has been
% generated until we could establish that the heap obtained as a result
% of the manipulations satisfies the ``adapted'' postcondition.
%
In other words, the \emph{synthesis} of \code{swap} has been governed
by the given specification in the same way the \emph{proof search} is guided
by a goal in ordinary logics.
In this work, we make this connection explicit and employ it for
efficiently synthesizing imperative programs from SL pre- and
postconditions.

% \paragraph{Motivation and key ideas}
\paragraph{Motivation}
The goal of this work is to advance the
state of the art in synthesizing provably correct heap-manipulating programs 
from \emph{declarative functional} specifications. 
%
% The main \emph{practical motivation} for our work is to advance the
% state of the art in synthesizing provably correct, recursive,
% low-level, heap-manipulating programs \emph{from functional
  % specifications} alone, without requiring any structural hints from
% the programmers.
%
Fully-automated program synthesis has been an active area of research
in the past years, but recent techniques mostly targeted simple
DSLs~\cite{Gulwani-al:PLDI11,Polozov-Gulwani:OOPSLA15,Le-Gulwani:PLDI14}
or purely functional
languages~\cite{Polikarpova-al:PLDI16,Kneuss-al:OOPSLA13,Osera-Zdancewic:PLDI15,Feser-al:PLDI15}.
The primary reason is that those computational models impose strong
\emph{structural constraints} on the space of programs, either by means
of restricted syntax or through a strong type system. These structural
constraints enable the synthesizer to discard many candidate terms
\emph{a-priori}, before constructing the whole program, leading to
efficient synthesis.

Low-level heap-manipulating programs in general-purpose languages like
C or Rust lack inherent structural constraints \wrt control- and
data-flow, and as a result the research in synthesizing such programs
has been limited to cases when such constraints can be imposed by the
programmer.
From the few existing approaches we are aware of,
\tname{Simpl}~\cite{So-Oh:SAS17} and
\tname{ImpSynth}~\cite{Qiu-SolarLezama:OOPSLA17} require the
programmer to provide rather substantial \emph{sketches} of the control-flow
structure, which help restrict the search space; \tname{Jennisys}
by~\citet{Leino-Milicevic:OOPSLA2012} can only handle functions that
construct and read from data structures, but do not allow for
destructive heap updates, which are necessary for, \eg, deallocating,
modifying, or copying a linked data structure.

% Our theoretical insight is that the structural constraints missing
% from the imperative language itself, are usually present in a
% combination of a \emph{proof system} and a \emph{program logic} used
% to reason about data structures and in that language.
% %
% While the former allows one to deduce whether a state satisfying
% assertion $\slP$ also satisfies $\slQ$, by means of proving a state
% entailment $\entails{\slP}{\slQ}$,\footnote{Automated proving of formula
%   entailments for BI proof systems~\cite{OHearn-Pym:BSL99}, such as
%   Separation Logic, is a standalone research problem by itself with a
%   long
%   history~\cite{Nguyen-al:VMCAI07,Brotherston-al:CADE11,Iosif-al:ATVA14}.}
%
% and hence need no further manipulations, the latter makes it possible
% to verify, symbolically and compositionally, whether a program $\prog$
% does indeed modify a $\slP$-satisfying heap to the one that satisfies
% $\slQ$, by proving a Hoare-style specification $\spec{\slP}{\prog}{\slQ}$.

% Therefore we can achieve efficient synthesis if instead of searching
% in the space of programs, we search in the space of their correctness
% proofs within a suitable proof system.

\paragraph{Key Ideas}
Our theoretical insight is that the structural constraints missing
from an imperative language itself, 
% are usually present in 
% a combination of a \emph{proof system} and a \emph{program logic} used
% to reason about data structures in that language.
can be recovered from the \emph{program logic} 
used to reason about programs in that language.
We observe that synthesis of heap-manipulating programs 
can be formulated as a proof search in a generalized proof system 
that combines entailment with Hoare-style reasoning for 
\emph{unknown programs}.
% We make an observation that one can formulate a problem of
% synthesising heap-manipulating programs as a proof search for a
% deductive proof system, which combines entailment and Hoare-style
% reasoning for unknown programs. 
% More specifically, in a generalized
% proof system of this kind, 
In this generalized proof system,
a statement $\teng{\slP}{\slQ}$ means that
there \emph{exists} a program $\prog$, such that the Hoare triple
$\spec{\slP}{\prog}{\slQ}$ holds; 
the \emph{witness} program $\prog$ serves as a proof term for the statement.
%
% The generalized proof system, thus, combines the traits of
% both verification and synthesis frameworks, with a number of practical
% restrictions.
In order to be useful, the system must satisfy a number of practical 
restrictions. 
First, it should be expressive enough to (automatically) verify the
programs with non-trivial heap manipulation. %, \eg, memory management.
Second, it should be restrictive enough to make the synthesis problem
tractable.
Finally, it must ensure the termination of the (possibly recursive)
synthesized programs, to avoid vacuous proofs of partial correctness.

In this paper we design such a generalized proof system based on the
symbolic heap fragment of \emph{malloc/free} Separation
Logic\footnote{This nomenclature is due to \citet{Cao-al:APLAS17},
  who provide it as a rigorous alternative to the folklore notion of
  \emph{classical}~SL.} with inductive predicates, 
to which we will further refer as just
Separation Logic or SL~\cite{Reynolds:LICS02,OHearn-al:TOPLAS09}.
%
% with well-founded inductive predicates, AUFLIA
% logic\footnote{\url{http://smtlib.cs.uiowa.edu/logics-all.shtml}} for
% \emph{pure} parts of the SL assertions, and bounded structural
% recursion, and demonstrate it to be a good fit for the job.
%
Separation Logic has been immensely successful at specifying and
verifying many kinds of heap-manipulating programs, both interactively and
automatically~\cite{Berdine-al:CAV11,Chlipala:PLDI11,Nanevski-al:POPL10,Appel-al:BOOK14,Chen-al:SOSP15,Chargueraud:ICFP10,Piskac-al:CAV14,Chin-al:SCP12,Distefano-Parkinson:OOPSLA08},
and is employed in modern symbolic execution 
tools~\cite{Rowe-Brotherston:CPP17,Berdine-al:APLAS05}. 
We demonstrate how to harness all this power for program synthesis,
devise the corresponding search procedure and apply it to synthesize a
number of non-trivial programs 
% manipulating with linked data structures.
that manipulate linked data structures.
%
% We also show how to structure the SL assertions to make the program
% synthesis tractable.
%
% by sacrificing completeness of the search procedure in the presence of
% user-defined inductive predicates
%
% Finally, we provide search strategies exploiting the laws of SL for
% making the synthesis machinery efficient for realistic examples.
Finally, we show how to exploit laws of SL and properties of our proof system
to prune the search space and
make the synthesis machinery efficient for realistic examples.

\paragraph{Contributions}
\label{sec:contributions}
The central theoretical contribution of the paper is 
% the description of
\emph{\logicfull} (\logic): a system of deductive synthesis rules,
which prescribe how to decompose specifications for complex programs
into specifications for simpler programs, while synthesizing the
corresponding computations compositionally.
In essence, \logic is a proof system for a new \emph{transformation}
judgment $\trans{\bl{\slP}}{\bl{\slQ}}{\prog}$ (reads as ``the assertion $\slP$
transforms into $\slQ$ via a program~$\prog$''), which unifies
SL entailment $\entails{\slP}{\slQ}$ and verification
$\spec{\slP}{\prog}{\slQ}$, with the former expressible as
$\trans{\bl{\slP}}{\bl{\slQ}}{\text{\code{skip}}}$.
%
% \logic is a close relative to malloc/free Separation Logics employed
% in modern symbolic execution
% tools~\cite{Rowe-Brotherston:CPP17,Berdine-al:APLAS05}. 
% What makes
% \logic particularly suitable for program synthesis is a novel
% principled dichotomy of its inference rules, which we partition into
% several \emph{phases}. Each phase (a set of rules) is designed to
% tackle a particular aspect of synthesis with symbolic heaps:
% normalization, unification, handling ``flat'' heaps, inductive
% predicates, pure synthesis, and search optimizations.
% %
% This dichotomy is enabling for (a) providing a principled methodology
% to turn \logic into a synthesis tool, and (b) streamlining the formal
% argument towards \logic soundness (\wrt the standard notion of Hoare
% triple validity) and relative completeness---on both of which we
% report in this paper.
% %
% The formulation of the proof system is parametric \wrt to the logic of
% pure assertions, so our approach can be instantiated with advanced
% theories for synthesizing pure
% expressions~\cite{Srivastava-al:POPL10,Kuncak-al:PLDI10}.
%% TODO: bring this up in the overview

The central practical contribution is the design and implementation of
\tool---a deductive synthesizer for heap-manipulating programs, based
on \logic.
%
% \footnote{The link to the tool is available in
%   \ifext{\autoref{sec:link}.}{Appendix A of the de-anonymized
%     supplementary material.} }
%
% \tool takes as its input a library of inductive predicates and a
% (typically empty) list of auxiliary function specifications, and a SL
% specification of a function, and returns a (possibly recursive, but
% loop-free) program in a minimalistic C-like language that
% \emph{provably} satisfies this specification.
%
\tool takes as its input a library of inductive predicates, 
a (typically empty) list of auxiliary function specifications, 
and an SL specification of the function to be synthesized. 
It returns a---possibly recursive, but loop-free---program (in a minimalistic C-like language), 
which \emph{provably} satisfies the given specification.

Our evaluation shows that \tool can synthesize all structurally-recursive
% loop-free
benchmarks from previous work on heap-based synthesis~\cite{Qiu-SolarLezama:OOPSLA17},
\emph{without any sketches} and in most cases much faster.
%
%\is{Removed "all" for now, as we didn't do the BST.}
%
To the best of our knowledge, it is also \emph{the first synthesizer}
to automatically discover the implementations of copying linked lists
and trees, and flattening a tree to a list.
% as well as operations on packed data
% representations and data structure conversions from the persistent
% memory domain.

The essence of \tool's synthesis algorithm is a backtracking search in
the space of SSL derivations.
% , each of which can emit multiple sub-goals.
%
% We observe that derivation of programs that manipulate inductively
% defined linked data structures with indirection (and thus require
% extra ``massaging'' of the heap to enable recursive calls) can be only
% enabled by incorporating the novel \emph{call abduction} mechanism (in
% the spirit of \citet{Calcagno-al:JACM11}), encoded as a dedicated rule
% of \logic, which we also implemented and employed in our tool.
%
Even though the structural constraints (\ie, the shape of the heap)
embodied in the synthesis rules already prune the search space
significantly (as shown by our \code{swap} example), a \naive
backtracking search is still impractical, especially in the presence
of inductive heap predicates.
To eliminate redundant backtracking, we develop several principled
optimizations. In particular, we draw inspiration from \emph{focusing
  proof search}~\cite{Pfenning:LN04} to identify \emph{invertible}
synthesis rules that do not require backtracking, and exploit the
\emph{frame rule} of SL, observing that the order of rule applications
is irrelevant whenever their subderivations have disjoint footprints.

% we draw inspiration from
% \emph{focusing proof search}~\cite{Pfenning:LN04,Liang-Miller:TCS09}
% and adapt it for the \logic derivations extending it with a novel
% technique we call \emph{heap focusing} \todo{this name and the
  % intuition around might need to be revised}, which is based on the
% observation that the order of rule applications is irrelevant whenever
% they have \emph{disjoint} footprints.

\paragraph{Paper outline}
In the remainder of the paper we give an overview of the reasoning
principles of \logic, describe its rules and the meta-theory, outline
the design and implementation of our synthesis tool, present the
optimizations and extensions of the basic search algorithm, and
report on the evaluation of the approach on a set of case studies
involving various linked structures, concluding with a discussion of
limitations and a comparison to the related work.

\section{Deductive Synthesis from Separation Logic Specifications}
\label{sec:overview}

In Separation Logic, assertions capture the program state, represented
by a symbolic heap.
An SL assertion (ranged over by symbols $\slP$ and $\slQ$ in the
remainder of the paper) is customarily represented as a pair
$\set{\phi; P}$ of a \emph{pure} part $\phi$ and a \emph{spatial}
part~$P$.
The \emph{pure} part (ranged over by $\phi$, $\psi$, $\xi$, and
$\chi$) is a quantifier-free boolean formula, which describes the
constraints over symbolic values (represented by variables $x$, $y$,
\etc)
% , as well as constants).
%
The \emph{spatial} part (denoted $P$, $Q$, and $R$) is represented by
a collection of primitive heap assertions describing disjoint symbolic
heaps (\emph{heaplets}), conjoined by the \emph{separating
  conjunction} operation~$\sep$, which is commutative and
associative~\cite{Reynolds:LICS02}.
For example, in the assertion
$\asn{a \neq b ; x \pts a \sep y \pts b}$
the spatial part describes two disjoint memory cells
that store symbolic values $a$ and $b$,
while the pure part states that these values are distinct.

Our development is agnostic to the exact logic of pure formulae, as
long as it is decidable and supports standard Boolean connectives and
equality.\footnote{We require decidability for making the synthesis
  problem tractable, but it is not required for soundness of the
  logic.}
Our implementation uses the quantifier-free logic of arrays,
uninterpreted functions, and linear integer arithmetic,
which is efficiently decidable by SMT solvers,
and sufficient to express all examples in this paper.
%% Somewhere need to mention: decidability of validity and quantifier elimination affects our completeness!

% In our setup, the pure part of a SL assertion is assumed to be in the
% conjunctive normal form (CNF) and can involve the standard connectives
% ($\wedge$, $\neg$), boolean constants ($\True$, $\False$), decidable
% equality ($=$), as well as atoms, expressions, and statements about
% them, specific for particular theories (\eg,~natural numbers and
% finite sets).
% %
% To keep the presentation focused, we will mostly avoid mentioning
% specific theories, striving to present \logicfull in a ``pure
% theory-agnostic'' way.
% %
% That said, our practical evaluation (described in
% \autoref{sec:evaluation}) is carried out by instantiating \logic to
% make use, in its pure fragment, of the Presburger arithmetic and Array
% theory implementations (for encoding finite sets), as provided by the
% state-of-the-art SMT solvers.
%
%~\cite{deMoura-Bjorner:TACAS08,Barrett-al:CAV11}.

To begin with our demonstration, the only kinds of heaplets we are
going to consider are the \emph{empty heap} assertion $\emp$ and
\emph{points-to} assertions of the form $\angled{x, \off} \pts e$,
where $x$ is a symbolic variable or pointer constant (\eg, 0), $\off$
is a non-negative integer offset (necessary to represent records and
arrays)
% (necessary for reasoning about pointer arithmetic), 
and $e$ is a symbolic value, stored in a memory cell, addressed via a
value of $(x + \off)$.\footnote{Further in the paper, we will extend
  the language of heap assertions to support memory blocks (arrays)
  with explicit memory management, as well as user-defined inductive
  predicates. }
In most cases, the offset is $0$, so we will abbreviate heap
assertions $\angled{x, 0} \pts e$ as $x \pts e$.

Our programming component (to be presented formally in
\autoref{sec:logic}) is a simple imperative language, supporting
reading from pointer variables to (immutable) local variables
(\code{let x = *y}), storing values into pointers (\code{*y = x}),
conditionals, recursive calls, and pure expressions.
The language has no \code{return} statement;
instead, a function stores its result into an explicitly passed pointer.

% \paragraph{\logic naming conversions}
\subsection{Specifications for Synthesis}
A synthesis \emph{goal} is a triple
$\env; \tent{\bl{\slP}}{\bl{\slQ}}$, where $\env$ is an
\emph{environment}, \ie, a set of 
% \emph{formals} (immutable program-level variables)
immutable program variables, $\slP$ is a \emph{precondition} (pre), and
$\slQ$ is a \emph{postcondition} (post).
Solving a synthesis goal means to find a program $c$ and a derivation
of the \logic assertion $\env; \trans{\bl{\slP}}{\bl{\slQ}}{\prog}$.
To avoid clutter, we employ the following naming conventions: 
% Here and further in the paper, in order to avoid cluttering the
% presentation with too many administrative bindings, we will employ the
% following naming conventions:
%
\begin{itemize}[leftmargin=1.5em]

\item[(a)] the symbols $\slP$, $\phi$, and $P$ 
  % will be used to 
  refer to the goal's \emph{precondition}, its pure, and spatial part;
  % (or their
  % subcomponents, which are clear from the context), of the goal
  % of a corresponding rule;

\item[(b)] similarly, the symbols $\slQ$, $\psi$, and $Q$ 
  % will 
  refer to the goal's \emph{postcondition}, its pure and spatial part; 
  % (or their
  % subcomponents), of the rule's goal.

\item[(c)] whenever the pure part of a SL assertion is $\True$
  ($\top$), it is omitted from the presentation.
  % we will be omitting it from the presentation.
\end{itemize}
\noindent
In addition to those conventions, we will use the following macros to
express the \emph{scope} and the \emph{quantification} over variables
of a goal ${\env}; \tent{\asn{\slP}}{\asn{\slQ}}$.
First, by $\vars{A}$ we will denote \emph{all} variables occurring in
$A$, which might be an assertion, a logical formula, or a program.
\emph{Ghosts} (universally-quantified logical variables), whose scope
is both the pre and the post, are defined as $\gv{\env, \slP, \slQ} =
\vars{\slP} \setminus \env$.
\emph{Goal existentials} are defined as
$\ev{\env, \slP, \slQ} = \vars{\slQ} \setminus \paren{\env \cup
  \vars{\slP}}$.
For instance, taking $\env = \set{x}$,
$\slP = \asn{x \neq y ; x \pts y}$, $\slQ = \asn{x \pts z}$, we have
the ghosts $\gv{\env, \slP, \slQ} = \set{y}$, the existentials
$\ev{\env, \slP, \slQ} = \set{z}$, and the pure part of the post
$\slQ$ is implicitly $\True$.

\subsection{Basic Inference Rules}
\label{sec:basic}
To get an intuition on how to represent program synthesis 
as a proof derivation in \logic, consider
\autoref{fig:basic-rules}, which shows four basic rules of the
logic, targeted to synthesize programs 
% manipulating with ``flat'' heaps, keeping the 
%% Nadia: Flat heap is not defined
with constant memory footprint (remember that we use $\slP$ and $\slQ$
for the entire pre/post in a rule's conclusion!).

The \empr rule is applicable when both pre and post's
spatial parts are empty. 
It requires that no existentials remains in the goal, 
and the pure pre implies the pure post
(per our assumptions on the logic, the validity of this implication is
decidable, so we can check it algorithmically).
\empr has no synthesis subgoals among the premises (making it a
\emph{terminal} rule),
and no computational effect: 
its witness program is simply \code{skip}.
% it is typically the last
% rule successfully applied during the program derivation, and its
% resulting program is simply \code{skip}, as proving a pure implication
% bears no ``operational'' meaning, and, hence, emits no effectful
% subprogram.

\begin{figure}[t]
\setlength{\belowcaptionskip}{-11pt}
\centering
\begin{tabular}{cc}
\begin{minipage}{0.4\linewidth}
\centering
{\scriptsize{
\!\!\!
\begin{mathpar}
\inferrule[\empr]
%
% -------------------------------- [emp]
% Γ ; {φ ; emp} ; {emp} ---> skip
{
\ev{\env, \slP, \slQ} = \emptyset
\\
\phi \Rightarrow \psi
}
{
\env; \trans{\asn{\phi; \emp}}{\asn{\psi; \emp}}{\cskip}  
}
\\
\inferrule[\readr]
%
%   y is fresh   Γ,y ; [y/A]{φ ; x -> A * P} ; [y/A]{ψ ; Q} ---> S
% ---------------------------------------------------------------- [read]
%        Γ ; {φ ; x.f -> A * P} ; {ψ ; Q} ---> let y := *x.f ; S
%
{
a \in \gv{\env, \slP, \slQ}
\\
y \notin \vars{\env, \slP, \slQ}
\\
\env \cup \set{y}; \trans{[y/a]\asn{\phi; \angled{x, \off} \pts a \sep P}}{[y/a]\asn{\slQ}}{\prog} 
}
{
\env; \trans{\asn{\phi; \angled{x, \off} \pts a \sep P}}
         {\asn{\slQ}}{\Letz y = \deref{(x + \off)}; \prog}  
}
\\
\inferrule[\writer]
%
  % Γ ; {φ ; x.f -> l' * P} ; {ψ ; x.f -> l' * Q} ---> S   GV(l) = GV(l') = Ø
  % ------------------------------------------------------------------------- [write]
  % Γ ; {φ ; x.f -> l * P} ; {ψ ; x.f -> l' * Q} ---> *x.f := l' ; S
%
{
\vars{e} \subseteq \env
\\
\env; \trans{{\asn{\phi; \angled{x, \off} \pts e \sep P}}}
         {\asn{\psi; \angled{x, \off} \pts e \sep Q}} {\prog}
}
{
\left.
\begin{array}{r@{\ }c@{\ }l}
\env; & \asn{\phi; \angled{x, \off} \pts e' \sep P} & \leadsto
\arcr[2pt]
& \asn{\psi; \angled{x, \off} \pts e \sep Q}
\end{array}
\right| {\deref{(x + \off)} = e; \prog}  
}
\\
\inferrule[\framer]
    %        (GV(Post) / GV(Pre)) * GV(R) = Ø
    %       Γ ; {φ ; P} ; {ψ ; Q} ---> S
    % ---------------------------------------- [*-intro]
    %   Γ ; {φ ; P * R} ; {ψ ; Q * R} ---> S
{
\ev{\env, \slP, \slQ} \cap \vars{R} = \emptyset
% \\
% \eqframe{R}{R'}
% \\
% \frameable{R'}
\\
\env; \trans{\asn{\phi; P}}{\asn{\psi; Q}}{\prog}  
}
{
\env; \trans{\asn{\phi; P \sep R}}{\asn{\psi; Q \sep R}}{\prog}  
}
\end{mathpar}
}}
\caption{Simplified basic rules of \logic.}
\label{fig:basic-rules}
\end{minipage}
&
\begin{minipage}{0.5\linewidth}
\centering
\vspace{0pt}
{\tiny{
\begin{prooftree}
\Infer0
[\empr with {$\prog_7$ = \tt{skip}}]
{
\begin{array}{c}
\set{x, y, {\tt{a2}}, {\tt{b2}}}; \tent{\asn{\emp}}{\asn{\emp}}   
  \\[3pt]
\prog_6 = \prog_7
\end{array}
}
\Infer1
[
\framer
]
{
\begin{array}{c}
\set{x, y, {\tt{a2}}, {\tt{b2}}}; 
\trans{\asn{\gbm{y \pts {\tt{a2}}}}}{\asn{\gbm{y \pts \tt{a2}}}}{\prog_6}   
  \\[3pt]
\prog_5 = \deref{y} = {\tt{a2}}; \prog_6
\end{array}
}
\Infer1
[
\writer
]
{
\begin{array}{c}
\set{x, y, {\tt{a2}}, {\tt{b2}}}; \trans{\asn{\gbm{y \pts {\tt{b2}}}}}{\asn{\gbm{y \pts \tt{a2}}}}{\prog_5}   
  \\[3pt]
\prog_4 = \prog_5
\end{array}
}
\Infer1
[
\framer
]
{
\begin{array}{c}
\set{x, y, {\tt{a2}}, {\tt{b2}}}; 
\trans{\asn{\gbm{x \pts \tt{b2}} \sep y \pts
  {\tt{b2}}}}{\asn{\gbm{x \pts {\tt{b2}}} \sep y \pts \tt{a2}}}{\prog_4}   
  \\[3pt]
\prog_3 = \deref{x} = {\tt{b2}}; \prog_4
\end{array}
}
\Infer1
[
\writer
]
{
\begin{array}{c}
\set{x, y, {\tt{a2}}, {\tt{b2}}}; 
\trans{\asn{\gbm{x \pts \tt{a2}} \sep y \pts
  {\tt{b2}}}}{\asn{\gbm{x \pts {\tt{b2}}} \sep y \pts \tt{a2}}}{\prog_3}   
  \\[3pt]
\prog_2 = \Letz {\tt{b2}} = \deref{y}; \prog_3
\end{array}
}
\Infer1
[
\readr
]
{
\begin{array}{c}
\set{x, y, \tt{a2}}; 
  \trans{\asn{x \pts \tt{a2} \sep \gbm{y \pts  b}}}{\asn{x \pts b \sep y \pts \tt{a2}}}{\prog_2}
  \\[3pt]
  \prog_1 = \Letz {\tt{a2}} = \deref{x}; \prog_2
\end{array}
}
\Infer1
[
\readr
]
{
\set{x, y}; \trans{\asn{\gbm{x \pts a} \sep y \pts b}}{\asn{x \pts b \sep y \pts a}}{\prog_1}
}
\end{prooftree}
}}
\caption{Derivation of \code{swap(x,y)} as $c_1$.}
\label{fig:swap}
\end{minipage}  
\end{tabular}
\end{figure}

The \readr rule turns a ghost variable $a$ into a program
variable $y$ (fresh in the original goal). That is, the newly assigned
immutable program variable $y$ is added to the environment of the
sub-goal, and all occurrences of $a$ are substituted by $y$ in both
the pre and post. As a side-effect, the rule prepends the read
statement $\Letz~y = \deref(x + \off)$ to the remainder of the program
to be synthesized.

The rule \writer allows for writing a symbolic expression $e$ 
% to the pointer if
into a memory cell, provided 
all $e$'s variables are program-level. 
% emitting the corresponding
% write-command in a synthesized program component. 
It is customary for
this rule to be followed by an application of {\framer}, which is
\logic's version of Separation Logic's \emph{frame rule}.
Here, we show a version of the rule, which is a bit weaker
than what's in the full version of \logic, and will be generalized
later.
This rule enables ``framing out'' a shared sub-heap $R$ from the pre and post,
as long as this does not create new existential variables.
Notice, that unlike the classical SL's \framer rule
by~\citet{OHearn-al:CSL01}, our version does not require a side
condition saying that $R$ must not contain program variables that are
modified by the program [to be synthesized]: by removing $R$ from the
subgoal, we ensure that the residual program will not be able to
access any pointers from $R$, because it will be synthesized in a
symbolic footprint \emph{disjoint} from~$R$, and all local variables in
\logic language are \emph{immutable}.

% The rule \framer requires that a sub-heap satisfying $R'$ in the
% goal's post being to be ``framed out'' contained \emph{no} existential
% variables, would be equivalent to a matching heap $R$ in the
% precondition, and would be \textsf{frameable}.
%
%% Nadia: in the new system it's ALWAYS syntactic equlity
% For now, let us think of frame equivalence $\eqframe{R}{R'}$ as of
% \emph{syntactic equality}, thus making this a standard SL \rulename{Frame} rule.
%
% The precise definition of ``frameability'' is up to the implementer of
% the analysis, and allow for a freedom to employ different instances of
% this rule at different stages of the synthesis. For now, let us assume
% frameability to be defined as
% $\frameable{R} \eqdef R = \angled{x, \off} \pts e$ for some $x$,
% $\off$, and $e$.
%

%%%%%%%%%%%%%%%%%%%%%%%%%%%%%%%%%%%%%%%%%%%%%%%%%%%%%%
%%%%%%%%%%%%%%%%%%%%%%%%%%%%%%%%%%%%%%%%%%%%%%%%%%%%%%
%%%%%%%%%%%%%%%%%%%%%%%%%%%%%%%%%%%%%%%%%%%%%%%%%%%%%%

\paragraph{Synthesizing \texttt{swap}}
\label{sec:swap}

Armed with the basic \logic inference rules from
\autoref{fig:basic-rules}, let us revisit our initial example:
the \code{swap} function~\eqref{eq:swap}.
\autoref{fig:swap} shows the derivation of the program using the
rules, and should be read bottom-up. For convenience, we name each
subgoal's witness program, starting from $c_1$ (which corresponds to
\code{swap}'s body). 
% This way one could trace the derivation, as it
% would be in a continuation-passing style, with each rule's
% ``synthetic'' component applied to the results synthesized by means of
% the sub-goals. 
Furthermore, each intermediate sub-goal highlights via
\graybox{gray boxes} a part of the pre and/or the post, which
``triggers'' the corresponding \logic rule.
%
% An important component of any \logic derivation are the applications
% of the {\framer}, which ``cuts'' the goal, so its spatial parts
% eventually become $\emp$, making it possible to remove them with the
% \empr rule. For this, base, rule, the synthesis result (\ie,
% \code{skip}) is not printed, to keep the code clean.
%
Intuitively, the goal of the synthesis process is to ``empty'' the spatial parts 
of the pre and the post, so that the derivation can eventually be closed via \empr;
to this end, \readr and \writer work together to create matching heaplets
between the two assertions, which are then eliminated by \framer.

%%%%%%%%%%%%%%%%%%%%%%%%%%%%%%%%%%%%%%%%%%%%%%%%%%%%%%
%%%%%%%%%%%%%%%%%%%%%%%%%%%%%%%%%%%%%%%%%%%%%%%%%%%%%%
%%%%%%%%%%%%%%%%%%%%%%%%%%%%%%%%%%%%%%%%%%%%%%%%%%%%%%

% \subsection{Spatial Frame Unification and Backtracking}
\subsection{Spatial Unification and Backtracking}

Now, consider the synthesis goal induced by the following
SL specification:

\vspace{-5pt}

\begin{equation}
%% paper/subst-left.syn
\label{eq:pick}
\asn{ x \pts 239 \sep y \pts 30}
~\text{\code{void pick(loc x, loc y)}}~
\asn{ x \pts z \sep y \pts z}  
\end{equation}

\vspace{5pt}

Since $z$ does not appear among the formals or in the precondition, 
it is treated as an existential. 
The postcondition thus allows $x$ and $y$
to point to any value, as long as it is the same value.

% Since the variable $e$ is not bound by either the list of
% \code{pick}'s formals or it precondition (as a ghost), it is treated
% as an existential. However, simply ``framing out'' the heaplet
% $x \pts e$ or $y \pts e$ from the postcondition, along with its more
% ``concrete'' counterpart from the precondition, would be unsound, as a
% constraint that both $x$ and $y$ should eventually point to \emph{the
  % same} value would be lost.

%\input{fig/frame-write}

\begin{wrapfigure}[8]{r}{0.4\textwidth}
\setlength{\abovecaptionskip}{0pt}
\vspace{-10pt}
\centering
\begin{minipage}{0.9\linewidth}
{\footnotesize{
\centering
\begin{mathpar}
\!\!\!\!\!\!\!\!\!\!\!\!
\inferrule[\hunir]
{
% \ev{\env, \slP, \slQ} \cap \vars{R'} \neq \emptyset
% \\
% \vs = \vars{\env, \slP, \slQ} \setminus \ev{\env, \slP, \slQ}
% \\
% \frameable{R'}
% \\
% \subst \in \unifyFrame{R', R, \vs}
[\subst]R' = R
\\
\emptyset \neq \dom{\subst} \subseteq \ev{\env, \slP, \slQ}
\\
\env; \trans{\asn{P \sep R}}{[\subst]\asn{\psi; Q \sep R'}}{\prog}
}
{
\env; \trans{\asn{\phi; P \sep R}}{\asn{\psi; Q \sep R'}}{\prog}  
}
\end{mathpar}
}}
\end{minipage}
% \caption{\logic rule for common subheap unification.}
\caption{\logic rule for heap unification.}
\label{fig:frame-write}
\end{wrapfigure}
To deal with existentials in the heap, we introduce the rule \hunir,
which attempts to find a unifying substitution $\subst$ for some sub-heaps 
of the pre and the post.
The domain of $\subst$ must only contain existentials.
% In order to allow for framing of this kind in \logic, we do not have
% to change the \framer rule from \autoref{fig:basic-rules}. 
% %
% Instead, we add a rule that would ``preparer'' the symbolic subheap
% for framing, suitably adapting the remaining parts of the goal. 
% %
% This is achieved by the rule \hunir (\autoref{fig:frame-write}). It
% relies on the partial function $\UnifyFrame$, which attempts to find
% the unifying substitutions for the two ``frame candidates", heap
% formulas $R'$ (from the post) and $R$ (from the pre), taking the
% common ``fixed'' variables $\vs$ (the goal's formals and ghosts) as an
% additional parameter.
% %
% In essence, $\UnifyFrame$ works as follows: if the spatial shapes
% of $R'$ and $R$ match and they can be unified, its result is a
% substitution $\sigma$ from the goal's existentials occurring in $R'$
% to the corresponding expressions (which might involve the ghost
% variables) in $R$, which is then applied to the remaining parts of the
% goal's post ($\bl{[\subst]\asn{\psi; Q \sep R'}}$).
%
% As the final important side condition, the rule \framer checks that
% removing $R$ and $R'$ does not introduce new existential variables,
% which can happen if the only occurrence of a ghost variable in the
% precondition was in $R$.
%
For example, applying \hunir to the spec~\eqref{eq:pick} with
$R \eqdef x \pts 239$ and $R' \eqdef x \pts z$ results in the
substitution $\subst = [z \mapsto 239]$, and the residual synthesis
goal
$\set{x, y} \teng{\asn{x \pts 239 \sep y \pts 30}}{\asn{x \pts 239
    \sep y \pts 239}}$, which can be now synthesized by using the
\framer ,\writer, and \empr rules.

% Now we can adapt \writer so it would take advantage of the
% heap unification.
% %
% The new version, \writer, which is shown on the right part of
% \autoref{fig:frame-write} and replaces \writer, appends the write
% command \emph{at the end} of the program being generated, while
% liberally replacing the corresponding points-to-asserting with the one
% referring to a fresh existential variable~$y$, so it can still be used
% by other rules, or eventually shaven off by {\framer}, while enabling
% some optimisations in the analysis procedure, which we will describe
% in detail in \autoref{sec:inversible}.

Due to their freedom to choose a sub-heap (and a unifying substitution),
\framer and \hunir introduce non-determinism into the synthesis procedure
and might require backtracking---a fact also widely observed in
interactive verification
community~\cite{McCreight-TPHOL09,Gonthier-al:ICFP11} \wrt SL
assertions.
% The characteristic virtue of the \framer rule, empowered with \hunir,
% to greatly reduce the search space by removing parts of the goal's pre
% and post, coincides with its main pitfall, as it introduces a large
% amount non-determinism into the synthesis procedure (due to the
% freedom to choose frames to unify)---a fact also widely observed in
% interactive verification
% community~\cite{McCreight-TPHOL09,Gonthier-al:ICFP11} \wrt SL
% assertions.
%
For instance, consider the spec below:
\begin{equation}
  \label{eq:notsure}
  \asn{x \pts a \sep y \pts b}
  ~\text{\code{void notSure(loc x, loc y)}}~
  \asn{x \pts c \sep c \pts 0}
\end{equation}
%% Nadia: I simplified this example because it still conveys the point,
%% and also because the first backtracking would never happen in our system, so it's confusing
% \begin{equation}
  % \label{eq:notsure}
  % \asn{x \pts a \sep y \pts b \sep a \pts 0}
  % ~\text{\code{void notSure(loc x, loc y)}}~
  % \asn{{x \pts a \sep y \pts c \sep c \pts 0}}
% \end{equation}
%
% One way to approach the spec~\eqref{eq:notsure} is by first framing
% out the common frame $\bl{x \pts a}$. Unfortunately, this is a false
% move, as it deprives us from accessing the pointer $a$,\footnote{The
  % notion of safety in SL prevents one from accessing unknown pointers
  % (\wrt what is in the pre), for reading and writing.}
% %
% so we backtrack.  
% %
% Another false move would be to first read from $y$, making $b$ a
% program-level variable \code{b2} (via \readr),

\begin{wrapfigure}[3]{r}{0.4\textwidth}
\vspace{-20pt}
\begin{center}
\begin{lstlisting}[basicstyle=\footnotesize\ttfamily]
void notSure(loc x, loc y) {
  *x = y;
  *y = 0;
}\end{lstlisting}
\end{center}
\end{wrapfigure}
One way to approach the spec~\eqref{eq:notsure} is to first read from $x$, 
making $a$ a program-level variable \code{a2} (via \readr),
then use \hunir and \framer on the $\bl{x \pts \bullet}$ heaplets in
the pre/post,
substituting the existential $c$ by~$a2$. 
That, however, leaves us with an unsolvable goal
$\set{x,y, \ccd{a2}} \tent{\asn{y \pts b}}{\asn{\ccd{a2} \pts 0}}$.
Hence we have to backtrack, and instead unify $c$ with $y$, eventually
deriving the correct program \code{notSure}.
% then use \hunir and \framer on the $\bl{y \pts \bullet}$ heaplets in
% the pre/post, simultaneously substituting the existential $c$
% by~$b$. That, however, would leave us with an unsolvable goal
% %
% $\set{x, y, \ccd{b2}} \tent{\asn{x \pts a \sep a \pts 0}}{\asn{x \pts
    % a \sep \ccd{b2} \pts 0}}$.
% %
% One way or another, after a number of back-trackings, a correct
% program \code{notSure} would be derived.

%%%%%%%%%%%%%%%%%%%%%%%%%%%%%%%%%%%%%%%%%%%%%%%%%%%%%%
%%%%%%%%%%%%%%%%%%%%%%%%%%%%%%%%%%%%%%%%%%%%%%%%%%%%%%
%%%%%%%%%%%%%%%%%%%%%%%%%%%%%%%%%%%%%%%%%%%%%%%%%%%%%%

\subsection{Reasoning with Pure Constraints}
\label{sec:purec}

So far we have only looked at SL specifications whose \emph{pure}
parts were trivially $\True$. Let us now turn our attention to the
goals that make use of non-trivial pure boolean assertions.

\subsubsection{Preconditions}
% Equality between symbolic expressions and constant values plays a
% special role in the SL-based reasoning, and we leverage it by adopting
% a number of the traditional \tname{Smallfoot}-style rules, whose
% \logic counterparts are shown in the top part of
% \autoref{fig:pure}.
To leverage pure \emph{preconditions}, we adopt
a number of the traditional \tname{Smallfoot}-style rules, whose
\logic counterparts are shown in the top part of
\autoref{fig:pure}.
In the nomenclature of \citet{Berdine-al:APLAS05}, all those rules are
\emph{non-operational}, \ie, correspond to constructing the proofs of
symbolic heap entailment and involve no programming component.
Note that the original rules in \citet{Berdine-al:APLAS05} assume a 
restricted pure logic with only equalities;
we adapt these rules to our logic-agnostic style,
relying on the oracle for pure validity instead of original syntactic premises.   
% Therefore, none of them emits any program-level statements,
% and simply ``propagates'' the construction of the program $\prog$
% further up the derivation, with an altered goal.

For instance, the rule \sleft makes use of a precondition that implies 
equality between two universal variables, $x = y$,
substituting all occurrences of $x$ in the subgoal by $y$.
% For instance, the rule \sleft makes use of an equality $x = e$ in the
% precondition, substituting all occurrences of $x$ by $e$ in both pre
% and post of the subgoal.
%% Nuking this since the next example already shows subst-L,
%% and also this one uses the version wihtout variables
% Consider the following synthesis goal for the
% procedure \code{stillFive}:
% %
% \begin{equation}
% %% paper/subst-left.syn
% \label{eq:still5}
% \asn{ z = 5 ; x \pts y \sep y \pts z}
% ~\text{\code{void stillFive(loc x, loc y)}}~
% \asn{ y \pts 5 \sep x \pts y}  
% \end{equation}
% %
% As one of the first transformations of the goal, the equality $z = 5$
% is used, thus, reducing the task to substantiating
% $\set{x, y} \tent{\asn{x \pts y \sep y \pts 5}}{\asn{y \pts 5 \sep x
    % \pts y}}$,
% which is trivially solved by further applications of \framer, so the
% resulting body ob \code{stillFive} is simply \code{skip}.
%
The rule \starpar makes explicit the fundamental assumption of SL: 
disjointness of symbolic heaps connected by the $\sep$ operator.
% The fundamental feature of Separation Logic is the assumption of
% disjointness of symbolic heaps connected by the $\sep$ operator, and
% the classical rule \starpar~\cite{Berdine-al:APLAS05}
% makes this assumption explicit for the sources of the points-to heap
% assertions in the precondition of a goal.
%
Most commonly, this rule's effect is observable in combination with
another rule, \incons, which identifies an inconsistent pre, 
and emits an always-failing program \code{error}. 
% Most commonly, this rule's effect is observable in combination with
% another rule, \incons, which locates a part of the
% pure formula in the pre contradicting the reflexivity of equality, and
% emits an always-failing program \code{error}. Given the assumption,
% this is a safe thing to do: this sub-program is never going to be
% reached since its precondition implies $\False$.

These three \logic rules can be observed in action via the following
example:
\begin{equation}
  \label{eq:urk}
{\small{
\asn{a = x  \wedge  y = a ; x \pts y \sep y \pts z}
  ~\text{\code{void urk(loc x, loc y)}}~
  \asn{\True; y \pts a \sep x \pts y}
}}
\end{equation}

\begin{figure}[t]
\setlength{\belowcaptionskip}{-10pt}
\centering
\begin{tabular}{c}
\begin{tabular}{ccc}
\begin{minipage}{0.25\linewidth}
{\footnotesize{
\centering
\begin{mathpar}
\!\!\!\!\!\!\!\!\!\!\!\!
\inferrule[\sleft]
  % Γ ; {[l/x]φ ; [l/x]P} ; {[l/x]ψ ; [l/x]Q} ---> S
  % ------------------------------------------------ [new-subst-L]
  % Γ ; {φ ∧ x = l ; P} ; {ψ ; Q} ---> S
{
\phi \Rightarrow x = y
\\
\env; \trans{[y/x]\asn{\phi; P}}{[y/x]\asn{\slQ}}{\prog}
}
{
\env; \trans{\asn{\phi; P}}{\asn{\slQ}}{\prog}
}
\end{mathpar}
% \begin{mathpar}
% \!\!\!\!\!\!\!\!\!\!\!\!
% \inferrule[\sleft]
  % % Γ ; {[l/x]φ ; [l/x]P} ; {[l/x]ψ ; [l/x]Q} ---> S
  % % ------------------------------------------------ [subst-L]
  % % Γ ; {φ ∧ x = l ; P} ; {ψ ; Q} ---> S
% {
% \env; \trans{[e/x]\asn{\phi; P}}{[e/x]\asn{\slQ}}{\prog}
% }
% {
% \env; \trans{\asn{\phi \wedge x = e; P}}{\asn{\slQ}}{\prog}
% }
% \end{mathpar}
%
}}
\end{minipage}
&
\begin{minipage}{0.42\linewidth}
{\footnotesize{
\centering
\begin{mathpar}
\!\!\!\!\!\!\!\!\!\!\!\!
\inferrule[\starpar]
  % x ≠ y ∉ φ
  % Γ ; {φ ∧ x ≠ y ; x.f -> l * y.f -> l' * P} ; {ψ ; Q} ---> S
  % ------------------------------------------------------------ [*-partial]
  % Γ ; {φ ; x.f -> l * y.f -> l' * P} ; {ψ ; Q} ---> S
{
x + \off \neq y + \off' \notin \phi
\\
\phi' = \phi \wedge (x + \off \neq y + \off')
\\
\env; 
\trans{\asn{\phi'; 
    \angled{x, \off} \pts e \sep \angled{y, \off'} \pts e' \sep P}}
{\asn{\slQ}}{\prog}  
}
{
\env; 
\trans{\asn{\phi; 
    \angled{x, \off} \pts e \sep \angled{y, \off'} \pts e' \sep P}}
{\asn{\slQ}}{\prog}  
}
\end{mathpar}
}}
\end{minipage}
&
\begin{minipage}{0.27\linewidth}
{\footnotesize{
\centering
\begin{mathpar}
\!\!\!\!\!\!\!\!\!\!\!\!
\inferrule[\incons]
{\phi \Rightarrow \bot}
{
\env; 
\trans{\asn{\phi ; P}}
{\asn{\slQ}}{\cerror}  
}
\end{mathpar}
}}
\end{minipage}
\end{tabular}
\\  
\begin{tabular}{ccc}
\begin{minipage}{0.33\linewidth}
{\footnotesize{
\centering
\begin{mathpar}
\!\!\!\!\!\!\!\!\!\!\!\!
\inferrule[\sright]
    % X ∈ GV(post) / GV (pre)
    % Γ ; {φ ; P} ; {[l/X]ψ ; [l/X]Q} ---> S
    % --------------------------------------- [subst-R]
    % Γ ; {φ ; P} ; {ψ ∧ X = l; Q} ---> S
{
x \in \ev{\env, \slP, \slQ}
\\
\ctx; \env; \trans{\asn{\slP}}{[e/x]\asn{\psi, Q}}{\prog}
}
{
\ctx; \env; \trans{\asn{\slP}}{\asn{\psi \wedge x = e; Q}}{\prog}
}
\end{mathpar}
}}
\end{minipage}
&
\begin{minipage}{0.33\linewidth}
{\footnotesize{
\centering
\begin{mathpar}
\!\!\!\!\!\!\!\!\!\!\!\!\!\!\!\!\!\!\!\!\!\!\!\!\!
\inferrule[\pickr]
{
y \in \ev{\env, \slP, \slQ}
\\
\vars{e} \in \env \cup \gv{\env, \slP, \slQ}
\\
\env; \trans{\asn{\phi; P}}
         {[e/y]\asn{\psi; Q}}{\prog}  
}
{
\env; \trans{\asn{\phi; P}}
         {\asn{\psi; Q}}{\prog}  
}
\end{mathpar}
}}
\end{minipage}

&
\begin{minipage}{0.33\linewidth}
{\footnotesize{
\centering
\begin{mathpar}
\!\!\!\!\!\!\!\! \!\!\!\!\!\!\!\! \!\!\!\!\!\!\!\!
\inferrule[\hyp]
  %   Γ ; {φ ∧ φ1 ; P} ; {ψ' ; Q'} ---> S
  %           s = unify(φ1, φ2)
  %     {ψ' ; Q'} = subst({ψ ; Q}, s)
  % --------------------------------------- [Hypothesis-Unify]
  % Γ ; {φ ∧ φ1 ; P} ; {ψ ∧ φ2 ; Q} ---> S
{
% \ev{\env, \slP, \slQ} \cap \vars{\psi'} \neq \emptyset
% \\
% \vs = \vars{\env, \slP, \slQ} \setminus \ev{\env, \slP, \slQ}
% \\
% \subst \in \unifyPure{\psi', \phi', \vs}
% \\
[\subst]\psi' = \phi'
\\
\emptyset \neq \dom{\subst} \subseteq \ev{\env, \slP, \slQ}
\\
\env; \trans{\asn{\slP}}{[\subst]\asn{\slQ}}{\prog}
}
{
\env; \trans{\asn{\phi \wedge \phi'; P}}{\asn{\psi \wedge \psi'; Q}}{\prog}  
}
\end{mathpar}
}}
\end{minipage}
\end{tabular}
\end{tabular}
\caption{Selected \logic rules for reasoning with pure constraints in
  the synthesis goal.}
\label{fig:pure}
\end{figure}

After applying \sleft, the goal transforms to
$\set{x, y} \tent{\asn{x \pts x \sep x \pts z}}{\asn{x \pts x \sep x
    \pts x}}$,
which is clearly unsatisfiable, as the precondition requires two
\emph{disjoint} points-to heaplets with the same source---a fact,
which converted into a pure sub-formula $x \neq x$ by
\starpar, resulting in the \code{error} body via
\incons.

\subsubsection{Postconditions}\label{sec:pure-synthesis}

In the presence of non-trivial pure \emph{postconditions}, we face the
problem to find suitable instantiations for their existentials. This
is a challenging, yet well-studied problem, tackled by \emph{pure
  program synthesis}~\cite{Alur-al:FMCAD13}. We consider this problem
orthogonal to our agenda of deriving pointer-manipulating programs,
and represent pure synthesis with a simplification rule
\sright~\cite{Berdine-al:APLAS05}, exploiting an equality in a goal's
postcondition, and an oracle rule \pickr, which picks an instantiation
for an existential non-deterministically,

In practice, the non-determinism can be curbed, for example, by
delegating to an existing pure
synthesizer~\cite{Kuncak-al:PLDI10,Reynolds-al:CAV15}. In our
implementation, however, we found a combination of first-order
unification (rule \hyp) and restricted enumerative search (rule \pickr
restricted to variables) to be very effective at discharging such
synthesis goals.

% \paragraph{Unification with custom theories}
% %
% While \logic gives a dedicated treatment to boolean equality in its
% pure fragment, it is mostly parametric with regard to other theories,
% and makes an efficient use of existing SMT
% solvers~\cite{deMoura-Bjorner:TACAS08,Barrett-al:CAV11} to reason
% about satisfiability of pure formulas, with the most frequent example
% being the rule \empr (\autoref{fig:basic-rules}),
% requiring establishing the implication $\phi \Rightarrow \psi$ as its
% premise.
% %
% In some cases, it is, however, crucial to be able to establish certain
% non-trivial relations between symbolic variables in the pre and post,
% beyond mere boolean satisfiability, by reasoning from the pure
% components of the assertions.
% %
% Those scenarios are encapsulated in the two \logic rules, \hyp and
% \pickr (\autoref{fig:pure}, bottom), which represent the
% interaction points between our framework for imperative program
% synthesis and the existing tools for pure SMT-based synthesis.

% The rule \hyp's effect is similar to \hunir: it unifies the
% existential variables of the post's \emph{pure} conjunct $\psi'$ and
% the corresponding component pre's conjunct $\phi'$, adapting the
% sub-goal accordingly. The syntax-directed unification of
% theory-specific expressions is done by the auxiliary procedure
% $\UnifyPure$, which returns a set of possible substitutions $\sigma$
% as a result of successful unification between $\psi'$ and $\phi'$.
%
% In this capacity, this rule very much mimics \frame, which
% accomplishes the same mission, but for the spatial formulas.
%
As an example, consider the following goal
(where $S$ and $S_1$ are finite sets and $\union$ is disjoint union):
% , which makes use of
% unification in the theory of finite sets, with the set equality
% $\seteq$ and disjoint union $\union$:
%
\begin{equation}
  \label{eq:elem}
{\small{
\asn{S \seteq \set{\ccd{v}} \union S_1; \ccd{x} \pts a}
  ~\text{\code{void elem(loc x, int v)}}~
  \asn{S \seteq \set{v_1} \union S_1; \ccd{x} \pts v_1 + 1}
}}
\end{equation}
% \begin{equation}
  % \label{eq:elem}
% {\small{
% \asn{S \seteq \set{v} \union S_1; x \pts a}
  % ~\text{\code{void elem(loc x, int v)}}~
  % \asn{S_1 \union \set{v_1} \seteq S; x \pts v_1 + 1}
% }}
% \end{equation}
%
Following the rule \hyp, one can unify the two facts about sets in the
pre and the post, 
% with the only existential variable $v_1$, 
obtaining the substitution $[v_1 \mapsto v]$.
% , removing the corresponding conjunct from the postcondition. 
The rest is accomplished by the rule
\writer, which emits the only necessary statement for \code{elem}'s
body: \code{*x = v + 1}.

\subsection{Synthesis with Inductive Predicates}

The real power of Separation Logic stems from its ability to
compositionally reason about linked heap-based data structures, such
as lists and trees, whose shape is defined recursively via
\emph{inductive heap predicates}.
The most traditional example of a data structure defined this way is a
linked list segment $\lseg(x, y, S)$~\cite{Reynolds:LICS02}, whose
definition is given by the two-clause predicate below:

\vspace{-5pt}
\begin{equation}
\label{eq:lseg}
\begin{array}{r@{\ \ }c@{\ \ }r@{\ }c@{\ }l}
\lseg(x, y, S) & \eqdef & x = y & \wedge & \set{ S \seteq \emptyset ; \emp}
\\[3pt]
& & x \neq y & \wedge & \set{ S \seteq \set{v} \union S_1 ; 
\block{x}{2} \sep x \pts v \sep \angled{x, 1} \pts \nxt \sep \lseg(\nxt, y, S_1) } 
\end{array}
\end{equation}
% \begin{equation}
% \label{eq:lseg}
% \begin{array}{r@{\ \ }c@{\ \ }r@{\ }c@{\ }l}
% \lseg(x, y, S) & \eqdef & x = y & \Rightarrow & \set{ S \seteq \emptyset ; \emp}
% \\[3pt]
% & & x \neq y & \Rightarrow & \set{ S \seteq \set{v} \union S_1 ; 
% \block{x}{2} \sep x \pts v \sep \angled{x, 1} \pts \nxt \sep \lseg(\nxt, y, S_1) } 
% \end{array}
% \end{equation}
%
\vspace{5pt}

The predicate definition, which we will abstractly denote as
$\indp \eqdef p(\many{x_i})\many{\pclause{\sel_j}{\chi_j}{R_j}}_{j \in
  1\ldots N}$,
starts from the name $p$ and a vector or formal parameters
$\many{x_i}$; for $\lseg$ those are the symbolic pointer variables,
$x$ and $y$ for the first and the last pointer in the list, as well as
for the logical set $S$ of its elements.
% (we assume all elements in the list are distinct).
%
What follows is a sequence of $N$ \emph{inductive clauses}, with a
$j\tth$ clause starting from a \emph{guard} $\sel_j$---a
\emph{boolean} formula defining a condition on the predicate's
formals,\footnote{In the case of logical overlap, the conditions for
  different clauses are checked in the order the clauses are defined.}
followed by the \emph{clause body}---a SL assertion with a spatial
part $R_j$ and pure part $\chi_j$, describing the shape of the heap
and pure constraints, correspondingly.
Clauses' free variables, \ie, non-formals, (\eg, $v$, $\nxt$) are
treated as ghosts or existentials, depending on whether the predicate
instance is in a pre or post of a goal.
From now on, we extend the definition of the goal, with a
\emph{context} $\ctx$, which will store the definitions of inductive
predicates and specified functions, which are accessible in the
derivation.

That is, the first clause of $\lseg$ states that in the case of $x$
and $y$ being equal, the linked list's set of element is empty and its
implementation is an empty heap $\emp$.
The complementary second clause postulates the existence of the
\emph{allocated memory block} (or just \emph{block}) of two
consecutive pointers \emph{rooted} at $x$ (denoted
$\block{x}{2} \sep x \pts v \sep \angled{x, 1} \pts \nxt$), such that
the first pointer stores the payload $v$, while the second one points
to the tail of the list, whose shape is defined recursively via the
same predicate, although with different actuals, as captured by the
predicate instance $\lseg(\nxt, y, S_1)$.

\subsubsection{Dynamic Memory}

In order to support dynamically allocated linked structures, 
as demonstrated by
definition~\eqref{eq:lseg}, we extend the language of symbolic heaps
with two new kinds of assertions: \emph{blocks} and \emph{predicate
  instances}.
Symbolic blocks are a well-established way to add to SL support for
consecutive memory
chunks~\cite{Jacobs-al:NFM11,Brotherston-al:CADE17}, which can be
allocated and disposed all together.\footnote{An alternative would be to
  adopt an object model with \emph{fields}, which is more
  verbose~\cite{Berdine-al:APLAS05}.}
Two \logic rules, \allocr and \freer (presented in
\autoref{sec:logic}), make use of blocks, as those appear in the post-
and the pre-conditions of their corresponding goals.
%
% We postpone the formal presentation of these two rules until
% \autoref{sec:logic}.
%
Conceptually, 
% they are similar to \framer and \writer in the following way. 
%
\allocr looks for a block in the postcondition rooted at an existential 
and allocates a block of the same size
(by emitting the command \code{let x = malloc(n)}), 
adding it to the subgoal's precondition.
% so the rule \writer would then take care of ``filling it up'' with this right payload.
%
\freer is triggered by an un-matched block in the goal's pre,
rooted at some program variable \code{x}, which it then disposes by
emitting the call to \code{free(x)}, removing it from the subgoal's
precondition.
%
% The definition of $\Frameable$ for rule \framer
% (\autoref{fig:frame-write}) is changed, so now it can frame-out
% predicate instances (\eg, $\lseg(x, y, S)$), and \emph{entire}
% blocks.

% \paragraph{Opening heap predicates in a precondition}
\subsubsection{Induction}
Let us now synthesize our first recursive heap-manipulating function,
a linked list's \emph{destructor} \code{listfree(x)}, which expects a
linked list starting from its argument $x$ and ending with the
null-pointer, and leaves an empty heap as its result:
%
%\vspace{-5pt}
%
\begin{equation}
\label{eq:list-free}
\asn{ \lseg(x, 0, S)}
~\text{\code{void listfree(loc x)}}~
\asn{\emp}    
\end{equation}
%
%\vspace{10pt}

The first synthesis step is carried out by the \logic rule \mkind. 
We postpone its formal description until the
next section, conveying the basic intuition here.
\mkind only applies to the \emph{initial} synthesis goal
whose precondition contains an inductive predicate instance,
and its effect is to add a new \emph{function symbol} to the goal's context,
such that an invocation of this function would correspond to a recursive call.
% For each instance of an inductive predicate in the \emph{top-level
  % goal}'s precondition,
% \invkind forms an \emph{inductive alternative}
% (a sub-goal); solving \emph{any} of such alternatives amounts to
% deriving a sought program.
%
In our example~\eqref{eq:list-free} \mkind extends the context $\ctx$
with a ``recursive hypothesis'' as follows (we explain the meaning of
the tag 1 in $\bl{\lseg^1(x', 0, S')}$ later):\footnote{In \logic, a
  context $\ctx$ can also store user-provided specifications of
  auxiliary functions synthesized earlier. We will elaborate on case
  studies relying on user-provided auxiliary functions in
  \autoref{sec:results}.}

% Since there is only one predicate instance in~\eqref{eq:list-free},
% namely, $\lseg(x, 0, S)$, the application of \mkind results in a
% single alternative to solve, with the following \emph{specified
  % recursive function} created and added to the goal's context $\ctx$
% (never mind the 1 in $\bl{\lseg^1(x', 0, S')}$ for now).
%
%
\begin{equation}
  \label{eq:frec}
  \ctx_1 \eqdef \ctx, \ccd{listfree}(x'): \asn{\lseg^1(x', 0, S')}\asn{\emp}
\end{equation}
% \begin{equation}
  % \label{eq:frec}
  % \ccd{listfree}(x'): \asn{\lseg^1(x', 0, S')}\asn{\emp}
% \end{equation}
%
% That is, the application of \mkind converts the provided specification
% into a recursion (loop) invariant, by making a named function with the
% same name as the initial synthesis task.

\subsubsection{Unfolding Predicates}

The top-level rule \mkind is complemented by the rule \invkind 
(defined in \autoref{sec:logic}), which \emph{unfolds} a predicate instance
in the goal's precondition according to its definition,
and creates a subgoal for each inductive clause.
% The top-level rule \mkind is complemented by the regular \logic rule
% \invkind (defined in \autoref{sec:logic}), which can be used at
% any point of the derivation by exercising the alternatives for each
% predicate instance (under some conditions explained below) in the
% goal's precondition. 
For instance, invoked immediately on our
goal~\eqref{eq:list-free}, it has the following effect on the
derivation:

\begin{itemize}[leftmargin=1.5em]
\item[(a)] Two sub-goals, one for each of the clauses of $\lseg$, are
  generated to solve:

  \begin{itemize}
  \item[($i$)]  $\ctx_1 ; \set{x} ; 
    \tent{\asn{x = 0 \wedge S \seteq \emptyset; \emp}} {\asn{\emp}}$

  \item[($ii$)] $\ctx_1 ; \set{x} ;
    \tent{\asn{x \neq 0 \wedge S \seteq \set{v} \union S_1; 
        \block{x}{2} \sep x \pts v \sep \angled{x, 1} \pts \nxt \sep \lseg^1(\nxt, y, S_1)}} {\asn{\emp}}$
  \end{itemize}

\item[(b)] Assuming $c_1$ and $c_2$ are the programs solving the
  sub-goals ($i$) and ($ii$), the final program is obtained by
  combining them as $\ccd{if}~(x~\sym{=}~0)~\set{c_1}~\ccd{else}~\set{c_2}$.
\end{itemize}
Thus, \invkind performs case-analysis according to the predicate definition. 
Note how the precondition of each generated sub-goal
is \emph{refined} by the corresponding clause's guard and body.
%
% Thus, \invkind performed the case-analysis according to the definition
% of a suitably picked inductive predicate instance, generating a
% number of subgoals as per the predicate's clauses, so that each of the
% sub-goals has the precondition, \emph{refined} by the corresponding
% clause's guard and body.
%
The resulting sub-programs, once synthesized, are then combined with
the conditional statement (this is why we require decidability of the
guard statements), which branches on the predicate's guard.
%
% that combines them the corresponding predicate clauses are defined.
%
It is easy to see that the first sub-goal ($i$) can be immediately
solved via \empr rule, producing the program \code{skip}.

\subsubsection{Level Tags}

Synthesizing recursive programs requires extra care in order to avoid
infinite derivations, as well as vacuously correct (in the sense of
partial program correctness) \emph{non-terminating} programs that simply call themselves.
To avoid this pitfall, we adopt the ideas from the Cyclic Termination
Proofs in SL~\cite{Brotherston-al:APLAS12}, under the assumption that
employed user-defined inductive predicates are \emph{well-founded},
\ie, have their recursive applications only on strictly smaller
sub-heaps~\cite{Brotherston-al:POPL08}.
We ensure that this is the case by checking that there is at least one
points-to predicate in clauses that also contain predicate instances.

Specifically, to avoid infinite unfolding of predicate instances
we introduce
\emph{level tags} (natural numbers, ranged over by~$\lev$), which now
annotate some predicate instances in the pre and post
of the goal and the context functions.
% well as in the recursive function specification.
%
For an instance in a goal's pre, a tag determines whether a set of
functions in $\ctx$ that can be ``applied'' to it.
As a result of a function application, tags are modified 
(as we explain later), 
thus preventing functions from being ``re-applied'' to their own symbolic post-heaps.
Since tags only serve to control function calls, 
% they are irrelevant for framing,
the rules \framer (\autoref{fig:basic-rules}) and \hunir (\autoref{fig:frame-write}) ignore them 
when comparing sub-heaps for equality.
% so in the rule \framer (\autoref{fig:frame-write}), the
% definition of $\Eqframe$ checks for equality of heaps \emph{modulo}
% tags.
%
All predicates in the pre/post of the initial goal have their level
tag set as $\lev = 0$, and the rule \invkind only applies to
$0$-tagged predicates, incrementing their tag (\ie, one cannot
``unfold again'' an already opened instance).

%
% In some cases (\eg, several applications of the predicates in the
% initial goal's precondition), multiple induction alternatives will be
% generated, with only one instance being tagged per each alternative.

\begin{figure}[t]
\setlength{\abovecaptionskip}{-0pt}
\setlength{\belowcaptionskip}{-10pt}
\centering
\begin{tabular}{cc}
\begin{minipage}{0.5\linewidth}
{\footnotesize{
\centering
\begin{mathpar}
\!\!\!\!\!\!\!\!\!
\inferrule[\applyr]
{
\fun \eqdef f(\many{x_i}) : \asn{\phi_f, P_f}\asn{\psi_f, Q_f} \in \ctx
\\
R \eqTags [\sigma]P_f
% \subst \in \unifyFrameL{P_f, R, \emptyset}
% \quad\quad
\\
\phi \Rightarrow [\sigma]\phi_f
\\
\phi' \eqdef [\subst]\psi_f
\quad\quad
R' \eqdef \bump{[\subst]Q_f}{}
\\
\many{e_i} = [\subst]\many{x_i}
\\
\vars{\many{e_i}} \subseteq \env
\\
\ctx; \env; \trans{\asn{\phi \wedge \phi'; P \sep R'}}{\asn{\slQ}}{\prog}  
}
{
\ctx; \env; \trans{\asn{\phi; P \sep R}}{\asn{\slQ}}{f(\many{e_i}); \prog}  
}
\end{mathpar}
}}
\end{minipage}
&
\!\!\!\!\!\!\!\!\!\!
\begin{minipage}{0.5\linewidth}
{\footnotesize{
\centering
\begin{mathpar}
\inferrule[\closer]
    %           p(params) := { true ? b }
    % Γ ; { φ ; P } ; { ψ ; b[args/params] * Q } ---> S
    % ---------------------------------------------------- [close]
    %     Γ ; { φ ; P } ; { ψ ; p(args) * Q } ---> S
{
\indp \eqdef p(\many{x_i})\many{\pclause{\sel_j}{\chi_j}{R_j}}_{j \in
  1\ldots N} \in \ctx
\\
\lev < \depth
\\
1 \leq k \leq N
\\
\subst \eqdef [\many{x_i \mapsto y_i}]
\\
% \sel' = [\subst]\sel_k
% \\
% \chi' = [\subst]\chi_k
% \\
R' \eqdef \bump{[\subst] R_k}{\lev + 1}
\\
\ctx;\env; \trans{\asn{\slP}}
         {\asn{\psi \wedge  [\subst]\sel_k \wedge [\subst]\chi_k; Q \sep R'}}{\prog}  
}
{
\ctx;\env; \trans{\asn{\slP}}
         {\asn{\psi; Q \sep p^{\lev}(\many{y_i})}}{\prog}  
}
\end{mathpar}
}}
\end{minipage}
\end{tabular}
\caption{Selected \logic rules for synthesis with recursive functions
  and inductive predicates.}
\label{fig:induction}
\end{figure}

To see how tags control what functions from $\ctx$ can be applied,
consider the rule~\applyr in \autoref{fig:induction}.
It fires when the goal contains in its precondition a symbolic
sub-heap $R$, which can be unified with the precondition $P_f$ of a
function symbol $f$ from the goal's context $\ctx$.
This unification is similar to the effect of \hunir,
with the difference that \applyr takes level tags into the account 
(\ie, instances with different tags cannot be unified),
reflected in the tag-aware equality predicate $\eqTags$.
%
% The unification procedure $\UnifyFrameL$ is similar to the one used in
% the rule \hunir, with the only difference that $\UnifyFrameL$ takes
% predicate instances' tags into the account (\ie, instances with
% different tags cannot be unified), while $\UnifyFrame$ ignores tags.
%
Our example's second goal~($ii$)
\begin{equation}
  \label{eq:list-free-1}
  \set{x} ; \tent{\asn{x \neq 0 \wedge S \seteq \set{v} \union S_1;
    \block{x}{2} \sep x \pts v \sep \gbm{\angled{x, 1} \pts \nxt} \sep
    \lseg^1(\nxt, y, S_1)}} {\asn{\emp}}
\end{equation}
%
%\vspace{-10pt}
%
can be now transformed, via \readr (focused on $\bl{\angled{x, 1} \pts
  \nxt}$), into
\begin{equation}
  \label{eq:list-free-2}
\set{x, \ccd{nxt2}} ; \tent{\asn{x \neq 0 \wedge S \seteq \set{v} \union
    S_1; \block{x}{2} \sep x \pts v \sep \angled{x, 1} \pts \ccd{nxt2} \sep
    \gbm{\lseg^1(\ccd{nxt2}, y, S_1)}}} {\asn{\emp}}
\end{equation}

\begin{wrapfigure}[8]{r}{0.38\textwidth}
\setlength{\abovecaptionskip}{-3pt}
\vspace{-17pt}
\begin{center}
\begin{lstlisting}[basicstyle=\small\ttfamily]
void listfree(loc x) {
  if (x == 0) {} else {
    let nxt2 = *(x + 1);
    listfree(nxt2);
    free(x);
  } }
\end{lstlisting}
\end{center}
\caption{Synthesized \code{listfree}~\eqref{eq:list-free}.}
\label{fig:free}
\end{wrapfigure}
The grayed fragment in~\eqref{eq:list-free-2} can now be unified with
the precondition of \code{listfree}~\eqref{eq:frec} following
\applyr's premise.
As the tags match 
(both indicate the \emph{first} unfolding of the predicate), 
unification succeeds with the substitution
$\subst = [x' \mapsto \ccd{nxt2}, S' \mapsto S_1]$ 
from $f$'s parameters and ghosts to the goal variables.
% As the tags in both $f$'s pre and in the grayed fragment match (both
% indicate the \emph{first} unfolding of the predicate), the unification
% via $\UnifyFrameL$ emits the substitution
% $\subst = [x' \mapsto \ccd{nxt2}, S' \mapsto S_1]$ from $f$'s parameters
% and ghosts to the goal variables (by saying so, we assume a
% capture-avoiding substitution for clashing names, which are suitably
% $\alpha$-renamed).
%
The same rule produces, 
% from the function $f$'s adapted postcondition,
from the $f$'postcondition,
a new symbolic heap $R'$, which replaces the targeted fragment in
the pre, recording the effect of the call.
% , once the function is applied. 
All tagged predicate instances in $R'$ get their tags \emph{erased}
($\bump{[\subst]Q_f}{}$), thus, preventing any future recursive
applications (via \applyr) on the produced symbolic heap (more on that
design decision in \autoref{sec:discussion}).
% in the pre,
% since those have their predicate instances in its precondition
% tagged with $1$ (see, \eg,~\eqref{eq:frec}).
%
However, in this example, the function's post is merely $\asn{\emp}$,
so the goal becomes:
\begin{equation}
  \label{eq:list-free-3}
\set{x, \ccd{nxt2}} ; \tent{\asn{x \neq 0 \wedge S \seteq \set{v} \union
    S_1; \block{x}{2} \sep x \pts v \sep \angled{x, 1} \pts \ccd{nxt2}
    \sep \emp}} {\asn{\emp}}
\end{equation}
%
%
% No further applications of \applyr or \invkind possible (as the
% precondition has no more predicate instances).

\begin{wrapfigure}[12]{r}{0.40\textwidth}
\setlength{\abovecaptionskip}{0pt}
\vspace{-20pt}
\begin{center}
\begin{lstlisting}[basicstyle=\footnotesize\ttfamily,xleftmargin=4pt]
void listmorph(loc x, loc r) {
  if (x == 0) { } else {
    let v2 = *x;
    let nxt2 = *(x + 1);
    listmorph(nxt2, r);
    let y12 = *r;
    let y2 = malloc(3);
    free(x);
    *(y2 + 2) = y12;
    *(y2 + 1) = v2 + 1;
    *y2 = v2;
    *r = y2;
  } }
\end{lstlisting}
\end{center}
\caption{Synthesized \code{listmorph}~\eqref{eq:listmorph}.}
\label{fig:morph}
\end{wrapfigure}
%
%
%\vspace{-5pt}
%
%
The remaining steps are carried out by the rule \freer, followed by
\empr, with the former disposing the remaining block, thus, completing
the derivation with program \code{listfree} shown in
\autoref{fig:free}.

\subsubsection{Unfolding in the postcondition}
% \paragraph{Closing heap predicates in a postcondition}
%
% The rules \invkind and \applyr helped us to synthesize programs whose
% \emph{pre-state} would involve a linked data structure, by means of
% decomposing it and reasoning by induction, with the help of recursive
% calls.
Whereas \invkind unfolds predicate instances in a goal's precondition,
a complementary rule \closer (\autoref{fig:induction}) 
performs a similar operation on the goal's postcondition.
The main difference is that instead of performing a case-split
and emitting several subgoals,
\closer \emph{non-deterministically picks} a single clause $k$ from the predicate's definition
(the intuition being that the required case split has already been performed by \invkind).
%
% A complementary rule \closer (\autoref{fig:induction}) helps to
% deal with predicates occurring in a goal's postcondition by means of
% unfolding them, in an attempt to match with the symbolic heap in the
% precondition, and discharge via \framer.
%
% In particular, \closer is triggered by an instance
% $p^{\lev}(\many{e_i})$ of predicate $\indp$ in the post, and looks up
% for its definition in the context $\ctx$. Then, for each clause
% $\pclause{\sel_j}{\chi_j}{R_j}$ (where $1 \leq j \leq N$ and $N$ is a
% number of the clauses of $\indp$), it generates an alternative
% subgoal. 
% In order for the rule to succeed, it is sufficient to satisfy just one
% of those alternatives, for some $k$ and the $k\tth$ clause of $\indp$.
%
Upon unfolding, the clause's adapted guard ($[\subst]\sel_k$) and
pure part ($[\subst]\chi_k$) are added to the subgoal's postcondition,
while its spatial part also gets its level tags increased by one
($\bump{[\subst] R_k}{\lev + 1}$), in order to account for the depth
of unfoldings.\footnote{We will elaborate on the control of unfolding
  depth in \autoref{sec:rules}.}
%
% As stated before, initially (\ie, for the top-level goal), all
% predicates in the postcondition have their level tag $\lev = 0$.

To showcase the use of \closer, let us define a new predicate for a
linked null-terminating structure $\lsegt$, which stores in each node the
payload $v$ and $v + 1$:
\begin{equation}
\label{eq:lseg2}
\!\!\!\!\!\!\!
\begin{array}{r@{\ \ }c@{\ \ }r@{\ }c@{\ }l}
\lsegt(x, S) & \eqdef & x = 0 & \wedge & \set{ S \seteq \emptyset ; \emp}
\\[3pt]
& & x \neq 0 & \wedge & \set{\!\! 
\begin{array}{l}
S \seteq \set{v} \union S_1 ;
\\[2pt]
\block{x}{3} \sep x \pts v \sep \angled{x, 1} \pts  v + 1 \sep \angled{x, 2} \pts \nxt \sep \lsegt(\nxt, S_1)
\end{array}
\!\!} 
\end{array}
\end{equation}

We now synthesize an implementation for the following specification,
requiring to morph a regular list $\lseg(x, 0, S)$ to $\lsegt(y, S)$,
both parameterized by the same set $S$:
\begin{equation}
\label{eq:listmorph}
\asn{r \pts 0 \sep \lseg(x, 0, S)}
~\text{\code{void listmorph(loc x, loc r)}}~
\asn{r \pts y \sep \lsegt(y, S)}    
\end{equation}
The derivation starts \mkind, then {\invkind}s $\lseg(x, 0, S)$,
producing two sub-goals.
The first one:
\begin{equation}
\label{eq:morph1}
\set{x, r}; \tent{\asn{S \seteq \emptyset \wedge x = 0 ; r \pts 0}}{\asn{r \pts y \sep \lsegt^0(y, S)}}    
\end{equation}
is easy to solve via \closer, which should pick the first clause from
$\lsegt$'s definition~\eqref{eq:lseg2} (corresponding to $\emp$),
followed by \framer to $\bl{r \pts y}$ in the postcondition. 
The second subgoal, after having read the value of $(x + 1)$, thus
into a program variable $\ccd{nxt2}$ looks as follows:
\begin{equation}
\label{eq:morph2}
\!\!\!\!\!\!\!
\begin{array}{l@{\ \ }l}
\set{x, r, \ccd{nxt2}}; 
\\[2pt]
\asn{S \seteq \set{v} \union S_1 \wedge x \neq 0 ; 
r \pts 0 \sep \block{x}{2} \sep x \pts v \sep \angled{x, 1} \pts \ccd{nxt2}
              \sep \lseg^1(\ccd{nxt2}, 0, S_1)} 
& \leadsto
\\[2pt]
\asn{r \pts y \sep \gbm{\lsegt^0(y, S)}}
\end{array}
\end{equation}
Now, \closer comes to the rescue, by allowing us to unfold the grayed
instance in \eqref{eq:morph2}'s post:
\begin{equation}
\label{eq:morph3}
\!\!\!\!\!\!\!
\begin{array}{l@{\!\!\!\! \!\!}l}
\set{x, r, \ccd{nxt2}}; 
\\ [2pt] 
\asn{S \seteq \set{v} \union S_1 \wedge x \neq 0 ; 
\gbm{r \pts 0} \sep \block{x}{2} \sep x \pts v \sep \angled{x, 1} \pts \ccd{nxt2}
              \sep \gbm{\lseg^1(\ccd{nxt2}, 0, S_1)}} 
& \leadsto
\\[2pt]
\asn{S \seteq \set{v_1} \union S_2; r \pts y \sep
\block{y}{3} \sep y \pts v_1 \sep \angled{y, 1} \pts  v_1 + 1 \sep \angled{y, 2} \pts \nxt_1 \sep \lsegt^1(\nxt_1, S_2)}
\end{array}
\end{equation}
In principle, nothing in the postcondition prevents us from applying
\closer again, unfolding the instance $\bl{\lsegt^1(\nxt_1, S_2)}$
even further. Intuitively, a postcondition with a symbolic heap that
elaborated is less likely to be satisfied, hence we limit the number
of ``telescopic'' unfoldings by enforcing the boundary $\depth$ for
the level tag.
We can now use the \applyr rule, unifying the precondition of the
induction hypothesis~\eqref{eq:listmorph} with the grayed parts in the
goal~\eqref{eq:morph3}, obtaining the following subgoal (notice the
new instance $\bl{\lsegt(y_1, S_1)}$ in the pre with its tag
erased):
\[
\!\!\!
\begin{array}{l@{\ }l}
\set{x, r, \ccd{nxt2}}; 
\\ [2pt] 
\asn{S \seteq \set{v} \union S_1 \wedge x \neq 0 ; 
\gbm{\block{x}{2} \sep x \pts v \sep \angled{x, 1} \pts \ccd{nxt2}}
  \sep r \pts y_1 \sep \lsegt(y_1, S_1)} 
& \leadsto
\\[2pt]
\asn{S \seteq \set{v_1} \union S_2; r \pts y \sep
\gbm{\block{y}{3} \sep y \pts v_1 \sep \angled{y, 1} \pts  v_1 + 1 \sep \angled{y, 2} \pts \nxt_1} \sep \lsegt^1(\nxt_1, S_2)}
\end{array}
\]
The instances of $\lsegt$ in the pre and the post, 
can now be unified via \hunir 
% (which doesn't care about level tags),
instantiating $\nxt_1$ with $y_1$, 
followed by \hyp on pure parts, instantiating $S_2$ with $S_1$,
and then framed via \framer.
The remaining derivation is done by {\readr}ing from $r$ and $x$,
subsequent disposing of a two-cell block (grayed in the pre) and
allocation of a three-cell block in order to match the grayed block in
the post. Finally, the exact payload for cells of the newly-allocated
3-pointer block is determined by unifying the set assertions in the
pure parts of the pre and post (via \hyp), and then \writer records
the right values to satisfy the constraints imposed for the head of
$\lsegt$-like list by Definition~\eqref{eq:lseg2}.
The resulted synthesized implementation of \code{listmorph} is shown
in \autoref{fig:morph}.

%%%%%%%%%%%%%%%%%%%%%%%%%%%%%%%%%%%%%%%%%%%%%%%%%%%%%%
%%%%%%%%%%%%%%%%%%%%%%%%%%%%%%%%%%%%%%%%%%%%%%%%%%%%%%
%%%%%%%%%%%%%%%%%%%%%%%%%%%%%%%%%%%%%%%%%%%%%%%%%%%%%%

\subsection{Enabling Procedure Calls by Means of Call Abduction}
\label{sec:abduct}

We conclude this overview with one last example---a recursive
procedure for copying a linked list:
\begin{equation}
  \label{eq:listcopy}
  \asn{r \pts x \sep \lseg(x, 0, S)}
  ~\text{\code{void listcopy(loc r)}}~
  \asn{r \pts y \sep \lseg(x, 0, S) \sep \lseg(y, 0, S)}
\end{equation}
To make things more fun, we pass the pointer to the head of the list
via another pointer \code{r}, which is also used to record the result
of the function---an address $y$ of a freshly allocated list copy.
The synthesis begins by using \mkind, producing the function symbol
\begin{equation}
  \label{eq:copyrec}
\!\!\!\!
  \text{\code{void listcopy(loc r')}}: 
  \asn{r' \pts x' \sep \lseg^1(x', 0, S')}
  \asn{r' \pts y' \sep \lseg^1(x', 0, S') \sep \lseg^1(y', 0, S')}
\end{equation}
It follows by \readr (from $r$ into \code{x2}) and \invkind, resulting
in two subgoals, the first of which (an empty list) is trivial.
The synthesis proceeds, reading from \code{x2} into \code{v2} and from
\code{x2 + 1} into \code{nxt2}, so after using \closer (on
$\bl{\lseg^0(x, 0, S)}$) in the post, \hunir and \framer, we reach the
following subgoal:
\begin{equation}
\label{eq:copy1}
\begin{array}{l@{\ \ }l@{\ \ }l}
\set{x, r, \ccd{x2}, \ccd{v2}, \ccd{nxt2}}; 
&
\asn{S \seteq \set{\ccd{v2}} \union S_1 \wedge \ccd{x2} \neq 0 ; 
\gbm{r \pts \ccd{x2} \sep \lseg^1(\ccd{nxt2}, 0, S_1)}} 
& \leadsto
\\[2pt]
&\asn{
r \pts y \sep 
\lseg^1(\ccd{nxt2}, 0, S_2) \sep \lseg^0(y, 0, S)}
\end{array}
\end{equation}

\begin{wrapfigure}[9]{r}{0.4\textwidth}
\setlength{\abovecaptionskip}{0pt}
\vspace{-15pt}
\centering
\begin{minipage}{0.9\linewidth}
{\small{
\centering
\begin{mathpar}
\!\!\!\!\!\!\!\!
\inferrule[\abductr]
{
\fun \eqdef f(\many{x_i}) : \asn{\phi_f; P_f \sep F_f}\asn{\psi_f; Q_f} \in \ctx
\\
F_f~\text{has no predicate instances}
\\
[\subst]P_f = P
\quad
F_f \neq \emp
\quad
F' \eqdef [\subst]F_f
\\
\ctx; \env; \trans{\asn{\phi; F}}{\asn{\phi; F'}}{\prog_1}  
\\
\ctx; \env; \trans{\asn{\phi; P \sep F' \sep R}}{\asn{\slQ}}{\prog_2}  
}
{
\ctx; \env; \trans{\asn{\phi; P \sep F \sep R}}{\asn{\slQ}}{\prog_1; \prog_2}  
}
\end{mathpar}
}}
\end{minipage}
\caption{\abductr rule.}
\label{fig:abductr}
\end{wrapfigure}
At this point of our derivation, we run into an issue. Ideally, we
would like to use the grayed fragment of the goal~\eqref{eq:copy1}'s
precondition, 
to fire the rule \applyr with the spec~\eqref{eq:copyrec},
\ie to make a recursive call on the tail list. 
However, the \eqref{eq:copyrec}'s precondition
requires $r$ to point to the start of that list (\code{nxt2}),
whereas in our case it still points to the start of the original list (\code{x2}).
% However, we just \emph{cannot unify} it
% with \eqref{eq:copyrec}'s precondition, 
% which requires $r$ to point to the head of the list.
% $x'$, pointed by $r'$ (which will be mapped to the goal's $r$) should
% be the \emph{same} as the first argument of the instance
% $\bl{\lseg^1(x', 0, S')}$, while in the goal's precondition \code{x2}
% and \code{nxt2} can be different pointer variables.
% it requires that the value
% $x'$, pointed by $r'$ (which will be mapped to the goal's $r$) should
% be the \emph{same} as the first argument of the instance
% $\bl{\lseg^1(x', 0, S')}$, while in the goal's precondition \code{x2}
% and \code{nxt2} can be different pointer variables.

Any programmer would know a solution to this conundrum: we
have to to write \code{nxt2} into~$r$, in order to provide a
suitable symbolic heap to make a recursive call.  
%
%% Nadia: I know that it's not how this is implemented
%% but it's much simpler this way, and the current implementation is just an optimization of this:
Emitting such a write command is a synthesis sub-goal in itself.
% whose precondition is inherited from~\eqref{eq:copy1},
% and whose postcondition is the same but for a small modification
% to the content of the gray box, necessary to make it unifiable
% with the precondition of~\eqref{eq:copyrec}.
% In fact, emitting such     a write command would be a synthesis sub-goal
% by itself, and we are going to implement it as such. 
% %
% To generate such sub-goals, we introduce a novel rule, \abductr
% (described in \autoref{sec:logic}), which attempts to
% \emph{prepare} the symbolic pre-heap by synthesizing write commands
% for adjusting constant-size symbolic footprint to become unifiable
% with inductive hypothesis' precondition.
%
To generate such sub-goals, we introduce a novel rule, \abductr, which
is shown in \autoref{fig:abductr} and attempts to \emph{prepare} the
symbolic pre-heap for the recursive call
% by synthesizing write commands for 
by adjusting constant-size symbolic footprint to become unifiable
with the recursive calls' precondition.

\begin{wrapfigure}[14]{r}{0.4\textwidth}
\vspace{-16pt}
\setlength{\abovecaptionskip}{+5pt}
\begin{center}
\begin{lstlisting}[style=numbers,xleftmargin=20pt,basicstyle=\footnotesize\ttfamily]
void listcopy (loc r) {
  let x2 = *r;
  if (x2 == 0) { } else {
    let v2 = *x2;
    let nxt2 = *(x2 + 1);
    *r = nxt2;
    listcopy(r);
    let y12 = *r;
    let y2 = malloc(2);
    *y2 = v2;
    *(y2 + 1) = y12;
    *r = y2;
  } }
\end{lstlisting}
\end{center}
\caption{Synthesized \code{listcopy}~\eqref{eq:listcopy}.}
\label{fig:copy}
\end{wrapfigure}
%
%The rule works in the following way. 
%
First, the rule inspects the preconditions of the goal and of the
% candidate specified function $\fun$ from $\ctx$ to be applied, it
cadidate callee $\fun$ from $\ctx$, it
tries to split the former into two symbolic sub-heaps, $P_f$ and
$F_f$, such that all predicate instances are contained within
$P_f$, while the rest of the heaplets (\ie, blocks and points-to
assertions) are in $F_f$.
Next, it tries to unify $P_f$ from the function spec with some
sub-heap $P$ from the goal's precondition, finding a suitable
substitution $\subst$, such that $P = [\subst]P_f$.
While doing so, it does not account for the ``remainder''
$[\subst]F_f$, which might not be immediately matched by anything in
the goal's precondition.
In order to make it match, the goal emits, as one of its premises, a
sub-goal
$\ctx; \env; \trans{\asn{\phi; F}}{\asn{\True, F'}}{\prog_1}$, whose
puspose is to synthesize a program $\prog_1$, which will serve as an
impedance matcher between \emph{some} symbolic subheap $F$ from the
original goal's pre and $F' = [\subst]F_f$.\footnote{Our
  implementation is smarter than that: it ensures that $F$ and $F'$
  have the same shape and differ only in pointers' values.}

For instance, in the specification \eqref{eq:copyrec},
$P_f = \bl{\lseg^1(x', 0, S')}$ and $F_f = \bl{r' \pts x'}$, so an
attempt to unify the former with the predicate instance in the
grayed fragment of the goal~\eqref{eq:copy1} results in the
substitution $\subst = [x' \mapsto \ccd{nxt2}, S' \mapsto S_1]$.
Applying it to the remainder of the function spec's pre, we obtain
$F' = [\subst]F_f = \bl{r' \pts \ccd{nxt2}}$.
One of the candidates to the role of $F$ from the goal's precondition
is the heaplet $\bl{r \pts \ccd{x2}}$, so the corresponding subgoal
will be of the form $\set{\ccd{r}, \ldots}; \tent{\asn{\ldots; \ccd{r} \pts
    \ccd{x2}}}{\asn{{r'} \pts \ccd{nxt2}}}$, which will produce
the write \code{*r = nxt2}.
\autoref{fig:copy} shows the eventually synthesized implementation,
with the abduced call-enabling write on line 6.

\section{\logicfull in a Nutshell}
\label{sec:logic}

Having shown \logic in action, we now proceed with giving a complete
set of its inference rules, along with statements of the formal
guarantees \logic provides \wrt synthesized imperative programs.

\begin{figure}[t]
\centering
\begin{tabular}{cc}
\begin{minipage}{0.5\linewidth}
\setlength{\abovecaptionskip}{+1pt}
\[
\!\!\!\!
{\small
\begin{array}{l@{\!\!}r@{\ \ }c@{\ \ }l}
  \text{Variable} & x, y & & \text{Alpha-numeric identifiers}
  \\[2pt]
  \text{Value} & d & & \text{Theory-specific atoms}
  \\[2pt]
  \text{Offset}   & \off & & \text{Non-negative integers}
  \\[2pt]
  \text{Expression} & e & ::= &
   d \mid x \mid e = e \mid e \wedge e \mid \neg e \mid \ldots                             
  \\[2pt]
  \text{Command} &
  c & ::= & 
            \Letz~{x} = \deref{(x + \off)} \mid  
            \deref{(x + \off)} = e \mid 
  \\[2pt]
  & & & \cskip \mid \cerror \mid \cmagic \mid
  \\[2pt]
  & & & 
        \Ifz (e) \set{c}~\Elsez \set{c} \mid
        f(\many{e_i}) \mid c; c
  \\[2pt]
  \text{Type} & t & ::= & 
  \typ{loc} \mid \typ{int} \mid \typ{bool} \mid \typ{set}
  % \\[2pt]
  % \text{Function} & \fundef & ::= & f~(\many{t_i~x_i})~\set{~c~}
  \\[2pt]
  \text{Fun. dict.} & \fctx & ::= & \epsilon \mid \fctx, f~(\many{t_i~x_i})~\set{~c~}
\end{array}
}
\]
\caption{Programming language grammar.}
\label{fig:lang}
\end{minipage}
&
\begin{minipage}{0.5\linewidth}
\[
\!\!\!\!\!\!\!\!\!\!\!\!
{\small
\begin{array}{l@{\ \ }l@{\ \ }c@{\ \ }l}
  \text{Pure assertion} & \phi, \psi, \xi, \chi & ::= &
  e                                                        
  % \True \mid \False \mid e = e \mid
  % \\[2pt] 
  % &&& \neg \phi \mid \phi \wedge \phi
  \\[2pt]
  \text{Symbolc heap} & P, Q, R & ::= & 
  \emp \mid \angled{e, \off} \pts e 
  \mid 
  \\[2pt]
  &&&\block{x}{n} \mid p(\many{x_i})                                     
      \mid P \sep Q
  \\[2pt]
  \text{Assertion} & \slP, \slQ & ::= &
  \set{\phi, P}
  \\[2pt]
  \text{Heap predicate} & \indp & ::= &
   p~(\many{x_i})~\many{\pclause{\sel_j}{\chi_j}{R_j}}
  \\[2pt]
  \text{Function spec} & \fun & ::= &
   f~(\many{x_i}) : \asn{\slP} \asn{\slQ}
  \\[2pt]
  \text{Environment} & \env & := & 
  \epsilon \mid \env, x
  \\[2pt]
  \text{Context} & \ctx & := & \epsilon \mid \ctx, \indp \mid \ctx, \fun
\end{array}
}
\]
\caption{\logic assertion syntax.}
\label{fig:logic}
\end{minipage}
\end{tabular}
\end{figure}

The syntax for the imperative language supported by \logic is given in
\autoref{fig:lang}.
The set of values includes at least integers and pointers
(isomorphic to non-negative integers). Expressions include variables,
values, boolean equality checks and additional theory-specific
expressions (\eg, integer or boolean operations).
The command \code{magic} does not appear in runnable code and is
included in the language for the purpose of a deductive synthesis
optimization, which we will explain in \autoref{sec:magic}.
The language of commands does not include loops, which are modelled
via recursive procedure calls ($f(\many{e_i})$).
Notice that for simplicity we do not provide a mechanism to return a
variable from a procedure (so the language is missing the
\code{return} command) and therefore all procedures' return type is
\code{void}. However, the result-returning discipline for a procedure
can be encoded via passing a result-storing additional pointer, as,
\eg, in Example~\eqref{eq:listcopy}.
A \emph{function dictionary} $\fctx$ is simply a list of function
definitions of the form $f~(\many{t_i~x_i})~\set{~c~}$.

The complete syntax of \logic assertions is shown in
\autoref{fig:logic}, and their meaning was explained in detail
throughout \autoref{sec:overview}. 
We only notice here that, syntactically, pure assertions $\phi$,
$\psi$, \etc coincide with the language's expressions $e$.
The lack of the distinction between the two kinds is for the sake of
uniformity and to enable the use of third-party SMT solvers without
committing to a specific first-order logic as an inherent part of
\logic.
We use a simple type system to make sure that the expressions serving
as pure formulae are of type \code{bool}, while also making sure that
set-theoretical operations, such as $\union$, do not leak to the
program level.

%\is{Make remarks about expressions and pure formulae being the same thing.}

\subsection{The Zoo of \logic Rules}
\label{sec:rules}

\autoref{fig:all} presents \emph{all} rules of \logic.
%
%
%\todo{Reorder rules according to description and remove \hyp.}
%
%
Since most of them have already made an appearance in
\autoref{sec:overview}, here, we only elaborate on the new ones, and
highlight some important aspects of their interaction.
It is convenient to split the set of rules into the following six
categories:
  
\begin{enumerate}[label=\textbf{C\arabic*}]
\item \label{ph:toplevel}
  \emph{Top-level rules} are represented by just one rule: \mkind
  (\autoref{fig:all}, bottom right). This rule is only
  applicable at the very first stage of the derivation, and it
  produced a specified symbol $f$, with the specification identical to
  the top-level goal (modulo renaming of variables to avoind name
  capturing conflicts).
  %
  % In the case of \emph{several} predicate instances
  % $p^{0}(\many{e_i})$ in the goal'pre, several subgoals will be
  % produced (one for each recursion scheme, with other predicate
  % instances in $f$'s pre ``sealed'' via $\bump{P}{}$); solving any
  % of them would solve the initial goal.
  % 
  In the case of \emph{several} predicate instances
  $p^{0}(\many{e_i})$ in the goal'pre, the
  \begin{figure}[H]
\centering
\begin{tabular}{c}
\begin{tabular}{ccc}
\begin{minipage}{0.3\linewidth}
{\footnotesize{
\centering
\begin{mathpar}
\!\!\!\!\!\!\!\!\!\!\!\!\!\!\!\!\!\!\!\!\!\!\!\!
\inferrule[\mkind]
{
f \eqdef \text{goal's name}
\\
\many{x_i} \eqdef \text{goal's formals}
\\
P_f \eqdef p^{1}(\many{y_i}) \sep \bump{P}{}
\\
%Q_f \eqdef \bump{Q}{\bullet + 1}
Q_f \eqdef \bump{Q}{}
\\
\fun \eqdef f(\many{x_i}) : \asn{\phi_f; P_f}\asn{\psi_f; Q_f}
\\
\ctx, \fun; \env; \trans{\asn{\phi; p^{0}(\many{y_i}) \sep P}}{\asn{\slQ}}{\prog}  
}
{
\ctx; \env; \trans{\asn{\phi; p^{0}(\many{y_i}) \sep P}}{\asn{\slQ}}{\prog}  
}
\end{mathpar}
}}
\end{minipage}
&
\begin{minipage}{0.3\linewidth}
{\footnotesize{
\begin{tabular}{@{\!\!\!\!\!\!\!\!\!\!\!\!\!}c}
\begin{mathpar}
\inferrule[\empr]
%
% -------------------------------- [emp]
% Γ ; {φ ; emp} ; {emp} ---> skip
{
\ev{\env, \slP, \slQ} = \emptyset
\\
\phi \Rightarrow \psi
}
{
\env; \trans{\asn{\phi; \emp}}{\asn{\psi; \emp}}{\cskip}  
}
\\
\inferrule[\incons]
{\phi \Rightarrow \bot}
{
\env; 
\trans{\asn{\phi ; P}}
{\asn{\slQ}}{\cerror}  
}
\end{mathpar}      
\end{tabular}
}}
\end{minipage}
&
\begin{minipage}{0.3\linewidth}
{\footnotesize{
\centering
\begin{tabular}{@{\!\!\!\!\!\!\!\!\!}c}
\begin{mathpar}
\inferrule[\nilvalr]
{
x \neq 0 \notin \phi
\\
\phi' \eqdef \phi  \wedge x \neq 0
\\
\ctx; \env; \trans{\asn{\phi'; \angled{x, \off} \pts e \sep P}}{\asn{\slQ}}{\prog}
}
{
\ctx; \env; \trans{\asn{\phi;
  \angled{x, \off} \pts e \sep P}}{\asn{\slQ}}{\prog}
}
\end{mathpar}
\\\\
\inferrule[\sleft]
{
\phi \Rightarrow x = y
\\
\env; \trans{[y/x]\asn{\phi; P}}{[y/x]\asn{\slQ}}{\prog}
}
{
\env; \trans{\asn{\phi; P}}{\asn{\slQ}}{\prog}
}      
\end{tabular}
}}
\end{minipage}
\end{tabular}
\\
\begin{tabular}{cc}
\begin{minipage}{0.5\linewidth}
\centering
{\footnotesize{
\begin{mathpar}
\!\!\!\!\!\!\!\!\!\!\!\!
\!\!\!\!\!\!\!\!\!\!\!\!
\inferrule[\starpar]
  % x ≠ y ∉ φ
  % Γ ; {φ ∧ x ≠ y ; x.f -> l * y.f -> l' * P} ; {ψ ; Q} ---> S
  % ------------------------------------------------------------ [*-partial]
  % Γ ; {φ ; x.f -> l * y.f -> l' * P} ; {ψ ; Q} ---> S
{
x + \off \neq y + \off' \notin \phi
\\
\phi' \eqdef \phi \wedge (x + \off \neq y + \off')
\\
\ctx; \env; 
\trans{\asn{\phi'; 
    \angled{x, \off} \pts e \sep \angled{y, \off'} \pts e' \sep P}}
{\asn{\slQ}}{\prog}  
}
{
\ctx; \env; 
\trans{\asn{\phi; 
    \angled{x, \off} \pts e \sep \angled{y, \off'} \pts e' \sep P}}
{\asn{\slQ}}{\prog}  
}
\end{mathpar}
}}
\end{minipage}
&
\begin{minipage}{0.5\linewidth}
\centering
{\footnotesize{
\begin{mathpar}
\!\!\!\!\!\!\!\!\!\!\!\!\!\!\!\!\!
\inferrule[\readr]
%
%   y is fresh   Γ,y ; [y/A]{φ ; x -> A * P} ; [y/A]{ψ ; Q} ---> S
% ---------------------------------------------------------------- [read]
%        Γ ; {φ ; x.f -> A * P} ; {ψ ; Q} ---> let y := *x.f ; S
%
{
a \in \gv{\env, \slP, \slQ}
\\
y \notin \vars{\env, \slP, \slQ}
\\
\env \cup \set{y}; \trans{[y/a]\asn{\phi; \angled{x, \off} \pts a \sep P}}{[y/a]\asn{\slQ}}{\prog} 
}
{
\ctx; \env; \trans{\asn{\phi; \angled{x, \off} \pts a \sep P}}
         {\asn{\slQ}}{\Letz y = \deref{(x + \off)}; \prog}  
}
\end{mathpar}
}}
\end{minipage}
\end{tabular}
\\
\begin{tabular}{cc}
\begin{minipage}{0.5\linewidth}
\centering
{\footnotesize{
\begin{mathpar}
\!\!\!\!\!\!\!\!\!\!\!\!\!
\inferrule[\invkind]
{
\indp \eqdef p(\many{x_i})\many{\pclause{\sel_j}{\chi_j}{R_j}}_{j \in
  1\ldots N} \in \ctx
\\
\lev < \depth
\quad
\subst \eqdef [\many{x_i \mapsto y_i}]
\quad
\vars{\many{y_i}} \subseteq \env
\\
\phi_j \eqdef \phi \wedge [\subst]\sel_j \wedge [\subst]\chi_j
\\
P_j \eqdef \bump{[\subst]R_j}{\lev + 1} \sep \bump{P}{}
\\
\forall j \in {1{\ldots}N}, \quad
\ctx; \env; \trans{\asn{\phi_j; P_j}}{\asn{\slQ}}{\prog_j}
\\
\prog \eqdef \Ifz~([\subst]\sel_1) \set{\prog_1}\!\!~\Elsez\! \set{\Ifz~([\subst]\sel_2) \ldots \Elsez~\set{\prog_N}}
}
{
\ctx; \env; \trans{\asn{\phi; P \sep \cached{p^{\lev}(\many{y_i})}}}{\asn{\slQ}}{\prog}  
}
\end{mathpar}
}}
\end{minipage}
&
\begin{minipage}{0.5\linewidth}
\centering
{\footnotesize{
\begin{mathpar}
\!\!\!\!\!\!\!\!\!\!\!\!\!\!\!\!\!\!\!\!\!\!\!\!\!\!\!
\inferrule[\closer]
    %           p(params) := { true ? b }
    % Γ ; { φ ; P } ; { ψ ; b[args/params] * Q } ---> S
    % ---------------------------------------------------- [close]
    %     Γ ; { φ ; P } ; { ψ ; p(args) * Q } ---> S
{
\indp \eqdef p(\many{x_i})\many{\pclause{\sel_j}{\chi_j}{R_j}}_{j \in
  1\ldots N} \in \ctx
\\
\lev < \depth
\\
\subst \eqdef [\many{x_i \mapsto y_i}]
\\
\text{for some}~k, 1 \leq k \leq N
\\
R' \eqdef \bump{[\subst] R_k}{\lev + 1}
\\
\ctx;\env; \trans{\asn{\slP}}
         {\asn{\psi \wedge  [\subst]\sel_k \wedge [\subst]\chi_k; Q \sep R'}}{\prog}  
}
{
\ctx;\env; \trans{\asn{\slP}}
         {\asn{\psi; Q \sep \cached{p^{\lev}(\many{y_i})}}}{\prog}  
}
\end{mathpar}
}}
\end{minipage}
\end{tabular}
\\
\begin{tabular}{cc}
\begin{minipage}{0.5\linewidth}
\centering
{\footnotesize{
\begin{mathpar}
\!\!\!\!\!\!\!\!\!\!\!\!\!\!\!\!\!\!
\inferrule[\abductr]
{
\fun \eqdef f(\many{x_i}) : \asn{\phi_f; P_f \sep F_f}\asn{\psi_f; Q_f} \in \ctx
\\
F_f~\text{has no predicate instances}
\\
[\subst]P_f = P
\\
F_f \neq \emp
\\
F' \eqdef [\subst]F_f
\\
\ctx; \env; \trans{\asn{\phi; F}}{\asn{\phi; F'}}{\prog_1}  
\\
\ctx; \env; \trans{\asn{\phi; P \sep F' \sep R}}{\asn{\slQ}}{\prog_2}  
}
{
\ctx; \env; \trans{\asn{\phi; P \sep F \sep R}}{\asn{\slQ}}{\prog_1; \prog_2}  
}
\end{mathpar}
}}
\end{minipage}
&
\begin{minipage}{0.5\linewidth}
\centering
{\footnotesize{
\begin{mathpar}
\!\!\!\!\!\!\!\!\!\!\!\!\!\!\!\!\!\!\!
\inferrule[\applyr]
{
\fun \eqdef f(\many{x_i}) : \asn{\phi_f; P_f}\asn{\psi_f; Q_f} \in \ctx
\\
R \eqTags [\sigma]P_f
% \subst \in \unifyFrameL{P_f, R, \emptyset}
% \quad\quad
\\
\phi \Rightarrow [\sigma]\phi_f
\\
\phi' \eqdef [\subst]\psi_f
\\
R' \eqdef \bump{[\subst]Q_f}{}
\\
\many{e_i} = [\subst]\many{x_i}
\\
\vars{\many{e_i}} \subseteq \env
\\
\ctx; \env; \trans{\asn{\phi \wedge \phi'; P \sep R'}}{\asn{\slQ}}{\prog}  
}
{
\ctx; \env; \trans{\asn{\phi; P \sep \cached{R}}}{\asn{\slQ}}{f(\many{e_i}); \prog}  
}
\end{mathpar}
}}
\end{minipage}
\end{tabular}
\\
\begin{tabular}{cc}
\begin{minipage}{0.5\linewidth}
{\footnotesize{
\centering
\begin{mathpar}
\!\!\!\!\!\!\!\!\!\!\!\!\!
\inferrule[\allocr]
{
R = \block{z}{n} \sep \Sep_{0 \leq i \leq n}\paren{\angled{z, i} \pts e_i} 
\quad
z \in \ev{\env, \slP, \slQ}
\\
\paren{\set{y} \cup \set{\many{t_i}}} \cap \vars{\env, \slP, \slQ} = \emptyset
\\
R' \eqdef \block{y}{n} \sep \Sep_{0 \leq i \leq n}\paren{\angled{y, i} \pts
  t_i}
\\
\ctx; \env; \trans{\asn{\phi; P \sep R'}}{\asn{\psi; Q \sep R}}{\prog}  
}
{
\ctx; \env; \trans{\asn{\phi; P}}{\asn{\psi; Q \sep \cached{R}}}{\Letz y =
  \malloc{n}; \prog}  
}
\end{mathpar}
}}
\end{minipage}
&
\begin{minipage}{0.5\linewidth}
{\footnotesize{
\centering
\begin{mathpar}
\!\!\!\! \!\!\!\!\!\!\!\!\! \!\!\!\!\!\!\!\!\! \!\!\!\!\!\!\!\!\!
\inferrule[\freer]
{
R = \block{x}{n} \sep \Sep_{0 \leq i \leq n}\paren{\angled{x, i} \pts e_i} 
\\
\vars{\set{x} \cup \set{\many{e_i}}} \subseteq \env
\\
\ctx; \env; \trans{\asn{\phi; P}}{\asn{\slQ}}{\prog}  
}
{
\ctx; \env; \trans{\asn{\phi; P \sep \cached{R}}}{\asn{\slQ}}{\free{n}; \prog}  
}
\end{mathpar}
}}
\end{minipage}
\end{tabular}
\\
\begin{tabular}{cc}
\begin{minipage}{0.7\linewidth}
{\footnotesize{
\centering
\begin{mathpar}
\!\!\!\! \!\!\!\!\!\!\!\!\! \!\!\!\!\!\!\!\!\!
\inferrule[\writer]
%
  % Γ ; {φ ; x.f -> l' * P} ; {ψ ; x.f -> l' * Q} ---> S   GV(l) = GV(l') = Ø
  % ------------------------------------------------------------------------- [write]
  % Γ ; {φ ; x.f -> l * P} ; {ψ ; x.f -> l' * Q} ---> *x.f := l' ; S
%
{
\vars{e} \subseteq \env
\\
\env; \trans{{\asn{\phi; \angled{x, \off} \pts e \sep P}}}
         {\asn{\psi; \angled{x, \off} \pts e \sep Q}} {\prog}
}
{
\left.
%\begin{array}{r@{\ }c@{\ }l}
\env; \asn{\phi; \angled{x, \off} \pts e' \sep P} \leadsto
% \arcr[2pt]
\asn{\psi; \angled{x, \off} \pts e \sep Q}
%\end{array}
~\right|~ {\deref{(x + \off)} = e; \prog}  
}
\end{mathpar}
}}
\end{minipage}
&
\begin{minipage}{0.5\linewidth}
{\footnotesize{
\centering
}}
\end{minipage}
\end{tabular}
\\
\begin{tabular}{cc}
\begin{minipage}{0.5\linewidth}
{\footnotesize{
\centering
\begin{mathpar}
\!\!\!\!\!\!\!\!\!\!\!\!\!\!\!\!\!\!\!\!\!\!\!\!
\inferrule[\hunir]
{
[\subst]R' = R
\\
\gbm{\Frameable}\paren{R'}
\\
\emptyset \neq \dom{\subst} \subseteq \ev{\env, \slP, \slQ}
\\
\env; \trans{\asn{P \sep R}}{[\subst]\asn{\psi; Q \sep R'}}{\prog}
}
{
\env; \trans{\asn{\phi; P \sep \cached{R}}}{\asn{\psi; Q \sep \cached{R'}}}{\prog}  
}
\end{mathpar}
}}
\end{minipage}
&
\begin{minipage}{0.5\linewidth}
{\footnotesize{
\centering
\begin{mathpar}
\!\!\!\!\!\!\!\!\!\!\!\!\!\!\!\!\!\!\!\!\!\!\!\!\!\!\!\!
\inferrule[\framer]
    %        (GV(Post) / GV(Pre)) * GV(R) = Ø
    %       Γ ; {φ ; P} ; {ψ ; Q} ---> S
    % ---------------------------------------- [*-intro]
    %   Γ ; {φ ; P * R} ; {ψ ; Q * R} ---> S
{
\ev{\env, \slP, \slQ} \cap \vars{R} = \emptyset
\\
\gbm{\Frameable}\paren{R'}
\\
\env; \trans{\asn{\phi; P}}{\asn{\psi; Q}}{\prog}  
}
{
\env; \trans{\asn{\phi; P \sep \cached{R}}}{\asn{\psi; Q \sep \cached{R}}}{\prog}  
}
\end{mathpar}
}}
\end{minipage}
\end{tabular}
\\
\begin{tabular}{ccc}
\begin{minipage}{0.33\linewidth}
{\footnotesize{
\centering
\begin{mathpar}
\!\!\!\!\!\!\!\!\!\!\!\!\!\!
\inferrule[\pickr]
{
y \in \ev{\env, \slP, \slQ}
\\
\vars{e} \in \env \cup \gv{\env, \slP, \slQ}
\\
\env; \trans{\asn{\phi; P}}
         {[e/y]\asn{\psi; Q}}{\prog}  
}
{
\env; \trans{\asn{\phi; P}}
         {\asn{\psi; Q}}{\prog}  
}
\end{mathpar}
}}
\end{minipage}
&
\begin{minipage}{0.33\linewidth}
{\footnotesize{
\centering
\begin{mathpar}
\!\!\!\!\!\!\!\!\!\!\!\!\!\!\!\!\!\!\!\!\!\!
\inferrule[\hyp]
  %   Γ ; {φ ∧ φ1 ; P} ; {ψ' ; Q'} ---> S
  %           s = unify(φ1, φ2)
  %     {ψ' ; Q'} = subst({ψ ; Q}, s)
  % --------------------------------------- [Hypothesis-Unify]
  % Γ ; {φ ∧ φ1 ; P} ; {ψ ∧ φ2 ; Q} ---> S
{
[\subst]\psi' = \phi'
\\
\emptyset \neq \dom{\subst} \subseteq \ev{\env, \slP, \slQ}
\\
\env; \trans{\asn{\slP}}{[\subst]\asn{\slQ}}{\prog}
}
{
\env; \trans{\asn{\phi \wedge \phi'; P}}{\asn{\psi \wedge \psi'; Q}}{\prog}  
}
\end{mathpar}
}}
\end{minipage}
&
\begin{minipage}{0.33\linewidth}
{\footnotesize{
\centering
\begin{mathpar}
\!\!\!\!\!\!\!\!\!\!\!\!\! \!\!\!\!\!\!\!\!\!
\inferrule[\sright]
    % X ∈ GV(post) / GV (pre)
    % Γ ; {φ ; P} ; {[l/X]ψ ; [l/X]Q} ---> S
    % --------------------------------------- [subst-R]
    % Γ ; {φ ; P} ; {ψ ∧ X = l; Q} ---> S
{
x \in \ev{\env, \slP, \slQ}
\\
\ctx; \env; \trans{\asn{\slP}}{[e/x]\asn{\psi, Q}}{\prog}
}
{
\ctx; \env; \trans{\asn{\slP}}{\asn{\psi \wedge x = e; Q}}{\prog}
}
\end{mathpar}
}}
\end{minipage}
\end{tabular}
\end{tabular}
\caption{All rules of \logic. \gbm{\text{Grayed parts}} are
  parameters; instantiating them differently yields different rules. 
  %
  % $\cached{\text{Boxed frames}}$ will be explained in
  % Section~\ref{sec:pruning}.
}
\label{fig:all}
\end{figure}

  rule non-deterministically picks one recursion scheme, whereas other
  predicate instances in $f$'s precondition get ``sealed'' via tag
  erasure $\bump{P}{}$).
  
\item \label{ph:axiom}
  \emph{Terminals} include \empr and \incons.
  These rules conclude a successful derivation
  by emitting \code{skip} or \code{error}, respectively.  

\item \label{ph:universal}
  % 
  % LogicalRules.StarPartial,
  % LogicalRules.NilNotLval,
  % LogicalRules.Inconsistency,
  % FailRules.PostInconsistent,
  % OperationalRules.ReadRule,
  % 
  \emph{Normalization rules} include \nilvalr, \sleft, \starpar, and
  \readr. The role of these rules is to normalize the precondition:
  eliminate ghosts and equal variables, and make explicit the
  assumptions encoded in the spatial part. As we discuss in
  \autoref{sec:inversible}, a characteristic feature of those rules is
  that they cannot cause their subderivation fail.
  % (\ie, if it fails after having the
  % rule applied, it would have failed even without it), 
  % thus, they can
  % be applied at any moment (after~\ref{ph:toplevel}).

\item \label{ph:unfolding}
  \emph{Unfolding rules} are the rules targeted specifically
  to predicate instances.
  The four basic unfolding rules are \invkind, \closer, \abductr, and \applyr.
  As hinted before, the rule \invkind picks an instance
  $p^{\lev}(\many{e_i})$ in the goal's pre, and produces a number of
  subgoals, 
  % each of which should succeed, 
  emitting a conditional that accounts for all of $p$'s clauses. \closer acts
  symmetrically on the post. Both rules increment the tag for the
  instances in the substituted clause bodies, 
  and are not applicable for $\lev$, which is equal or greater
  than the set boundary on the depth of unfoldings, defined by the
  global synthesis parameter $\depth$.\footnote{Having the
    derivation/search depth bound by $\depth$ affects the completeness
    but not soundness of \logic.}
  In our practical experience $\depth$ taken to be 
  equal $1$ suffices for most of realistic case studies.
  The rule \abductr prepares the heap for a call application (via
  \applyr), as described in \autoref{sec:abduct}.
  %
  % Notice that a version of \applyr in \autoref{fig:all} emits
  % \emph{two} outcomes for each successfully unified call of $f$: with
  % the $[\subst]Q_f$'s tags either \emph{set to} $\lev + 1$ or
  % \emph{locked}. Neither of the options affects safety or termination
  % of the resulting program, but allow for different schemas of
  % synthesizing \emph{chained} calls.
  %
  % \is{Refer to an example with tree-flattening in
  %   \autoref{sec:evaluation}.}
  %
  Note that the rules \hunir and \framer have been slightly generalized \wrt to what
  was shown in \autoref{fig:basic-rules} and \autoref{fig:frame-write},
  and now include a parametric premise $\gbm{\Frameable}$.
  If we define $\gbm{\Frameable}$
  to return $\True$ \emph{only} for predicate instances,
  we obtain \emph{unfolding versions} of these rules,
  which deal specifically with predicate instances.

\item \label{ph:flat}
  \emph{Flat} rules deal with with the flat part of the heap
  (which consists of block heaplets and heaplets of the form $x \pts e$).
  They include \allocr, \freer, \writer, 
  as well as the \emph{flat versions} of \hunir and \framer,
  with $\gbm{\Frameable}$ defined to return $\True$ for flat heaplets.
  %
% Together, they are sufficient to synthesize programs with \emph{flat}
% specs (\ie, without predicates), akin to what we considered in
% \autoref{sec:swap}--\ref{sec:purec}.

\item \label{ph:pure} \emph{Pure synthesis} rules are responsible for
  instantiating existentials. In \autoref{fig:all} this category is
  represented by three rules. The first one is nondeterministic
  \pickr, which is difficult to implement efficiently.
  In order to make the presentation complete, we include two
  remaining, somewhat redundant but more algorithmic, pure synthesis
  rules introduced in \autoref{sec:pure-synthesis} for the sake of
  efficient search, namely \hyp and \sright.

\end{enumerate}

% \noindent
%
Let us refer to rules that emit a sub-program as \emph{operational}
and to the rest (\ie, to the rules that only change the assertions) as
\emph{non-operational}.
The rules from \autoref{fig:all} form the basis of \logic as a proof
system, allowing for possible extensions for the sake of optimization
or handling pure constraints. We make them intentionally
\emph{declarative} rather than \emph{algorithmic}, which is essential
for establishing the logic's soundness, leaving a lot of freedom for
possible implementations.
Such decisions have to be made, for instance, when engineering an
implementation of \abductr or \pickr.
The algorithmic aspects of \logic, \eg, non-deterministic choice of a
frame or a unifying substitution, are handled by the procedures from
\autoref{sec:algo}.

% Except for \abductr, most of the rules
% from~\ref{ph:toplevel}--\ref{ph:flat} have direct counterparts in
% versions of SL employed in state-of-the-art symbolic executions
% tools~\cite{Nguyen-al:VMCAI07,Rowe-Brotherston:CPP17,Jacobs-al:NFM11}.

\subsection{Formal Guarantees for the Synthesized Programs}
\label{sec:formal}

The definition of the operational semantics of the \logic language
(\autoref{fig:lang}) follows the standard RAM model.
\emph{Heaps} (ranged over by $h$) are represented as partial finite maps from
pointers to values, with support for pointer arithmetic (via offsets).
A function call is executed within its own stack frame $(c, s)$,
where $c$ is the next command to reduce and $s$ is a \emph{store}
recording the values of the function's local variables and parameters.
A stack $S$ is a sequence of stack frames, and a \emph{configuration}
is a pair of a heap and a stack.
The small-step operational semantics relates a function dictionary
$\fctx$, and a pair of configurations:
$\opsem{\fctx}{\angled{h, S}}{\angled{h', S'}}$,\footnote{We elide the
  transition rules for brevity; similar rules can by found, \eg, in
  the work by~\citet{Rowe-Brotherston:CPP17}.}
with $\Opsemt$ meaning its reflexive-transitive closure.

Let us denote the \emph{valuation} of an expression $e$ under a store
$s$ as $\denot{e}_s$.
Let $\interp$ range over \emph{interpretations}---mappings from
user-provided predicates $\indp$ to the relations on heaps and vectors
of values.
To formally define the \emph{validity} of Hoare-style specs in \logic,
we use the standard definition of the satisfaction relation
$\satsl{}{}$ as a relation between pairs of heaps and stores,
contexts, interpretations, and \logic assertions without
ghosts.
For instance, the following \logic definitions are traditional for
Separation Logics with interpreted
predicates~\cite{Nguyen-al:VMCAI07,Berdine-al:APLAS05}:
\begin{itemize}
\item $\satsl{\angled{h, s}}{\asn{\phi; \emp}}$ \Iff
$\denot{\phi}_s = \True$ and $\dom{h} = \emptyset$.

\item $\satsl{\angled{h, s}}{\asn{\phi; \block{x}{n}}}$ \Iff
$\denot{\phi}_s = \True$ and $\dom{h} = \emptyset$.

\item $\satsl{\angled{h, s}}{\asn{\phi; \angled{e_1, \off} \pts e_2}}$
  \Iff $\denot{\phi}_s = \True$ and $\dom{h} = \denot{e_1}_s + \off$
  and $h(\denot{e_1}_s + \off) = \denot{e_2}_s$.

\item $\satsl{\angled{h, s}}{\asn{\phi; P_1 \sep P_2 }}$ \Iff
$\exists h_1, h_2, h = h_1 \union h_2$ and
$\satsl{\angled{h_1, s}}{\asn{\phi; P_1}}$ and
$\satsl{\angled{h_2, s}}{\asn{\phi; P_2 }}$.

\item $\satsl{\angled{h, s}}{\asn{\phi; p(\many{x_i})}}$ \Iff
$\denot{\phi}_s = \True$ and
$\indp \eqdef p(\many{x_i})\many{\pclause{\sel_j}{\chi_j}{R_j}} \in \ctx$ and
$\angled{h, \many{\denot{x_i}_s}} \in \interp(\indp)$.
\end{itemize}
\noindent
Therefore, blocks have no spatial meaning, except for serving as an
indicator on the memory fragments that we allocated and can be
disposed.

To equip ourselves for the forthcoming proof of \logic soundness, we
provide a definition of (sized) specification \emph{validity}, which
relies on the notion of a heap size $|h| \eqdef |\dom{h}|$:

% \begin{definition}
% \label{def:min-heap}
% For $h$, $s$, $\many{e_i}$, $\fctx$, $f$, such that
% $\opsemt{\fctx}{\angled{h, (f(\many{e_i}), s) \cdot
%     \epsilon}}{\angled{h', (\ccd{skip}, s') \cdot \epsilon}}$, we call
% \emph{minimal safe subheap} $h_f$ the smallest sub-heap
% $h_f \subseteq h$, such that
% $\opsemt{\fctx}{\angled{h_f, (f(\many{e_i}), s) \cdot
%     \epsilon}}{\angled{h'_f, (\ccd{skip}, s') \cdot \epsilon}}$.
% \end{definition}
%
\begin{definition}[Sized validity]
\label{def:gvalidity}
We say a specification $\ctx; \env; \asn{\slP}~c~\asn{\slQ}$ is
\emph{$n$-valid} \wrt the function dictionary $\fctx$ whenever
for any $h, h', s, s'$ such that
\begin{itemize}

\item $|h| \leq n$,

\item
  $\opsemt{\fctx}{\angled{h, (c, s) \cdot \epsilon}}{\angled{h',
      (\ccd{skip}, s') \cdot \epsilon}}$, and

\item $\dom{s} = \env$ 
and 
$\exists \subst_{\text{gv}} = [\many{x_i \mapsto d_i}]_{x_i \in \gv{\env, \slP,
    \slQ}}$
such that
$\satsl{\angled{h, s}}{[\subst_{\text{gv}}]\slP}$,
\end{itemize}
it is the case that
$\exists \subst_{\text{ev}} = [\many{y_j \mapsto d_j}]_{y_j \in
  \ev{\env, \slP, \slQ}}$, such that
$\satsl{\angled{h', s'}}{[\subst_{\text{ev}} \union
  \subst_{\text{gv}}]}\slQ$

% \begin{enumerate}[label=(\alph*)]
% \item \label{hs}
%  $\exists \subst_{\text{ev}} = [\many{y_j \mapsto d_j}]_{y_j \in \ev{\env, \slP,
%     \slQ}}$, such that
% $\satsl{\angled{h', s'}}{[\subst_{\text{ev}} \union
%   \subst_{\text{gv}}]}\slQ$, and

% \item\label{gv} for any call to a function $f$ from $\fctx$ during the
%   execution above, the \emph{principal safe subheap}~$h_f$ for~$f$
%   (\ie, a heap satisfying $f$'s pre $\asn{\slP_f}$ according to
%   $\ctx$, restricted \emph{to tagged predicate instances of the
%     form $p^{\lev}(\many{e_j})$ in $\slP_f$}) at that moment is such
%   that $|h_f| < |h|$.
% \end{enumerate}

\end{definition}
% %
% \begin{definition}[Validity]
% \label{def:validity}
% A spec $\ctx; \env; \asn{\slP}~c~\asn{\slQ}$ is \emph{valid} \wrt the
% dictionary $\fctx$ \Iff it is partially valid \wrt $\fctx$ for any heap .
% \end{definition}
%
Definition~\ref{def:gvalidity} is rather peculiar in that it defines a
standard Hoare-style soundness \wrt pre/post, while doing
that only for heaps of size smaller than $n$.
This is an important requirement to stage a well-founded inductive
argument for our soundness proof of \logic-based synthesis in the
presence of recursive calls.
By introducing the explicit sizes to the definition of validity, we
can make sure that we only deal with calls on strictly decreasing
subheaps \wrt the heap size when invoking functions (auxiliary ones or
recursive self). 
To make full use of this idea, we add one more auxiliary definition,
stratifying the shape of function dictionaries \wrt their
specifications.

\begin{definition}[Coherence]
\label{def:coh}
A dictionary $\fctx$ is $n$-\emph{coherent} \wrt a context $\ctx$
($\coh{\fctx, \ctx, n}$) \Iff
\begin{itemize}
\item $\fctx = \epsilon$ and $\funs(\ctx) = \epsilon$, or

\item $\fctx = \fctx', f~(\many{t_i~x_i})~\set{~c~}$, and
  $\ctx = \ctx', f~(\many{x_i}) : \asn{\slP} \asn{\slQ}$, and
  $\coh{\fctx', \ctx', n}$, and 
  $\ctx'; \set{\many{x_i}}; \asn{\slP}~c~\asn{\slQ}$ is $n$-valid \wrt
  $\fctx'$, or

\item $\fctx = \fctx', f~(\many{t_i~x_i})~\set{~c~}$, and
  $\ctx = \ctx', f~(\many{x_i}) : \asn{\phi; \bump{P}{} \sep
    p^{1}(\many{e_i})} \asn{\bump{\slQ}{}}$, and $\coh{\fctx', \ctx', n}$, and 
  
  $\ctx; \set{\many{x_i}}; \asn{\bump{P}{} \sep
    p^{1}(\many{e_i})}~c~\asn{\bump{\slQ}{}}$ is $n'$-valid \wrt
  $\fctx$ for all $n' < n$.
\end{itemize}
\end{definition}

That is, coherence is defined inductively on the dictionary/context
shape (regarding specified functions and ignoring predicate
definitions). A possibility of a (single) recursive definition $f$ is
taken into the account in its last option.
In that last clause, recursive calls to the function $f$ from $\fctx$
may only take place on heaps of size \emph{strictly smaller} than $n$,
whereas there is no such restriction for the calls to user-provided
auxiliary functions, that can be invoked on heaps of sizes up to $n$.
%
% Let us recall that in this work we assume that specification contexts
% $\ctx$ contain only well-founded inductive predicates---intuitively,
% every unfolding of an instance in a pre will enable a recursive call.

Soundness of \logicfull is stated by Theorem~\ref{thm:soundness},
which defines validity of the synthesized program \emph{for any} size
of the input heap $n$, assuming that the validity of all recursive
calls of the synthesized programs is only established for $n' < n$,
where the case $n = 0$ means no calls take place and forms the base of
the inductive reasoning.

\begin{theorem}[Soundness of \logic]
\label{thm:soundness}
For any $n$, $\fctx'$, if
\begin{enumerate}[label=\emph{(\roman*)}]

\item\label{s1} $\ctx'; \env; \trans{\asn{\slP}}{\asn{\slQ}}{\prog}$ for a goal named
$f$ with formal parameters~$\env \eqdef \many{x_i}$, and 

\item\label{s2} $\ctx'$ is such that $\coh{\fctx', \ctx', n}$, and

\item\label{s3} for all $p^0(\many{e_i}), \phi; P$,
such that $\asn{\slP} = \asn{\phi; p^0(\many{e_i}) \sep P}$,
taking $\fun \eqdef f(\many{x_i}) : \asn{\phi; p^1(\many{e_i}) \sep
  \bump{P}{}}\asn{\bump{\slQ}{}}$,\\
$\ctx', \fun; \env; \asn{\slP}~\prog~\asn{\slQ}$ is $n'$-valid for
all $n' < n$ wrt.
$\fctx \eqdef \fctx', f~(\many{t_i~x_i})~\set{~\prog~}$,
\end{enumerate}
then $\ctx'; \env; \asn{\slP}~\prog~\asn{\slQ}$ is $n$-valid \wrt
$\fctx$.
\end{theorem}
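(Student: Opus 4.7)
The plan is to prove the theorem by strong induction on $n$, with an inner structural induction on the derivation $\ctx'; \env; \trans{\asn{\slP}}{\asn{\slQ}}{\prog}$, performing case analysis on the last rule applied. Premise \emph{(iii)} serves as the inductive hypothesis for the outer induction on $n$, supplying validity of the recursive self-call on any strictly smaller heap. The base case $n=0$ is vacuous or reduces to the \empr and \incons cases. For the inner induction, the strategy is to show, for each rule, that if every sub-derivation yields a valid specification (under the appropriate $n$ or $n'<n$), then so does the conclusion.

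The routine cases divide naturally along the six categories laid out in \autoref{sec:rules}. The terminals \empr and \incons are immediate from Definition~\ref{def:gvalidity}: the former because $\phi \Rightarrow \psi$ under an empty heap and trivial existentials; the latter because $\phi \Rightarrow \bot$ means the pre is unsatisfiable, making the specification vacuously valid. The normalization rules \nilvalr, \sleft, \starpar, \readr preserve the satisfaction relation (either because they add logical consequences to $\phi$, or because the read step of the operational semantics exactly mirrors the substitution on $a$). The flat rules \allocr, \freer, \writer match the small-step reductions for \texttt{malloc}, \texttt{free}, and pointer writes, respectively. The structural rules \framer and \hunir follow from the commutativity and associativity of $\sep$ together with standard disjoint-support reasoning. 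The unfolding rules \invkind and \closer follow from the semantic clause of $\satsl{}{}$ for predicate instances, with \invkind's case split on clause guards being exhaustive by construction. The pure rules \hyp, \sright, \pickr reduce to first-order reasoning about existential instantiations.

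The principal case is \applyr. Here, the operational step is a function call $f(\many{e_i})$, which in the RAM semantics transfers control to the callee's body on the sub-heap $R$. For a call to an auxiliary function $f'$ drawn from $\ctx'$, coherence (premise~\emph{(ii)}) supplies $n$-validity of $f'$'s specification directly, so the callee's footprint transforms according to its pre/post, and the continuation $\prog$ is handled by the inner IH on the updated goal. For a recursive call to the synthesized function $f$ itself, the matched sub-heap $R$ must unify via $\eqTags$ with $f$'s precondition $P_f$, which by the construction in \mkind contains the tagged instance $p^{1}(\many{y_i})$. Any heap satisfying such a tag-$1$ instance is a proper sub-heap of some tag-$0$ instance (inherited from the top-level pre), which by the well-foundedness check on user-defined predicates (at least one points-to heaplet per recursive clause) is of size strictly less than $n$. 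This justifies invoking premise~\emph{(iii)} at $n'<n$ to discharge the call, and the inner IH handles the continuation on the residual state.

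The remaining cases are \abductr and \mkind. For \abductr, the two premises give, by the inner IH, valid programs $\prog_1$ (reshaping $F$ into $F'$) and $\prog_2$ (completing the derivation from the reshaped state); their sequential composition is valid because the intermediate assertion describes the exact configuration handed off between them. For \mkind used at the root, premise~\emph{(iii)} applied to the sub-derivation over $\ctx',\fun$ yields the desired $n$-validity directly. The main obstacle throughout is the \applyr case for the self-call: a rigorous argument requires a separate invariant formalizing that a precondition carrying a predicate instance $p^{\lev}(\many{y_i})$ with $\lev\geq 1$ is satisfiable only on heaps strictly smaller than those that satisfy the tag-$0$ version. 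I would isolate this as an auxiliary lemma proved by tracking how \invkind (and, symmetrically, \closer) are the sole sources of positive tags, together with the syntactic well-foundedness restriction, ensuring the size-decrease needed to make the outer induction on $n$ go through.
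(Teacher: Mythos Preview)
Your proposal is correct and follows essentially the same approach as the paper: an outer well-founded induction on the heap size $n$ with an inner structural induction on the derivation, case-splitting on the last rule applied, and using premise~\emph{(iii)} together with the tag discipline to discharge the recursive \applyr case on a strictly smaller sub-heap. The auxiliary lemma you propose isolating (that a tag-$1$ predicate instance is only reachable after an \invkind unfolding and hence corresponds to a strictly smaller heap) is exactly the informal argument the paper makes inline for the \applyr case; factoring it out is a reasonable refinement but not a departure.
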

\begin{proof}
  By the top-level induction on $n$ and by inner induction on the
  structure of derivation
  $\ctx'; \env; \trans{\asn{\slP}}{\asn{\slQ}}{\prog}$.
  We refer the reader to 
  %
  %\ifext{\autoref{sec:appendix}}{the supplementary material} 
  %
  \autoref{sec:appendix}
  for the details.
\end{proof}
Theorem~\ref{thm:soundness} states that a program derived via \logic
constitutes a \emph{valid} spec with its goal (\ie, all writes, reads
and deallocations in it are \emph{safe} \wrt accessing heap pointers),
assuming that recursive calls, if present, are made on reduced
sub-heaps.
The technique we used---allowing for safe calls done only on smaller
sub-heaps---is reminiscent to the one employed in size-change
termination-based analyses~\cite{Lee-al:POPL01}, which ensure that
every infinite sequence of calls would cause infinite descent of some
values, leading to the following result.
\begin{theorem}[Termination of \logic-synthesized programs]
  A program, which is derived via \logic rules from a spec that uses
  only well-founded predicates, terminates.
\end{theorem}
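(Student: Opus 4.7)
The plan is to reduce termination to a well-founded induction on heap size, by combining the observation that the synthesized language is loop-free with the size-decrease property already built into coherence (Definition~\ref{def:coh}) and the \applyr/\mkind rules.

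First, I would note that the language of \autoref{fig:lang} has no iteration constructs, only straight-line composition, conditionals, and (possibly recursive) function calls. Therefore the only way a derived program can diverge is through an infinite chain of function invocations. So the goal reduces to showing that no such infinite chain exists at run time.

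Next, I would proceed by outer induction on the structure of the function dictionary $\fctx$ produced alongside the derivation. For a base case $\fctx = \epsilon$ there is nothing to call, hence nothing can diverge. For the inductive step, I would split on which clause of Definition~\ref{def:coh} witnesses the extension: in the non-recursive clause the new function $f$ calls only functions in the strictly smaller dictionary $\fctx'$, which terminate by the outer induction hypothesis, so $f$ terminates. In the recursive clause, I would perform an inner well-founded induction on the input heap size $n = |h|$: by Definition~\ref{def:coh}, every self-call of $f$ runs on a heap of size $n' < n$, and every non-self call goes to $\fctx'$ and thus terminates by the outer IH. Since $\mathbb{N}$ is well-founded under $<$, no infinite descending chain of self-calls is possible, so all reductions eventually reach \texttt{skip}.

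The remaining and hardest step is to connect this abstract argument to the statement of the theorem, which talks about \emph{well-founded predicates} rather than about coherence. For this I would show that, whenever the user-supplied predicates satisfy the well-foundedness assumption from \autoref{sec:rules} (every recursive predicate clause contains at least one points-to heaplet), the top-level rule \mkind in conjunction with the tag discipline of \invkind, \closer, and \applyr forces every successful application of \applyr to consume at least one points-to heaplet strictly inside the caller's precondition footprint. Concretely: the recursive hypothesis installed by \mkind has tag $1$, and \applyr unifies its body $P_f$ against the goal's precondition only under the tag-aware equality $\eqTags$, which requires the goal to contain an unfolded instance (produced by \invkind) whose clause body has been stripped of at least one points-to cell by prior \framer/\hunir steps. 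Formalizing this is the main obstacle: I expect it to require a small lemma stating that between two successive \applyr steps on the same recursive head, at least one points-to heaplet of the caller's footprint is consumed, from which the strict size decrease required by Definition~\ref{def:coh} follows, completing the reduction to the well-founded induction sketched above.
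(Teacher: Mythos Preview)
Your high-level plan is on the right track: the language is loop-free, so only call chains can diverge, and one wants to argue well-founded descent on heap size. That matches the paper's reasoning. However, your route through coherence (Definition~\ref{def:coh}) is a detour that does not work. Coherence and sized validity are \emph{partial-correctness} devices: $n$-validity (Definition~\ref{def:gvalidity}) says that \emph{if} the program terminates from a heap of size at most $n$, \emph{then} the postcondition holds. The clause of Definition~\ref{def:coh} you cite does not say ``every recursive self-call at run time lands on a heap of size $n' < n$''; it says the recursive spec is $n'$-valid for all $n' < n$, which is an assumption fed into the soundness induction, not a run-time size bound you can extract. So the sentence ``by Definition~\ref{def:coh}, every self-call of $f$ runs on a heap of size $n' < n$'' is precisely where the argument breaks.

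The paper's proof is much more direct and does not touch coherence at all. It argues termination straight from the \emph{tag discipline}: (a) \applyr erases the tags on its result heap ($\bump{[\subst]Q_f}{}$), so after any call---auxiliary or recursive---no further \applyr can target the produced predicate instances, which already cuts off infinite chains of auxiliary calls; and (b) a recursive self-call is only enabled after \invkind has unfolded the $0$-tagged instance in the precondition, and because predicates are assumed well-founded (each recursive clause contains at least one points-to heaplet alongside its inner instances), the sub-heap on which the recursive call fires is strictly smaller than the caller's. Your last paragraph gestures at this mechanism, but the details are off: the size decrease comes from the unfolding itself exposing a points-to heaplet separate from the inner instance, not from ``prior \framer/\hunir steps'' consuming one. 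Strip out the coherence layer and argue directly from tag erasure plus well-foundedness, and your proof collapses to the paper's.
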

\begin{proof}
  The only source of non-termination is calls. Auxiliary function calls
  cannot be chained infinitely as they erase tags on their post-heaps.
  Every recursive self-call is applicable after opening a well-founded
  instance, and hence is done on a smaller sub-heap, erasing tags on
  its post-heap.
\end{proof}

% The desired final soundness result is obtained via universal
% quantification in Definition~\ref{def:gvalidity}.
% %
% \begin{definition}[Validity]
% \label{def:validity}
% We say a spec $\ctx; \env; \asn{\slP}~c~\asn{\slQ}$ is \emph{valid} if
% it is valid for any heap.
% \end{definition}
% \begin{theorem}[\logic Soundness]
%   A program derived via \logic constitutes a \emph{valid} spec with
%   its goal.
% \end{theorem}

\section{Basic \logic-Powered Synthesis Algorithm}
\label{sec:algo}

In this section we show how to turn \logic from a declaratively
defined system (\autoref{fig:all}) into a search algorithm with
backtracking for deriving provably correct imperative programs.

\begin{wrapfigure}[5]{r}{0.42\textwidth}
\vspace{-13pt}
\centering
$
{\small{
\begin{array}{r@{\ \ \in\ \ }r@{\ \ }c@{\ \ }l}
% \text{Goal} & 
\goal & \goals & ::= & \angled{f, \ctx, \env, \asn{\slP}, \asn{\slQ}}
\\[2pt] 
% \text{Continuation} &
\kont & \konts & \eqdef & \paren{\text{Command}}^{n} \to \text{Command}
\\[2pt]
% \text{Subderivation} &
\deriv & \derivs & ::= & \angled{\many{\goal_i}, \kont}
% \\[2pt]
% % \text{Alternatives} &
% \alt & \alts & ::= & \many{\deriv_k}
\\[2pt]
% \text{Rule} &
\Rule & \Rules & \eqdef & \goals \partialfn \pow{\derivs}
\end{array}
}}
$
%
% \caption{Goals, derivations, and rules.}
% \label{fig:rules}
\end{wrapfigure}
\paragraph{Encoding Rules and Derivations}
\label{sec:rule-encoding}
The display on the right shows an algorithmic representation of \logic
derivations and rules. To account for the top-level goal (which
mandates the program synthesizer to generate a runnable function), we
include the function name $f$ into the goal $\goal$, % which is a tuple,
whose other components are the context $\ctx$, environment $\env$
(initialized with $f$'s formals), pre $\asn{\slP}$ and post
$\asn{\slQ}$.
A successful application or a rule $\Rule$ results in one or more
alternative \emph{sub-derivations} $\deriv_k$. 
Several alternatives arise when the rule exhibits non-determinism
(\eg, due to choosing a sub-heap to a unifying substitution),
and are explored by a search engine one by one, until it finds one that succeeds. 
This is customary for a conversion of a declarative inference system
to an algorithmic one~\cite[Chapter 16]{Pierce:TAPL}.
% In case of several
% alternatives,\footnote{Having several alternatives to satisfy is the
  % case of, \eg, $\closer$, which one can think of as a form of
  % $\vee$-introduction rule.}
%
% it is sufficient to solve \emph{any} of them to satisfy the current
% goal.
 
In its turn, each sub-derivation is a pair. Its first component
contains zero (if $\Rule$ is a terminal) or more sub-goals, which
\emph{all} need to be solved (think of a conjunction of a rule's
premises).
The second component of the sub-derivation is a \emph{continuation}
$\kont$, which combines the \emph{list} of commands, produced as a
result of solving subgoals, into a final program.
The arity of a continuation (length of a list it accepts) is the same
as a number of sub-goals the corresponding rule emits (typically one
or more). Zero-arity means that the continuation has been produced by
a terminal, and simply emits a constant program. % (\code{skip}).
For non-operational rules (\eg, \framer),
$\kont \eqdef \lambda [c]. c$. For operational rules $\kont$ typically
prepends a command to the result (\eg, \writer), or generates a
conditional statement (\invkind).
Therefore, the synthesizer procedure constructs the program 
% operating in a Continuation-Passing Style, 
by applying the continuations of rules
that have succeeded earlier, to the resulting programs by their
subgoals, on the ``backwards'' walk of the recursive search, in the
style of logic programming~\cite{Mellish-Hardy:84}.

\paragraph{The algorithm}
\label{sec:algo}
{\setlength{\belowcaptionskip}{-10pt} 
\begin{figure}[t]
{\small{
\centering
\begin{tabular}[t]{c@{\!\!\!\!\!\!\!\!\!\!\!\!\!\!\!\!\!\!
\!\!\!\!\!\!\!\!\!\!\!\!\!\!\!\!\!\!
\!\!\!\!\!\!\!\!\!\!\!\!\!\!\!\!\!\!
\!\!\!\!\!\!\!\!\!\!\!\!\!\!\!\!\!\!
\!\!\!\!\!\!\!\!\!\!\!\!\!\!\!\!\!\!
\!\!\!\!\!\!\!\!\!\!\!\!\!\!\!\!\!\!
\!\!\!\!\!\!\!\!\!\!\!\!\!\!\!\!\!\!
\!\!\!\!\!\!\!\!\!\!\!\!\!\!\!\!\!\!
\!\!\!\!\!\!\!\!\!\!\!\!\!\!\!\!\!\!
\!\!\!\!\!\!\!\!\!\!\!\!
}c}
\begin{minipage}{1.0\linewidth}
\begin{algorithm}[H]
\vspace{-0.5pt}
\SetAlgoNoLine
\Input{Goal $\goal = \angled{f, \ctx, \env, \asn{\slP}, \asn{\slQ}}$}
\Input{List $\trulez$ of available rules to try}
\Result{Program $\prog$, such that\\
$\phantom{abcdef}\ctx; f~(\env)\set{c}; \env \asn{\slP}\prog\asn{\slQ}$}
\Fn{$\synt~(\goal, \trulez)$} { 
    $\tryRules(\trulez, \goal)$
}
\Fn{$\tryRules~(\rulez, \goal)$} { 
  \Switch {$\rulez$} {
    \lCase{$[~]$~\tarr}{
      $\Fail$
    }
    \label{ln:mt-fail}
    \Case{$\Rule :: \rulez'$ ~\tarr}{
      $\sders = \gbm{\filterDervs\left(\wbm{\Rule(\goal)}\right)}$ \;
      \label{opt:tcom}
      \uIf{$\isEmpty(\sders)$}{$\tryRules(\rulez')$}
      \uElse{$\tryAlts(\sders, \Rule, \rulez', \goal)$}
    }
  }
}
\caption{$\synt~(\goal: \goals, \trulez: \Rules^{*})$}\label{proc:syn}
\label{alg:syn}
\end{algorithm}    
\end{minipage}
&
\begin{minipage}{0.7\linewidth}
\SetAlgorithmName{}{}{}
\RestyleAlgo{style}
\setlength{\algoheightrule}{0.0pt}
\setlength{\algotitleheightrule}{0.0pt}
\begin{algorithm}[H]
\vspace{8pt}
\setcounter{AlgoLine}{11}
\SetAlgoNoLine
\Fn{$\tryAlts~(\ders, \Rule, \rulez, \goal)$} {
  \Switch {$\ders$} {
    \lCase{$[~]$~\tarr}{
      \label{opt:inv}$\gbm{\mathbf{if}~\isInvert(\Rule)~
        \mathbf{then}~\Fail~\mathbf{else}\!}\!~\tryRules(\rulez, \goal)\!\!\!\!$}
    \Case{$\angled{\goalz, \kont} :: \ders'$~\tarr}{
      \Switch{$\solve(\goalz, \kont)$} {
        \lCase{$\Fail$~\tarr}{$\tryAlts(\ders', \Rule, \rulez, \goal)$}
        \lCase{$\prog$~\tarr}{
          \label{opt:magic}$\gbm{\mathbf{if}~c = \mathtt{magic}~
            \mathbf{then}~\tryAlts(\ders', \Rule, \rulez,
            \goal)~\mathbf{else}\!}\!~\prog\!\!\!\!$}
      }
    }
  }
} 
\Fn{$\solve~(\goalz, \kont)$} {
  $\progs := [~]$\;
  $\pickRules = \lambda \goal. \gbm{\mathit{phasesEnabled}~?~\nextRules(\goal)~:}~\allrules$\;
  \label{opt:phases}\For{$\goal \leftarrow \goalz; \prog = \synt(\goal, \pickRules(\goal)); \prog \neq \Fail$}{
    $\progs := \progs~+\!\!+~[\prog]$\;
  }
  
  \lIf{$|\progs| < |\goalz|$}{
    $\Fail~\mathbf{else}~\kont(\progs)$
  }
}
%\caption*{\textcolor{white}{$\synt~(\goal, \trulez)$}}
\end{algorithm}    
\end{minipage}
\end{tabular}
}}
% \caption{Basic synthesis algorithm and auxiliary functions.
%   \graybox{Grayed parts} are optimizations (\autoref{sec:optimizations}).}
% \label{fig:algo}
%\vspace{-15pt}
\end{figure}}

The pseudocode of our synthesis procedure is depicted by
\autoref{alg:syn}.
Let us ignore the grayed fragments in the pseudocode for now and agree
to interpret the code as if they were not present. Those fragments
corresponds to optimizations, which we describe in detail in
\autoref{sec:optimizations}.
The algorithm is represented by four mutually-recursive functions:
%
% , with $\synt$ being the main \emph{entry point}.
% %
% In plain English, each of the four functions takes one specific role \wrt
% processing available rules and sub-derivations as follows:
%
\begin{itemize}
\item $\synt~(\goal, \trulez)$ is invoked initially on a top-level
  goal, with $\trulez$ instantiated with $\allrules$--all rules from
  \autoref{fig:all}. It immediately passes control to the first
  auxiliary function, $\tryRules$.

\item $\tryRules~(\rulez, \goal)$ iterates through the list $\rulez$
  of remaining rules, applying each one to the goal $\goal$. Once an
  application of some rule $\Rule$ succeeds (\ie, it emits a non-empty
  set of alternative sub-derivations $\sders$), those are passed for
  solving to $\tryAlts$. The case when a rule application emits no
  sub-derivations, is considered a search failure, so the next rule is
  tried from the list, until no more rules remains
  (\autoref{ln:mt-fail}), at which point the search fails.

\item $\tryAlts~(\ders, \Rule, \rulez, \goal)$ 
  recursively processes the list of alternative sub-derivations
  $\ders$, generated by the rule $\Rule$. If the list is exhaused
  (\autoref{opt:inv}), $\tryRules$ is invoked to try the rest of the
  rules $\rulez$. Otherwise, $\solve$ is invoked for an alternative to
  solve all its sub-goals $\goalz$ and apply the continuation $\kont$.
  In the case of success (\autoref{opt:magic}), the resulting program
  $\prog$ is returned.

\item $\solve~(\goalz, \kont)$ tries to solve all subgoals given to
  it, by invoking $\synt$ recursively with a suitable (full) list of
  rules, essentially, restarting the search problem ``one level
  deeper'' into the derivation. Unless some of the goals failed, their
  results are combined via~$\kont$.\footnote{The actual implementation
    is more efficient than that and uses a \emph{breakable} loop.}

\end{itemize}

%\noindent
%
% When does the algorithm terminate? Notice, that the only rules that do
% not issut sub-goals are \empr and \incons. If applicable, their effect
% on a goal, will be to emit a single alternative with with no
% sub-goals, and $\kont \eqdef \lambda \_. \mathtt{skip}$ or
% $\kont \eqdef \lambda \_. \mathtt{error}$, correspondingly.
The algorithm  implements is a \naive backtracking search that explores the space of all 
valid \logic derivations rooted at the initial synthesis goal.
The search proceeds in a \emph{depth-first} manner: 
it starts from the root (the initial goal)
and always extends the current incomplete derivation 
by applying a rule to its leftmost \emph{open} leaf
(\ie, a leaf that is not a terminal application).
The algorithm terminates when the derivation is \emph{complete}, 
\ie, it has no open leaves.

In our experience, the algorithm implementation terminated on all
inputs we provided.
It seems probable that this can be established by considering the
following tuples, ordered lexicographically, as a termination measure
for a given goal $\goal$:
$\langle$
\#~0- or 1-tagged predicate instances;
\#~ heaplets in pre and post, for which there is no matching one in
the post/pre;
\#~existentials;
\#~''flat'' heaplets;
\#~conjuncts in the pre;
\#~of points-to heaplets, whose disjointness or non-null-ness is not
captured in the pre
$\rangle$.
Notice that each rule from \autoref{fig:all}, except for \invkind and
\closer reduces this value for the emitted sub-goals. Applicability of
those two rules is handled via $\depth$ parameter.

That said, we have not proven termination of $\synt$ and leave it a
conjecture.

%\todo{Shall we say something about completeness here?}

% \subsection{Making use of the Phase Dichotomy}

% \subsection{Completeness Conjecture}
% \label{sec:completeness}

% \begin{itemize}
% \item Need an oracle for pure theories
% \item Every recursive call is on the state that doesn't overlap with
%   the result of previous calls.
% \end{itemize}

% \is{Some discussion on completeness \wrt chained calls.}

% With the system as described here, we cannot synthesize:

% \begin{verbatim}
% { true ; tree(x, s) ** z :-> y ** lseg(y, 0, acc)}
% void tree_flatten(loc x, loc z)
% { s1 =i s ++ acc;  z :-> t ** lseg(t, 0, s1)}

% \end{verbatim}

% as it requires applying a recursive function on the result of another
% recursive call. So we should probably state completeness only in the
% presence of non-overlapping recursive calls (specify this class of
% programs).

% Moved all stuff from here to algorithm.tex.

\section{Optimizations and Extensions}
\label{sec:optimizations}

The basic synthesis algorithm presented in \autoref{sec:algo} is
a \naive backtracking in the space of all valid \logic derivations.
Note that this is already an improvement over a blind search in the
space of all programs: some incorrect programs are excluded from
consideration a-priori, such as, \eg, programs that read from
unallocated heap locations. In this section we show how to further
prune the search space by identifying unsolvable goals and avoiding
their exploration.

\subsection{Invertible Rules}
\label{sec:inversible}

Our first optimization relies on a well-known fact from proof theory~\cite{Liang-Miller:TCS09}
that certain proof rules are \emph{invertible}:
applying such a rule to any derivable goal produces a goal that is still derivable.
In other words, applying invertible rules eagerly without backtracking
does not affect completeness.
Algorithm~\ref{alg:syn} leverages this fact in \autoref{opt:inv}:
when all sub-derivations of an invertible rule $\Rule$ fail,
the algorithm need not backtrack and try other rules,
since the failure cannot be due to $\Rule$ and must have been caused by a choice made earlier in the search.

% \begin{wrapfigure}[6]{r}{0.45\textwidth}
% \setlength{\abovecaptionskip}{0pt}
% \vspace{-10pt}
% \centering
% \begin{minipage}{0.9\linewidth}
% {\footnotesize{
% \centering
% \begin{mathpar}
% \!\!\!\!\!\!\!
% \inferrule[\writeri]
% %
% %   Γ ; {φ ; P} ; {ψ ; x.f -> Y * Q} ---> S   GV(l) = Ø  Y is fresh
% %  --------------------------------------------------[write]
% %  Γ ; {φ ; P} ; {ψ ; x.f -> e * Q} ---> S; *x.f := e
% %
% {
% \vars{e} \subseteq \env
% \quad\quad
% y \notin \vars{\env, \slP, \slQ}
% \\
% \env; \trans{\asn{\slP}}
         % {\asn{\psi; \angled{x, \off} \pts y \sep Q}}{\prog}  
% }
% {
% \env; \trans{\asn{\slP}}
         % {\asn{\psi; \angled{x, \off} \pts e \sep Q}}{\prog; \deref{(x + \off)} = e}  
% }
% \end{mathpar}
% %
% }}
% \end{minipage}
% \caption{Invertible \writeri rule.}
% \label{fig:write-inv}
% \end{wrapfigure}
%
%
In \logic, the normalization rules---\readr, \starpar, \nilvalr, and \sleft---are invertible.
The effect of these rules on the goal is either to change a ghost into
a program-level variable or to strengthen the precondition; no rule
that is applicable to the original goal can become inapplicable as a
result of this modification, which is confirmed by inspection of all
rules in~\autoref{fig:all}.
%
% \todo{Shall we discuss the invertible write here
  % (\autoref{fig:write-inv}).}

\subsection{Multi-Phased Search}
\label{sec:phasing}

Among the rules of \logic described in \autoref{sec:logic},
the unfolding rules are focused on transforming (and eventually eliminating)
instances of inductive predicates,
while flat rules are focused on other types of heaplets (\ie, points-to and blocks).
We observe that if the unfolding rules failed to eliminate
a predicate instance from the goal,
there is no point to apply flat rules to that goal.
It is easy to show that the flat rules can neither eliminate predicates from the goal, 
nor enable previously disabled unfolding rules:
the only unfolding rule that matches on flat heaplets is \applyr,
but those heaplets need not be ``prepared'' by the flat rules,
since that's what \abductr is for.

Following this observation, without loss of completeness,
we can split the synthesis process into two phases:
the \emph{unfolding} phase, where flat rules are disabled,
and the \emph{flat} phase, which only starts  when the goal contains no more predicate instances,
and hence unfolding rules are inapplicable.
This optimization is implemented in \autoref{opt:phases} of Algorithm~\ref{alg:syn}.
As a result, some unsolvable goals will be identified early, in the unfolding phase.
For example, the following goal:
\begin{equation*}
\set{y, a, b} ; \tent{\asn{y \pts b \sep a \pts 0 ;
    }} {\asn{y \pts u \sep u \pts 0 \sep \lseg^1(u, 0, S)}}
\end{equation*}
will fail immediately without exploring fruitless transformations on its flat heap,
since no unfolding rules are applicable (assuming $\depth = 1$).

\subsection{Symmetry reduction}
\label{sec:pruning}

Backtracking search often explores all reorderings of
a sequence of rule applications, even if they \emph{commute},
\ie, the order of applications does not change the end sub-goal.
As an examples, consider the following goal:
\begin{equation*}
\set{x, y, a, b} ; \tent{\asn{x \pts a \sep y \pts b \sep a \pts 0
    }} {\asn{x \pts a \sep y \pts b \sep b \pts 0}}
\end{equation*}
Framing out $x \pts a$ and then $y \pts b$, reveals the unsolvable goal
$\teng{\asn{a \pts 0}}{\asn{b \pts 0}}$;
upon backtracking, the \naive search would try the two applications of \framer
in the opposite order, leading to the same result.

We implemented a \emph{symmetry reduction} optimization
to eliminate redundant backtracking of this kind.
To this end, we keep track of the \emph{footprint} of each rule application, 
\ie, the sub-heaps of its goal's pre and post that the application modifies.
This enables us to identify whether two sequential rule applications commute.
Next, we impose a total order on rule applications;
\autoref{opt:tcom} of Algorithm~\ref{alg:syn} rejects a new rule application $\Rule$
if it commutes with an earlier application $\Rule'$ in the current derivation,
but comes before $\Rule'$ in the total order.

\subsection{Early Failure Rules}
\label{sec:magic}

Sometimes on can identify an unsolvable goal by analyzing its pre
and post.
For example, the goal
\begin{equation*}
\set{x, y} ; \tent{\asn{a = 0 ; x \pts a ;
    }} {\asn{a = u \wedge u \neq 0; x \pts u }}
\end{equation*}
is unsolvable because its pure postcondition is logically \emph{inconsistent} with the precondition.
To leverage this observation and eliminate redundant backtracking,
we extend \logic with \emph{failure rules},
which fire when they identify such unsolvable goals.
Each failure rule is a terminal one, so it prevents further
exploration of the (unsolvable) goal.
Unlike other terminals, 
which conclude a successful derivation with \code{skip} or \code{error},
a failure rule emits a special \emph{spurious program} \code{magic}.
% which signals to the search engine that this solution is spurious.
Algorithm~\ref{alg:syn} intercepts any appearance of \code{magic} in \autoref{opt:magic},
and treats the derivation as a unsuccessful.
All failure rules are also invertible,
hence the effect is to backtrack an application of an earlier rule.

Our set of failure rules is shown in \autoref{fig:fail}.
The rule \postinr identifies a goal where the the pure postcondition is inconsistent
with the precondition.
This is safe because during the derivation both assertions can only become stronger 
(as a result of unfolding rules);
also, even if the postcondition still contains existentials,
no instantiation of those existentials can produce a formula that is consistent with
(let alone implied by) the precondition.
The rule \postinl fires on a goal where the pure postcondition (which is free of existentials)
is not implied by the precondition;
this rule only applies when the precondition is free of predicate applications,
and hence its pure part cannot be strengthened any further.
The rule \heapunrf fires when the spatial pre and post contian only points-to heaplets,
but the \emph{left-hand sides} of these heaplets cannot be unified;
in this case neither \hunir nor \writer can make the heaplets match.
Note that it is important for completeness that failure rules are
checked \emph{after} \incons: if the pure precondition is
inconsistent, the derivation should not fail, but should instead emit
\code{error}.

\begin{figure}[t]
\setlength{\abovecaptionskip}{-2pt}
\setlength{\belowcaptionskip}{-10pt}
\centering
\begin{tabular}{c}
% \begin{tabular}{c@{\ \ }c@{\ \ }c@{\ \ }c}
\begin{tabular}{c@{\ \ }c@{\ \ }c}
\begin{minipage}{0.3\linewidth}
{\footnotesize{
\centering
\begin{mathpar}
\!\!\!\!\!\!\!\!\!\!
\inferrule[\postinr]
{
\phi \wedge \psi \Rightarrow \bot
}
{
\ctx; \env;  \trans{\asn{\phi; P}} {\asn{\psi, Q}}{\cmagic}  
}
\end{mathpar}
}}
\end{minipage}
&
\begin{minipage}{0.3\linewidth}
{\scriptsize{
\centering
\begin{mathpar}
\!\!\!\!\!\!\!\!\!\!\!\!
\inferrule[\postinl]
{
P~\text{has no pred. instances}
\\
\ev{\env, \slP, \slQ} = \emptyset 
\quad
\neg\paren{\phi \Rightarrow \psi}
}
{
\ctx; \env;  \trans{\asn{\phi; P}} {\asn{\psi, Q}}{\cmagic}  
}
\end{mathpar}
}}
\end{minipage}
&
% \begin{minipage}{0.25\linewidth}
% {\scriptsize{
% \centering
% \begin{mathpar}
% \!\!\!\!\!\!\!\!\!\!\!\!\!\!\!\!
% \inferrule[\heapunra]
% %
% {
% Q = \block{x}{n} \sep Q'
% }
% {
% \ctx; \env; \trans{\asn{\slP}} {\asn{\psi, Q}}{\cmagic}  
% }
% \end{mathpar}
% %
% }}
% \end{minipage}
% &
% \begin{minipage}{0.25\linewidth}
% {\scriptsize{
% \centering
% \begin{mathpar}
% \!\!\!\!\!\!\!\!\!\!\!\!\!\!\!\!\!\!\!\!\!\!\!\!
% \inferrule[\heapunrf]
% %
% {
% P = \block{x}{n} \sep P'
% }
% {
% \ctx; \env;  \trans{\asn{\phi, P}} {\asn{\slQ}}{\cmagic}  
% }
% \end{mathpar}
% %
% }}
% \end{minipage}
\begin{minipage}{0.3\linewidth}
{\scriptsize{
\centering
\begin{mathpar}
%\!\!\!\!\!\!\!\!\!
\inferrule[\heapunrf]
{
P, Q~\text{have no pred. instances or blocks}
\\
\mathsf{unmachedHeaplets}(P, Q)
}
{
\ctx; \env;  \trans{\asn{\phi, P}} {\asn{\psi, Q}}{\cmagic}  
}
\end{mathpar}
}}
\end{minipage}
\end{tabular}
\end{tabular}
\caption{Failure rules.}
\label{fig:fail}
\end{figure}

\subsection{Extensions}
\label{sec:extensions}

We wrap up this section with a description of two \logic extensions
used by our implementation in order to expand the class of programs it can synthesize.

\subsubsection{Auxiliary Functions and \emph{\applyr} Rule}

The presented in \autoref{fig:all} version of \applyr, which erases the
tags from the callee's post, hurts the framework's completeness. While
it is not unsafe to employ the predicates from the procedure call's
postcondition in further procedure calls, one cannot ensure that they
denote ``smaller heaplets'' and thus that the program
terminates.\footnote{A similar issue is reported in the work
  by~\citet{Rowe-Brotherston:CPP17} on verifying termination of
  procedural programs.}
To circumvent this limitation for particularly common scenarios, we
had to introduce one divergence between \tool and \logic as shown in
\autoref{fig:all}.
Specifically, in the tool, we implemented support for
\emph{stratified} \emph{chained} auxiliary function calls, \ie, calls
on a heap resulted from another, preceding, call (\eg, in flatten
w/append and insertion sort).
To allow for them, we had to implement slightly different versions of
\mkind/\applyr rules, with an additional alternative in their
premises, which instead of \emph{erasing} tags of the corresponding
part of the post ($\bump{Q_f}{}$) would \emph{increment} them:
$\bump{Q_f}{\bullet + 1}$.
This would prevent a call of the \emph{same} function on the resulting
heap fragment, but would enable calls of auxiliary functions, whose
specs' pre feature predicates with matching \emph{higher-level} tags.
Eventually, due to incrementation, no applicable functions would have
left in the context~$\ctx$. While this extension is unlikely to break
the \logic soundness and termination results (due to the limit of
chained applications, enabled by growing tags), it would require us to
generalize the well-foundedness argument in~\autoref{sec:formal}, and
in the interest of time we did not carry out this exercise.

% \todo{We have to say somewhere before eval that \applyr rule can be used to call auxiliary functions,
% not only for recursive calls;
% feel free to say it here or earlier, when introducing \applyr}

\subsubsection{Branch Abduction}

\begin{wrapfigure}[6]{r}{0.4\textwidth}
\setlength{\abovecaptionskip}{0pt}
\vspace{-15pt}
\centering
\begin{minipage}{0.9\linewidth}
{\small{
\centering
\begin{mathpar}
\!\!\!\!\!\!\!\!\!\!
\inferrule[\branchr]
{
\ctx; \env;  \trans{\asn{\phi \wedge \psi; P}} {\asn{\slQ}}{c_1}
\\
\ctx; \env;  \trans{\asn{\phi \wedge \neg\psi; P}} {\asn{\slQ}}{c_2}  
}
{
\ctx; \env;  \trans{\asn{\phi; P}} {\asn{\slQ}}{\Ifz~(\psi)~c_1~\Elsez~c_2}  
}
\end{mathpar}
}}
\end{minipage}
\caption{\branchr rule.}
\label{fig:branchr}
\end{wrapfigure}
As described, \logic rules only emit conditional statements when unfolding an inductive 
predicate instance in the goal's precondition via \invkind.
This prevents our framework from synthesizing some useful functions,
in particular, those that branch on the content on a data structure rather than its shape.
This source of incompleteness can easily be mitigated by adding a rule \branchr (\autoref{fig:branchr}),
which is always applicable and generates a conditional statement with a non-deterministically chosen guard.
Of course, in practice picking the guard blindly is not feasible.
Our implementation employs a variation of a popular technique in program synthesis,
known as \emph{condition} (or \emph{branch}) \emph{abduction}%
~\cite{Alur-al:TACAS17,Kneuss-al:OOPSLA13,Leino-Milicevic:OOPSLA2012,Polikarpova-al:PLDI16}.
Instead of emitting conditionals eagerly,
branch abduction tries to detect when the current program under consideration
is a promising candidate for becoming a branch of a conditional,
and then \emph{abduces} a guard that would make this branch satisfy the specification.

The \logic variation of branch abduction piggy-backs on the failure rule \postinl (\autoref{sec:magic}),
which detect a goal whose pure postcondition $\psi$ has no existentials 
and does not follow from the precondition $\phi$.
Instead of rejecting this goal as unsolvable,
branch abduction searches a small set of pure formulas 
(all atomic formulas over program variables in $\Gamma$);
if it can find a formula $\phi'$ such that $\phi \wedge \phi' \Rightarrow \psi$,
it abduces $\phi'$ as the branch guard for the current goal.

For a simple example, consider the following synthesis goal
that corresponds to computing a lower bound of two integers $x$ and $y$: 
\begin{equation*}
\set{x, y} ; \tent{\asn{r \pts 0}} {\asn{m \le x \wedge m \le y; r \pts m }}
\end{equation*}
We first apply \pickr with $[m \mapsto x]$, arriving at the goal
$\tent{\asn{r \pts 0}} {\asn{x \le x \wedge x \le y; r \pts x }}$.
At this point, the postcondition is invalid, 
so branch abduction fires and infers a guard $x \le y$ for the current derivation.
It also emits a new goal for synthesizing the \code{else} branch,
where the negation of the abduced condition is added to the precondition:
$\tent{\asn{\neg(x \le y) ; r \pts 0}} {\asn{m \le x \wedge m \le y; r \pts m}}$.
If the synthesis of the \code{else} branch succeeds, 
the two branches are joined by a conditional;
otherwise branch abduction fails.

\newcommand{\maxtime}{40\xspace}
\newcommand{\supersec}{four\xspace}
\newcommand{\tototal}{8\xspace}
\newcommand{\importantOpt}{invertible rules\xspace}
\newcommand{\toimportant}{5\xspace}
\newcommand{\xnatural}{six\xspace}

\section{Implementation and Evaluation}
\label{sec:evaluation}

We implemented \logic-based synthesis as a tool, called \tool, in
Scala, using \tname{Z3}~\cite{deMoura-Bjorner:TACAS08} as the back-end
SMT solver via the~\tname{ScalaSMT}
library~\cite{Cassez-Sloane:Scala17}.\footnote{The tool sources can be
  found at \url{https://github.com/TyGuS/suslik}.}
%
% \footnote{\ifext{The link to the tool's
%   GitHub repository is available in \autoref{sec:link}.}{The tool repository can
% be found at \url{https://github.com/TyGuS/suslik}.}}
%
% \footnote{The link to the tool's
%   GitHub repository is available in
%   \ifext{\autoref{sec:link}.}{Appendix A of the de-anonymized
%     supplementary material.} }
%
We evaluated our implementation with the goal of answering the following research questions:
\begin{enumerate}
\item \emph{Generality:} Is \tool general enough to synthesize a range
  of nontrivial programs with pointers?
\item \emph{Utility:} How does the size of the inputs required by \tool compare to the size of the generated programs?
Does \tool require any additional hints apart from pre- and post-conditions?
What is the quality of the generated programs? 
\item \emph{Efficiency:} Is it efficient? 
What is the effect of optimizations from \autoref{sec:optimizations} on synthesis times?
\item \emph{Comparison with existing tools:} How does \tool fare in
  comparison with existing tools for synthesizing heap-manipulating
  programs, specifically,
  \tname{ImpSynt}~\cite{Qiu-SolarLezama:OOPSLA17}?
\end{enumerate}

\begin{table}[t]
\begin{center}
% \small
\footnotesize
% \resizebox{\textwidth}{!}{
\begin{tabular}{@{} r|c| cc| cccccc| c @{}}
\head{Group} & \head{Description} & \head{Code} & \head{Code/Spec} & \head{Time} & \head{T-phase} & \head{T-inv} & \head{T-fail} & \head{T-com} & \head{T-all} & \head{T-IS} \\	

\hhline{===========}

\multirow{2}{*}{\parbox{1cm}{\center{Integers}}} & swap two & 12 & 0.9x & $<0.1$ & $<0.1$ & $<0.1$ & $<0.1$ & $<0.1$ & $<0.1$ &  \\
 & min of two\textsuperscript{2} & 10 & 0.7x & 0.1 & 0.1 & 0.1 & $<0.1$ & 0.1 & 0.2 &  \\
\hline\multirow{9}{*}{\parbox{1cm}{\center{Linked List}}} & length\textsuperscript{1,2} & 21 & 1.2x & 0.4 & 0.9 & 0.5 & 0.4 & 0.6 & 1.4 & 29x \\
 & max\textsuperscript{1} & 27 & 1.7x & 0.6 & 0.8 & 0.5 & 0.4 & 0.4 & 0.8 & 20x \\
 & min\textsuperscript{1} & 27 & 1.7x & 0.5 & 0.9 & 0.5 & 0.4 & 0.5 & 1.2 & 49x \\
 & singleton\textsuperscript{2} & 11 & 0.8x & $<0.1$ & $<0.1$ & $<0.1$ & $<0.1$ & $<0.1$ & $<0.1$ &  \\
 & dispose & 11 & 2.8x & $<0.1$ & $<0.1$ & $<0.1$ & $<0.1$ & $<0.1$ & $<0.1$ &  \\
 & initialize & 13 & 1.4x & $<0.1$ & 0.1 & 0.1 & $<0.1$ & 0.1 & $<0.1$ &  \\
 & copy\textsuperscript{3} & 35 & 2.5x & 0.2 & 0.3 & 0.3 & 0.1 & 0.2 & - &  \\
 & append\textsuperscript{3} & 19 & 1.1x & 0.2 & 0.3 & 0.3 & 0.2 & 0.3 & 0.7 &  \\
 & delete\textsuperscript{3} & 44 & 2.6x & 0.7 & 0.5 & 0.3 & 0.2 & 0.3 & 0.7 &  \\
\hline\multirow{3}{*}{\parbox{1cm}{\center{Sorted list}}} & prepend\textsuperscript{1} & 11 & 0.3x & 0.2 & 1.4 & 83.5 & 0.1 & 0.1 & - & 48x \\
 & insert\textsuperscript{1} & 58 & 1.2x & 4.8 & - & - & - & 5.0 & - & 6x \\
 & insertion sort\textsuperscript{1} & 28 & 1.3x & 1.1 & 1.8 & 1.3 & 1.2 & 1.2 & 74.2 & 82x \\
\hline\multirow{5}{*}{\parbox{1cm}{\center{Tree}}} & size & 38 & 2.7x & 0.2 & 0.3 & 0.2 & 0.2 & 0.2 & 0.3 &  \\
 & dispose & 16 & 4.0x & $<0.1$ & $<0.1$ & $<0.1$ & $<0.1$ & $<0.1$ & $<0.1$ &  \\
 & copy & 55 & 3.9x & 0.4 & 49.8 & - & 0.8 & 1.4 & - &  \\
 & flatten w/append & 48 & 4.0x & 0.4 & 0.6 & 0.5 & 0.4 & 0.4 & 0.6 &  \\
 & flatten w/acc & 35 & 1.9x & 0.6 & 1.7 & 0.7 & 0.5 & 0.6 & - &  \\
\hline\multirow{3}{*}{\parbox{1cm}{\center{BST}}} & insert\textsuperscript{1} & 58 & 1.2x & 31.9 & - & - & - & - & - & 11x \\
 & rotate left\textsuperscript{1} & 15 & 0.1x & 37.7 & - & - & - & - & - & 0.5x \\
 & rotate right\textsuperscript{1} & 15 & 0.1x & 17.2 & - & - & - & - & - & 0.8x \\
\hline

\hhline{===========}
\multicolumn{10}{l}{}
\\[-8pt]
\multicolumn{10}{l}{\textsuperscript{1} From~\cite{Qiu-SolarLezama:OOPSLA17}\quad
\textsuperscript{2} From~\cite{Leino-Milicevic:OOPSLA2012}\quad
\textsuperscript{3} From~\cite{Qiu-al:PLDI13}}
\vspace{-15pt}
\end{tabular}
% }
\end{center}
\caption{
Benchmarks and \tool results.
For each benchmark, we report the 
size of the synthesized \head{Code} (in AST nodes) and the ratio \head{Code/Spec} of code to specification;
as well as synthesis times (in seconds):
with all optimizations enabled (\head{Time}), 
without phase distinction (\head{T-phase}),
without invertible rules (\head{T-inv}),
without early failure rules (\head{T-fail}),
without the commutativity optimization (\head{T-com}),
and without any optimizations (\head{T-all}).
\head{T-IS} reports the ratio of synthesis time in \tname{ImpSynt} to \head{Time}.
``-'' denotes timeout of \timeout seconds.
}
\label{fig:evaluation}
\end{table}

\subsection{Benchmarks}
\label{sec:benchmarks}

In order to answer these questions, 
we assembled a suite of \exCount programs listed in \autoref{fig:evaluation}.
The benchmarks are grouped by the main data structure they manipulate: 
integer pointers,
singly linked lists,
sorted singly linked lists,
binary trees,
and binary search trees.

To facilitate comparison with existing work,
% and to avoid bias in benchmark selection and specifications, % This might be too much?
most of the programs are taken from the literature on synthesis and verification of heap-manipulating programs:
the \tname{ImpSynt} synthesis benchmarks~\cite{Qiu-SolarLezama:OOPSLA17},
the \tname{Jennisys} synthesis benchmarks~\cite{Leino-Milicevic:OOPSLA2012},
and the \tname{Dryad} verification benchmarks~\cite{Qiu-al:PLDI13}.
We manually translated these benchmarks into the input language of \tool,
taking care to preserve their semantics.
% The specification language of \tname{Jennisys} is quite different from separation logic;
% nevertheless, in our experience the translation was relatively straightforward.
\tname{Dryad} and \tname{ImpSynt} use the \tname{Dryad} dialect of separation logic as their specification language,
hence the translation in this case was relatively straightforward.
% The main difference between \tname{Dryad} and \tool
% is that the former supports recursive functions in specifications.
As an example, consider an \tname{ImpSynt} specification and its \tool
equivalent in \autoref{fig:spec-comparison}.
The ``\code{??}'' are part of the \tname{ImpSynt} spec language,
denoting unknown holes to be filled by the synthesizer. 
The main difference between the two pre-/post-condition pairs
is that the \tname{Dryad} logic supports recursive functions such as \T{len}, \T{min}, and \T{max};
in \tool this information is encoded in more traditional SL style:
by passing additional ghost parameters to the inductive predicate \T{srtl}. 
The extra precondition $0 \le n \wedge 0 \le k \wedge k \le 7$ in
\tool 
corresponds to implicit axioms in \tname{ImpSynt}
(in particular, the condition on $k$ is due to its encoding of list
elements as unsigned 3-bit integers---there is no such restriction in \tool).
In addition to benchmarks from the literature, 
we also added several new programs that show-case interesting features
of \tool.

\begin{figure}[t]
\setlength{\abovecaptionskip}{0pt}
\setlength{\belowcaptionskip}{-10pt}
\begin{minipage}{.5\textwidth}
% \centering
\begin{lstlisting}[basicstyle=\footnotesize\ttfamily,morekeywords={requires,ensures,return,old}]
loc srtl_insert(loc x, int k)
requires srtl(x)
ensures srtl(ret) /\ 
  len(ret) == old(len(x)) + 1 /\
  min(ret) == (old(k) < old(min(x)) 
    ? old(k) : old(min(x))) /\
  max(ret) == (old(max(x)) < old(k) 
    ? old(k) : old(max^(x)))
{
  if (cond(1)) {
    loc ?? := new;
    return ??;
  } else {
    statement(1);
    loc ?? := srtl_insert(??, ??);
    statement(1);
    return ??;
  }
}
\end{lstlisting}        
\end{minipage}%
\begin{minipage}{0.5\textwidth}
% \centering
\begin{lstlisting}[basicstyle=\footnotesize\ttfamily]
{ 
  0 <= n /\ 0 <= k /\ k <= 7 ; 
  ret :-> k ** srtl(x, n, lo, hi) 
}
void srtl_insert(loc x, loc ret)
{ 
  n1 == n + 1 /\ 
  lo1 == (k <= lo ? k : lo) /\ 
  hi1 == (hi <= k ? k : hi) ; 
  ret :-> y ** srtl(y, n1, lo1, hi1) 
}
\end{lstlisting}
\end{minipage}
\caption{(left) The \tname{ImpSynt} input for the sorted list insertion;
(right) The \tool input for the same benchmark.}\label{fig:spec-comparison}
\end{figure}

%TODO: explain how different benchmarks exercise different features, such as branch abduction and using auxiliary functions?

\subsection{Results}
\label{sec:results}

Evaluation results are summarized in \autoref{fig:evaluation}.
All experiments were conducted on a commodity laptop
% (3.1 GHz Intel Core i7 Mac Book Pro with 16GB RAM)
(2.7 GHz Intel Core i7 Lenovo Thinkpad with 16GB RAM).

\subsubsection{Generality and Utility}

Our experiment confirms that \tool is capable of synthesizing programs
that manipulate a range of heap data structures, including nontrivial
manipulations that require reasoning about both the shape and the
content of the data structure, such as insertion into a binary search tree.
We manually inspected all generated solutions, as well as their
accompanying \logic derivations, and confirmed that they are indeed
correct.\footnote{In the future, we plan to output \logic derivations
  as SL proofs, checkable by a third-party system such as
  VST~\cite{Appel:ESOP11}.}
Perhaps unsurprisingly, some of the solutions were not entirely intuitive:
as one example, the synthesized version of list copy, in a bizarre yet valid move,
\emph{swaps the tails} of the original list and the copy at each recursive
call!
%
%\todo{Is this discussed elsewhere? Should we give other examples?}
%
% \is{Nah, I think this is sufficient.}

Two of the programs in \autoref{fig:evaluation} make use of auxiliary functions:
``insertion sort'' calls ``insert'',
and ``tree flatten w/append'' calls the ``append'' function on linked lists.
The specifications of auxiliary functions have to be supplied by the user
(while their implementations can, of course, be synthesized independently).
Alternatively, tree flattening can be synthesized without using an auxiliary function,
if the user supplies an additional list argument that plays the role of an accumulator
(see ``tree flatten w/acc'').
As such, \tool shares the limitation of all existing synthesizers for recursive functions:
they require the initial synthesis goal to be inductive,
and do not try to discover recursive auxiliary functions
(which is a hard problem, akin to lemma discovery in theorem proving).

% TODO: also say which benchmarks require conditon abduction?

For simple programs specification sizes are mostly comparable with the size of the synthesized code, 
whereas more complex benchmarks is where declarative specifications really shine:
for example, for all Tree programs, the specification is at most half the size of the generated code.
Three notable outliers are ``prepend'', ``rotate left'', and ``rotate right'',
whose implementations are relatively short, 
while the specification we inherited from \tname{ImpSynt} describes the effects of the functions on the minimum and maximum of the list/tree.
% making it almost four times more verbose than the code.
Note that the specification sizes we report exclude the definitions of inductive predicates,
which are reusable, and are shared between the benchmarks.

\subsubsection{Efficiency}

\tool has proven to be efficient in synthesizing a
variety of programs: all \exCount benchmarks are synthesized within \maxtime seconds, 
and all but \supersec of them take less than a second.

In order to assess the impact on performance of various optimizations described in \autoref{sec:optimizations},
\autoref{fig:evaluation} also reports synthesis times with each optimization \emph{disabled}:
the column \head{T-phase} corresponds to eliminating the distinction between phases;
% , so that all rules are enabled at all times;
\head{T-inv} corresponds to ignoring rule invertibility;
% , so that all rules have to backtrack;
\head{T-fail} corresponds to dropping all failure rules;
\head{T-com} corresponds to disabling the symmetry reduction; % commutativity optimization; 
finally, \head{T-all} corresponds to a variant of \tool with \emph{all} the above optimizations disabled.
The results demonstrate the importance of optimizations for nontrivial programs:
\tototal out of \exCount benchmarks time out when all optimizations are disabled.
The simpler benchmarks (\eg,~\code{swap}) do not benefit from the optimizations at all,
since they do not exhibit a lot of backtracking.
At the same time, all three BST benchmarks time-out as a result of disabling even a single optimization.
% Relatively speaking, the most significant optimization is \importantOpt:
% without this feature \toimportant out of \exCount benchmarks time out;
% at the same time, 
% each of the optimizations is useful 
% % and prevents at least one benchmark from timing out.
% and makes a significant difference in running time for at least one benchmark.

\subsubsection{Comparison with Existing Synthesis Tools}
\label{sec:comparison}

We compare \tool with the most closely related prior work on \tname{ImpSynt}~\cite{Qiu-SolarLezama:OOPSLA17}.
% in terms of generality, user input, and efficiency.
Out of the 14 benchmarks from~\cite{Qiu-SolarLezama:OOPSLA17} successfully synthesized by \tname{ImpSynt},
we excluded 5 that are not structurally recursive (4 of them use loops, and \code{bst_del_root} uses non-structural recursion);
the other 9 were successfully synthesized by \tool.
The \emph{qualitative difference} in terms of the required user input is immediately obvious 
from the representative example in \autoref{fig:spec-comparison}:
in addition to the declarative specification,
\tname{ImpSynt} requires the user to provide an implementation \emph{sketch},
which fixes the control structure of the program, the positions of function calls, and the number of other statements.
These additional structural constraints are vital for reducing the size of the search space in \tname{ImpSynt}.
Instead, \tool prunes the search space by leveraging the structure inherent in separation logic proofs,
allowing for more concise, purely declarative specifications.

Despite the additional hints from the user, \tname{ImpSynt} is also \emph{less efficient}:
as shown in the column \head{T-IS} of \autoref{fig:evaluation},
on 6 out of 9 common benchmarks, \tname{ImpSynt} takes at least an order of magnitude longer than \tool,
% on common benchmarks are as at least \xnatural times those of \tool,
even though the \tname{ImpSynt} experiments were conducted on a 10-core server with 96GB of RAM.
 
On the other hand, \tname{ImpSynt} is \emph{more general} than \tool in that it can synthesize
both recursive and looping programs.
We discuss this and other limitations of \tool in more detail in \autoref{sec:discussion}.

% \is{Here's my take.}
% %
% According to \cite{Qiu-SolarLezama:OOPSLA17} synthesis for
% \code{strl_inset_rec} with \tname{ImpSynth} takes 28 seconds on a
% 10-core, 96GB server running Ubuntu 14.04.5. With \tool, it takes 7
% seconds on 3.1 GHz Intel Core i7 Mac Book Pro with 16GB
% RAM. Similarly, synthesis of \code{insertion_sort_rec} takes
% \tname{ImpSynth} 1 minute 34 seconds, and it take 1.8 seconds (with a
% hint in the form of \code{strl_inset_rec} spec) for \tool.

\section{Limitations and Discussion}
\label{sec:discussion}

There are a number of known limitations of \logicfull and \tool
design.

\logic does no support synthesis of programs with \code{while}-loops,
% as this would require providing explicit loop invariants; instead, the
as this would require discovering loop invariants, 
which significantly increases the search space; instead, the
initial goal is considered as inductive and handled via \mkind rule.
As of now, \logic does not allow for mutually-recursive inductive
predicates. While not impossible in principle, this would require us
to explore advanced techniques for inductive proofs~\cite{Ta-al:FM16}
and also generalize the use of tags; we plan to look into this in the
future.
By limiting the number of unfoldings, via \invkind and \closer rules,
via $\depth$, we circumvent a commonly known decidability problem of
solving entailments in the presence of general inductive
predicates~\cite{Antonopoulos-al:FOSSACS04}, but this also prevents
some non-unreasonable and perfectly specifiable in \logic programs
from being synthesized, \eg, allocating a large constant-size list.

Currently, \logic and \tool cannot automatically synthesize programs
that are not structurally recursive \wrt some inductive predicate,
such as, for instance, merging sorted lists or Merge-Sort.
%
% Deriving such programs would require us to adopt a more general mechanism
% for proving termination.
%
One approach to mitigate this limitation is to prove termination
by showing that each recursive call decreases the value of a custom \emph{termination metric};
this technique is used in several automated verifiers and 
synthesizers~\cite{Leino:ICSE04,Polikarpova-al:PLDI16}.
A termination metric maps the tuple of function's arguments into
an element of a set with a pre-defined well-founded order
(usually a tuple of natural numbers),
and can be either provided by the user or inferred by the synthesizer.

Some of \logic limitations are inherent for Separation Logics in
general: SLs are known to work well with disjoint tree-like linked
structures, and programs, whose recursion scheme matches the data
definition, but not so well with ramified data structures, \eg,
graphs.
To address those, one could integrate a more powerful, \emph{ramified}
version of \framer rule~\cite{Hobor-Villard:POPL13} into \logic, but
this would likely require more hints from the user, thus, reducing the
utility of the approach.

\section{Related Work}
\label{sec:related}

%\paragraph{Program Synthesis from Formal Specifications}

There are two main directions in the area of program synthesis: 
synthesis from informal descriptions (such as examples, natural language, or hints)
\cite{Albarghouthi-al:CAV13,Polozov-Gulwani:OOPSLA15,Osera-Zdancewic:PLDI15,Feser-al:PLDI15,Smith-Albarghouthi:PLDI16,Feng-al:PLDI17,Yaghmazadeh-al:OOPSLA17,Murali-al:ICLR18}
and synthesis from formal specifications.
We will only discuss the more relevant latter direction. 
The goal of this type of program synthesis is to obtain a \emph{provably correct} program.

In this area, there is a well-known trade-off between three
dimensions: how complex the synthesized programs are, how strong the
correctness guarantees are, and how much input is required form the
user. On one end of the spectrum there are interactive
synthesizers~\cite{Itzhaky-al:OOPSLA16,Delaware-al:POPL15}, which can
be very expressive and provide strong guarantees, but the user is
expected to guide the synthesis process (although, usually, with aid
of dedicated proof \emph{tactics}).
%
% \todo{Should we say more about Fiat? We can copy from Natural
%   Synthesis paper.}
%
On the other end, there is fully automated synthesis for loop- and
recursion-free programs over simple domains, like arithmetic and
bit-vectors~\cite{Gulwani-al:PLDI11,Alur-al:TACAS17}. Our work lies in
the middle of this spectrum, where synthesis is automated but programs
are more expressive.

In the presence of loops or recursion, verifying candidates becomes
nontrivial. Synthesizers like
\tname{Sketch}~\cite{Solar-Lezama:STTT13} and
\tname{Rosette}~\cite{Torlak-Bodik:PLDI14} circumvent this problem by
resorting to bounded verification, which only provides restricted
guarantees and has scalability issues due to path explosion. In
contrast, our work relies on unbounded deductive verification.
%
% to validate candidates.

Among synthesis approaches that use unbounded verification,
synthesizers like \tname{Leon}~\cite{Kneuss-al:OOPSLA13} and
\tname{Synquid}~\cite{Polikarpova-al:PLDI16} focus on pure functional
(recursive) programs, which are an easier target for unbounded
verification. Proof-theoretic synthesis~\cite{Srivastava-al:POPL10} is
capable of synthesizing imperative programs with loops and arrays, but
no linked structures; they pioneered the idea of synthesizing
provably-correct programs by performing symbolic (SMT-based) search
over programs \emph{and their verification conditions} simultaneously.

Finally, the two pieces of prior work that are most closely related to
ours in terms of scope are
\tname{Jennisys}~\cite{Leino-Milicevic:OOPSLA2012} and Natural
Synthesis~\cite{Qiu-SolarLezama:OOPSLA17}, both of which generate
provably-correct heap-manipulating programs. Both of them are
essentially instances of proof-theoretic synthesis with a program
logic for reasoning about the heap. To that end, \tname{Jennisys} uses
the \tname{Dafny} verifier~\cite{Leino:ICSE04}, which supports
expressive yet undecidable specifications, and often requires hints
from the user, so in practice the tool doesn't scale to complex
examples (for example, none of their benchmarks performs mutation).
Natural Synthesis uses
\tname{Dryad}~\cite{Madhusudan-al:POPL12,Qiu-al:PLDI13}, a decidable
program logic for reasoning about heap-manipulating programs. The
downside of this approach is that whole-program symbolic search
doesn't scale to larger programs; to mitigate this, they require the
user to provide sketches with substantial restrictions on the
structure of the program. Our approach does not require sketches (but
on the other hand, we do not support loops).

%%%%%%%%%%%%%%%%%%%%%%%%%%%%%%%%%%%%%%%%%%%%%%%%
%%%%%%%%%%%%%%%%%%%%%%%%%%%%%%%%%%%%%%%%%%%%%%%%

The recent tool \tname{FootPatch} by~\citet{Tonder-LeGoues:ICSE18} is
very close in its methods and goals to \tool.
\tname{FootPatch} builds on
\tname{Infer}~\cite{Calcagno-Distefano:NFM11}, an open-source SL-based
static analyzer by Facebook, using it for \emph{automated program
  repair}.
It takes the intermediate assertions, provided by \tname{Infer} for
programs with bugs, such as resource and memory leaks, and null
dereferences, and constructs additive patches based on the observed
discrepancy. In this, it acts similarly to our \abductr rule.
\tname{FootPatch} does not synthesize patches that would involve
recursion or complex control flow.

%%%%%%%%%%%%%%%%%%%%%%%%%%%%%%%%%%%%%%%%%%%%%%%%
%%%%%%%%%%%%%%%%%%%%%%%%%%%%%%%%%%%%%%%%%%%%%%%%

Instead of whole-program symbolic search, like in proof-theoretic
synthesis, our work follows the tradition of \emph{deductive
  synthesis}, \ie, backtracking search in the space of program
derivation composed of synthesis rules, which gradually transform a
specification into a program. This tradition originates from the
seminal work by Manna and Waldinger~\cite{Manna-Waldinger:TOPLAS80},
and similar ideas has been used in more recent synthesis
work~\cite{Kneuss-al:OOPSLA13,Delaware-al:POPL15,Polikarpova-al:PLDI16}.
In particular, the overall structure of our synthesis algorithm
(backtracking and-or search) is similar to
\tname{Leon}~\cite{Kneuss-al:OOPSLA13}, but our rules focus on heap
manipulation, whereas their rules focus on synthesizing pure terms (so
in fact Leon can be used as a component by our algorithm). Recent work
on \tname{Optitian}~\cite{Miltner-al:POPL18} is very different in
scope---%
they synthesize bijective string lenses from regular expression
specifications and examples---%
but has interesting similarities in the technique. Their pre- and
post-condition are regexes, and their technique tries to ``align''
them by \eg, unfolding the Kleene star; this is similar to how \tool
tries to align the spatial pre- and post-condition by unfolding
predicates.

Deductive synthesis is closely related to proof search, and there has
been recent resurgence in applying proof-theoretic techniques, like
focusing, to program
synthesis~\cite{Frankle-al:POPL16,Scherer:SNAPL17}. But none of them
do it for a complex logic that can reason about stateful programs.

Despite the vast space of available tools for symbolic verification
based on Separation Logic:
\tname{Smallfoot}~\cite{Berdine-al:FMCO05},
\tname{HTT}~\cite{Nanevski-al:POPL10},
\tname{Bedrock}~\cite{Chlipala:PLDI11},
\tname{SLAyer}~\cite{Berdine-al:CAV11},
\tname{HIP/SLEEK}~\cite{Chin-al:OOPSLA11},
\tname{VeriFast}~\cite{Jacobs-al:NFM11},
\tname{Cyclist}~\cite{Rowe-Brotherston:CPP17},
\tname{SLAD}~\cite{Bouajjani-al:ATVA12},
\tname{GRASShopper}~\cite{Piskac-al:TACAS14},
\tname{Viper}~\cite{Mueller-al:VMCAI16},
to name just a few, to the best of our knowledge none of them has been
employed for deriving programs from specifications.
It is certainly our hope that this work will bring new synergies
between the research done in verification, theorem proving, and
program synthesis communities.

%%%%%%%%%%%%%%%%%%%%%%%%%%%%%%%%%%%%%%%%%%%%%%%%

For instance, in our approach to establish termination of
\logic-synthesized programs, we used techniques close in spirit to the
methods for proving total correctness in type/SL-based frameworks.
\Eg, \logic's tags might be seen as a variant of resource capacities
used in \tname{HIP/SLEEK}~\cite{Le-al:ICFEM14}.
Our use of Definition~\ref{def:gvalidity} of sized validity is similar
to the induction on the finiteness of the heap used
by~\citet{Le-Hobor:ESOP18} in their work on a logic for fractional
shares.
Tagged predicates we use are reminiscent to the $\later$-modality in
type theories for state and recursion~\cite{Appel-al:POPL07}.

\section{Conclusion}
\label{sec:conclusion}

In their seminal paper,~\citet{Manna-Waldinger:TOPLAS80} set forth an
agenda for deductive synthesis of functional programs: \emph{``theorem
  provers have been exhibiting a steady increase in their
  effectiveness, and program synthesis is one of the most natural
  application of those systems''}.

In this work, we moved this endeavour to an uncharted territory of
stateful computations.
For this, we employed a proof system which, instead of a pure type
theory~\cite{Martin-Loef:84}, is based on Separation Logic---a
\emph{Type Theory of State}~\cite{Nanevski:OPLSS16}.
Taking this vision as a guiding principle, we designed \logicfull---a
modest extension of Separation Logic, tailored for program synthesis,
and implemented a proof search algorithm for it.
In doing so, we took full advantage of the power of \emph{local
  reasoning} about state~\cite{OHearn-al:CSL01}, which resulted in a
\emph{principled} and \emph{fast} approach for synthesyzing provably
correct heap-manipulating programs.

\begin{acks}
  We wish to thank Aquinas Hobor and Reuben Rowe for their insightful
  and extremely detailed comments on presentation, formalism, and
  examples. Their technical feedback has helped immensely to bring out
  the best of this paper.
  We also wish to thank Shachar Itzhaky and Ranjit Jhala for their
  comments on the draft.
  We are very grateful to Franck Cassez for his help with configuring
  and using \tname{ScalaSMT}.
  Sergey's research was supported by a generous gift from Google.

\end{acks}

% \bibliography{references}
%%% -*-BibTeX-*-
%%% Do NOT edit. File created by BibTeX with style
%%% ACM-Reference-Format-Journals [18-Jan-2012].

%\ifext{
\appendix

% \section{Accessing \tool Implementation}
% \label{sec:link}

% The implementation of \tool, along with the benchmarks and
% instructions how to execute them, can be found in the following
% open-access GitHub repository:

% \vspace{10pt}

% \begin{center}
%   \url{https://github.com/TyGuS/suslik}
% \end{center}

% \vspace{5pt}

% \noindent
% %
% The accompanying
% \href{https://github.com/TyGuS/suslik/blob/master/README.md}{\texttt{README.md}}
% file in the project root provides detailed instructions on building
% and running the tool with the examples from the paper and our
% benchmark suite.

\section{Proofs of Formal Guarantees for the Synthesized Programs}
\label{sec:appendix}

\begin{theorem}[Theorem~\ref{thm:soundness}, Soundness of \logic]
  \label{thm:asoundness}
For any $n$, $\fctx'$, if
\begin{enumerate}[label=\emph{(\roman*)}]

\item\label{as1} $\ctx'; \env; \trans{\asn{\slP}}{\asn{\slQ}}{\prog}$ for a goal named
$f$ with formal parameters~$\env \eqdef \many{x_i}$, and 

\item\label{as2} $\ctx'$ is such that $\coh{\fctx', \ctx', n}$, and

\item\label{as3} for all $p^0(\many{e_i}), \phi; P$,
such that $\asn{\slP} = \asn{\phi; p^0(\many{e_i}) \sep P}$,
taking $\fun \eqdef f(\many{x_i}) : \asn{\phi; p^1(\many{e_i}) \sep
  \bump{P}{}}\asn{\bump{\slQ}{}}$,\\
$\ctx', \fun; \env; \asn{\slP}~\prog~\asn{\slQ}$ is $n'$-valid for
all $n' < n$ wrt.
$\fctx \eqdef \fctx', f~(\many{t_i~x_i})~\set{~\prog~}$,
\end{enumerate}
then $\ctx'; \env; \asn{\slP}~\prog~\asn{\slQ}$ is $n$-valid \wrt
$\fctx$.
\end{theorem}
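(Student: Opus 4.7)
The plan is to carry out the nested induction exactly as the theorem hints: an outer induction on the heap-size bound $n$, and an inner induction on the structure of the derivation $\ctx'; \env; \trans{\asn{\slP}}{\asn{\slQ}}{\prog}$. The outer induction supplies the crucial side-fact that every function symbol in a coherent dictionary/context pair enjoys $n'$-validity for $n' < n$, which is precisely what we will need at every use of \applyr, since recursive calls are constrained (by coherence) to strictly smaller sub-heaps and auxiliary calls to sub-heaps of size at most~$n$. The base case $n = 0$ is essentially vacuous: the only qualifying heaps are empty, the premise of \ref{as3} trivializes, and a straightforward sub-induction on the derivation shows the synthesized program either skips, errors, or manipulates points-to heaplets that cannot be satisfied on an empty heap.

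For the inductive step, I would dispatch the rules of~\autoref{fig:all} by category. The terminal rules \empr and \incons follow immediately from the definition of $\satsl{}{}$. The normalization rules (\sleft, \starpar, \nilvalr, \readr) and pure-reasoning rules (\hyp, \sright, \pickr) each either massage the assertions into an equi-satisfiable form or emit a single operational step whose effect on the store/heap composes cleanly with the inner IH on the sub-derivation. The flat heap-manipulating rules \writer, \allocr, and \freer each reduce to a direct operational-semantics calculation on the fragment of the heap targeted by the emitted command, after which the inner IH takes over.

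The delicate cases are \framer/\hunir, the unfolding rules \invkind and \closer, and the call machinery \abductr, \applyr, and \mkind. For \framer, I plan to decompose any satisfying heap $h = h_P \union h_R$, invoke the inner IH on $h_P$, and reassemble the result using the fact that local variables in the \logic language are immutable and all modified pointers live in $P$ — so $R$ is preserved verbatim. For \invkind the sub-induction proceeds by case analysis on the clause guards $[\subst]\sel_j$, using the semantics of $\satsl{}{}$ for predicate instances; \closer is the dual, picking the unique clause that matches the resulting heap fragment. The interesting interplay is between \mkind and \applyr: \mkind installs the recursive hypothesis $\fun$ into the context, and premise~\ref{as3} of the theorem supplies us (\emph{as an assumption}) with $n'$-validity of the body for $n' < n$ with respect to the extended dictionary. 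Then \applyr merely has to justify invoking this hypothesis by demonstrating that the sub-heap it consumes has size strictly less than that of the ambient heap.

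The main obstacle, as anticipated, is precisely this size-decreasing argument at \applyr. Concretely, I will need an auxiliary lemma stating that if a tagged instance $p^{\lev}(\many{e_i})$ with $\lev \geq 1$ denotes a sub-heap $h_p$, then $|h_p| < |h|$ for the ambient heap $h$ at the site of the enclosing \mkind. This should follow from the paper's standing assumption that user-defined predicates are \emph{well-founded} (every recursive clause contains a points-to heaplet), together with the observation that the tag $\lev \geq 1$ required by \applyr's tag-aware unification $\eqTags$ can only arise after at least one \invkind unfolding on the ambient heap — and such an unfolding necessarily consumes a points-to cell. Once this descent lemma is in place, the outer IH discharges every recursive invocation uniformly; coherence of the extended dictionary $\fctx$ is then re-established by Definition~\ref{def:coh}'s recursive clause, and the induction closes. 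The analogous but simpler argument for \abductr follows because the rule performs no call of its own — it merely prepares the footprint before a subsequent \applyr step, and so inherits its size decrease.
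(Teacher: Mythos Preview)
Your proposal is correct and follows essentially the same approach as the paper's own proof: an outer well-founded induction on the heap-size bound $n$ and an inner structural induction on the \logic derivation, with the rules dispatched by category and the crux being the size-decrease at \applyr, justified via the tag discipline and well-foundedness of predicates (a 1-tagged instance can only appear after an \invkind step that peels off at least one points-to cell). The paper treats this size-decrease argument somewhat more informally than your proposed auxiliary lemma, and your remark that \abductr ``inherits its size decrease'' is slightly off --- \abductr needs no size argument at all, only sequential composition and framing on its two subgoals --- but this is a minor wording issue, not a gap.
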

\begin{proof}[Proof]

  Since $\prog$ has been produced by an \logic derivation, it is
  (\code{;})-associated to the right. 

  The proof is by the outer-level (well-founded) induction on the
  ``footprint size'' $n$. For each fixed $n$, the proof is by an
  inner-level induction on the structure of derivation
  $\ctx'; \env; \trans{\asn{\slP}}{\asn{\slQ}}{\prog}$, with symbolic
  one-step execution (via~$\Opsem$). In that induction step we rely on
  the fact that the residual ``suffix'' program is already valid, so
  we only need to show that applying a single rule preserves the
  validity, while ``appeding'' a command $\prog$ in front of an
  already synthesized program.
    
  The inner proof routine is straightforward for or (sub-)goals, where
  the spatial part of $\slP$ contains no predicate instances.

  All \emph{non-operational} rules (\framer, \pickr, \sleft, \empr,
  \nilvalr, \etc) either instantiate existentials in the post, weaken the pure
  obligations, or reduce the footprint, which is sound due to locality
  of reading and writing~\cite{OHearn-al:CSL01} and immutability of
  local variables; the soundness of \emph{operational} (\readr, \writer,
  \etc) rules is due to the soundness of SL. Specifically:

\begin{itemize}
  
\item \textbf{Case:}~\empr. The result follows immediately from
  Definition~\ref{def:gvalidity} of validity for any~$n$.
  
\item \textbf{Case:}~\sleft. The result follows immediately from
  Definition~\ref{def:gvalidity} of validity for any~$n$.
  
\item \textbf{Case:}~\nilvalr. The non-nullness is encoded by the
  spatial precondition part~\cite{Reynolds:LICS02}, so the
  transformation is sound. The result follows immediately from
  Definition~\ref{def:gvalidity} of validity for any heap size~$n$.
  
\item \textbf{Case:}~\starpar. The inequality of the two locations is
  encoded by the spatial part of the precondition, so the
  transformation is sound. The result follows immediately from
  Definition~\ref{def:gvalidity} of validity for any heap size~$n$.
  
\item \textbf{Case:}~\incons. The result follows immediately from
  Definition~\ref{def:gvalidity} of validity for any~$n$, as no
  physical pre-heap does satisfy $\phi$, so it is safe to emit
  \code{error}.
  
\item \textbf{Case:}~\framer. The \framer rule in \logic is
  different from its counterpart in vanilla SL in that it does not
  require a side condition saying that $R$ must not contain program
  variables that are modified by the program to be synthesized.
  Due to locality of read/write/malloc/free operations, they could
  not depend on $R$, as already synthesized in the residual program,
  hence adding $R$ to post will not break validity, given that $R$
  is also added to the precondition.
  
\item \textbf{Case:}~\readr. The synthesized read is safe, as the
  corresponding pointer is present in the precondition. Furthermore,
  it creates a program-level variable, which substantiates the
  environment variable $y$, used for the synthesis in the subgoal,
  hence the new program is valid according to
  Definition~\ref{def:gvalidity}.
  
\item \textbf{Case:}~\allocr. The subgoal has been synthesized in
  the larger footprint, afforded by allocation. The soundness of the
  rule follows from the axiomatization of
  \code{malloc}~\cite{OHearn-al:CSL01}.
  
\item \textbf{Case:}~\freer. Since the subgoal's result is valid in
  a smaller pre-heap footprint and does not access any variables
  from $R$, it is safe to deallocate it, hence the resulting program
  is valid.
  
\item \textbf{Case:}~\abductr. The validity is by the properties of
  the sequential composition $\prog_1; \prog_2$ and the soundness of
  the frame rule of SL, applied as follows to the first sub-goal:
  
  \[
    {{
        \frac{
          \ctx; \env; \trans{\asn{\phi; F}}{\asn{\phi; F'}}{\prog_1}  
        }
        {
          \ctx; \env; \trans{\asn{\phi; F \sep P \sep R}}{\asn{\phi;
              F' \sep P \sep R}}{\prog_1}  
        }
      }}
  \]
  \vspace{5pt}
  
\item \textbf{Case:}~\hunir. Since the unification only instantiates
  existentials, making the goal more specific, and unified variables
  will not appear in the program freely, the validity is by soundness
  of the SL's Auxiliary Variable Elimination
  rule~\cite{OHearn-al:CSL01}.
  
\item \textbf{Case:}~\pickr. By the soundness of substitution and
  auxiliary variable elimination.

\item \textbf{Case:}~\hyp. By the soundness of substitution and
  auxiliary variable elimination.

\item \textbf{Case:}~\sright. By the soundness of substitution and
  auxiliary variable elimination.
  
\item \textbf{Case:}~\closer. Validity follows from the fact that the
  sub-goal is valid in the presence of the elaborated postcondition
  $\asn{\psi \wedge [\subst]\sel_k \wedge [\subst]\chi_k; Q \sep R'}$.
  By the definition of $\satsl{}{}$, this heap satisfies the
  corresponding predicate definition in the initial goal's
  postcondition.
  
\end{itemize}

\noindent
In the case of predicate instances in the precondition, we take
advantage of the tag machinery and the
assumptions~\emph{\ref{as2}}--\emph{\ref{as3}} of the Theorem's
statement.

\begin{itemize}
  
\item \textbf{Case:}~\mkind. 
  The assumption~\emph{\ref{as2}} of the inductive hypothesis is
  satisfied via the initial assumption~\emph{\ref{as3}} and 1-tagging
  of the recursive hypothesis's precondition, so it will only be
  applicable to heaps of a size \emph{strictly smaller} than $n$
  (follows from \invkind and \applyr).
  
\item \textbf{Case:}~\invkind. The validity is due to the fact that
  the emitted \code{if}-\code{else} will combine the
  sub-programs, that are already valid \wrt their pres/posts. 
  The access to those elaborated pres/posts, ensuring the
  satisfiability of the elaborated precondition in each branch is due
  to the conditionals $\sel_j$, which are the same as the corresponding
  guard of the predicate instance being opened.

\item \textbf{Case:}~\applyr. This is the most interesting case. 
  Validity of a function body's substitution is via~\emph{\ref{as2}}.
  Notice that \applyr cannot be the first rule applied in the
  derivation, as then all predicate instances are 0-tagged, hence it
  should follow \invkind, which unfolds the instances, exposing
  smaller sub-heaps captured by their clauses (due to well-foundedness
  of the predicates), which are now amenable for using \applyr on
  them.
 
  The remaining reasoning differs depending on whether the function
  $f$ being called is a user-provided auxiliary one, or a recursive
  call to the top-level goal.
  If a function being applied is a user-provided
  specification (\ie, not a recursive self), then its validity is
  asserted by the second clause of Definition~\ref{def:coh}, otherwise
  the call is recursive, and will only take place on a smaller heap,
  satisfying the third option of Definition~\ref{def:coh}.
  
  \end{itemize}

\end{proof}

%}{}

\end{document}